\DeclareMathOperator{\C}{\mathcal{C}}
\DeclareMathOperator{\supp}{supp}
\newtheorem{theorem}{Theorem}[section]
\newtheorem{problem}[theorem]{Problem}
\newtheorem{lemma}[theorem]{Lemma}
\newtheorem{definition}[theorem]{Definition}
\newtheorem{proposition}[theorem]{Proposition}
\newtheorem{remark}[theorem]{Remark}
\newtheorem{construction}[theorem]{Construction}
\newcommand{\fqn}{\mathbb{F}_{q^n}}
\newcommand{\F}{{\mathbb F}}
\newcommand{\GL}{\hbox{{\rm GL}}}
\newcommand{\fq}{{\mathbb F}_{q}}
\newcommand{\la}{\langle}
\newcommand{\ra}{\rangle}
\newcommand{\PG}{\mathrm{PG}}
\title{Maximum weight codewords of a linear rank metric code}
\author{Olga Polverino, Paolo Santonastaso and Ferdinando Zullo}
\date{ }
\begin{document}
\maketitle

\begin{abstract}
Let $\mathcal{C}\subseteq \mathbb{F}_{q^m}^n$ be an $\mathbb{F}_{q^m}$-linear non-degenerate rank metric code with dimension $k$.
In this paper we investigate the problem of determining the number $M(\C)$ of codewords in $\C$ with maximum weight, that is $\min\{m,n\}$, and to characterize those with the maximum and the minimum values of $M(\C)$.
\end{abstract}

\noindent\textbf{MSC2020:}{ 94B05; 94B65; 94B27 }\\
\textbf{Keywords:}{ rank metric codes; weight distribution; $q$-system}

\section{Introduction}

Rank metric codes gained a lot of attention in the last decades due to their numerous applications and their connections with interesting mathematical objects. As a matter of fact, Silva, K\"otter and Kschischang in  \cite{silva2008rank} proposed the use of rank metric codes in linear random network coding. However, the origin of rank metric codes dates back to Delsarte \cite{de78} in  1978, some years later they were rediscovered by Gabidulin in \cite{ga85a} and Roth in \cite{roth1991maximum}.
Since then applications in criss-cross error corrections, cryptography and network coding arose, see e.g. \cite{bartz2022rank}.
Rank metric codes are also related to well-studied algebraic and combinatorial objects, such as semifields \cite{sheekey2016new}, linear sets in finite geometry \cite{polverino2020connections}, tensorial algebras \cite{byrne2019tensor},  skew algebras \cite{ACLN20+,elmaazouz2021}, $q$-analog of matroids \cite{gorla2020rank} and many more, see \cite{gorla2021rank} and \cite{sheekeysurvey}. 

In this paper we will be mostly concentrated in the case of linear codes, that is $\F_{q^m}$-subspaces of $\F_{q^m}^n$. Here, we equip $\F_{q^m}^n$ with the rank distance which is defined as follows: for any $u=(u_1,\ldots,u_n),v=(v_1,\ldots,v_n) \in \F_{q^m}^n$ then
\[d(u,v)=\dim_{\fq}(\langle u_1-v_1,\ldots,u_n-v_n \rangle_{\fq}).\]
Our aim is to give information on the weight distribution of $\mathbb{F}_{q^m}$-linear non-degenerate rank metric codes, that is $\F_{q^m}$-linear rank metric codes which cannot be embedded in another smaller space preserving its weight distribution.
For some classes of rank metric codes, the weight distribution is well-know, such as for MRD codes or classes of few weight codes, but in general very few is known.
In the Hamming metric, Ball and Blokhuis in \cite{ball2013bound} studied conditions on the code that guarantee that it contains a codeword with weight equals to the length of the code, by using the geometry of $t$-fold blocking sets in affine spaces.
For non-degenerate rank metric codes, in \cite[Proposition 3.11]{alfarano2022linear} the authors proved that there always exists a codeword of maximum weight $\min\{m,n\}$, allowing them to obtain a concise proof of the characterization of the optimal $\F_{q^m}$-linear anticodes originally proved in \cite[Theorem 18]{ravagnani2016generalized}.
The maximum weight codewords seems also interesting in connection with the rank metric version of the Critical problem by Crapo and Rota (cf. \cite{semgiani,gianibyrne} and see also \cite{gruica2022rank}), and due to the connection with $q$-polymatroids, see \cite{gluesing2022q}.

Let $\C$ be a $\F_{q^m}$-linear non-degenerate rank metric code in $\F_{q^m}^n$
of dimension $k$ and define $M(\C)$ as the number of codewords in $\C$ with weight $\min\{m,n\}$.
In this paper we investigate the following two problems:

\begin{problem}\label{probl1}
To determine upper and lower bounds on $M(\C)$.
\end{problem}

\begin{problem}\label{probl1}
To characterize the extremal cases in the obtained bounds on $M(\C)$.
\end{problem}

The main tools used in this paper are from combinatorics: we use the projective version of systems, namely linear sets, which are point sets in projective spaces. 
Then we use old and new bounds regarding the size and the weight distribution of linear sets, to obtain the desired bounds. 
As a consequence, once five values among $m,n,k,q, M(\C)$ and the second maximum weight are known, then we are able to determine the remaining one.
Then we provide examples and characterization results for the equality case in the obtained bounds, by making use of duality theory of linear sets, new and old constructions.
To give an idea of our results, we were able to prove that for the $2$-dimensional case, $M(\C)$ is minimum if and only if $\C$ or its geometric dual is an MRD code, which we prove is extendable to higher dimension only in the case in which the length of the code is $mk/2$.

The paper is structured as follows. In Section \ref{sec:prel}, we describe definitions and results on rank metric codes and linear sets needed for our results. Section \ref{sec:bounds} deals with upper and lower bounds on $M(\C)$, where the discussion is divided in four parts, according to the dimension of the code and the relation between $n$ and $m$. In Section \ref{sec:lowerbounds} we analyze the case of equality in the lower bounds by detecting the geometry of these codes, which are strongly related to scattered linear sets and hence (in some cases) to MRD codes. Section \ref{sec:upperbounds} is devoted to the case of equality in the upper bounds: the geometry in this case is either related to canonical subgeometries or to linear sets with minimum size.
Finally, we conclude the paper by listing some open problems.

\section{Preliminaries}\label{sec:prel}

We start fixing the following notation. Let $p$ be a prime and let $h$ be a positive integer. We fix $q=p^h$ and denote by $\fq$ the finite field with $q$ elements. Moreover, if $m$ is a positive integer then we may consider the extension field $\F_{q^m}$ of degree $m$ over $\fq$. 
Recall that for the extension $\F_{q^m}/\fq$, the \textbf{trace} of an element $\alpha \in \F_{q^m}$ is defined as
$$ \mathrm{Tr}_{q^m/q}(\alpha):= \sum_{i=0}^{m-1}\alpha^{q^i}.$$

\noindent We list some more notation which will be repeatedly used in this paper.

\begin{itemize}
    \item $V=V(k,q)$ denotes a $k$-dimensional $\fq$-vector space;
    \item $\langle U \rangle_{\F_{q}}$ denotes the $\fq$-span of $U$, with $U$ a subset of a vector space $V$;
    \item $\PG(k-1,q)$ denotes the projective Desarguesian space of dimension $k-1$ and order $q$;
    \item $\PG(V,\fq)$, with $V$ an $\fq$-vector space, denotes the projective space obtained by $V$;
    \item $\langle S \rangle$ denotes the span of the points in $S$, with $S$ a subset of $\PG(k-1,q)$;
    \item $\mathrm{GL}(k,q)$ denotes the general linear group;
    \item $\mathrm{\Gamma L}(k,q)$ denotes the general semilinear group;
    \item  $\mathrm{colsp}(A)$ is the $\fq$-span of the columns of a matrix $A$.
\end{itemize}

\subsection{Rank metric codes and $q$-systems}

\subsubsection{Generalities on rank metric codes}

The rank (weight) $w(v)$ of a vector $v=(v_1,\ldots,v_n) \in \F_{q^m}^n$ is the dimension of the vector space generated over $\F_q$ by its entries, i.e, $w(v)=\dim_{\fq} (\langle v_1,\ldots, v_n\rangle_{\fq})$. 

A \textbf{(linear vector) rank metric code} $\C $ is an $\F_{q^m}$-subspace of $\F_{q^m}^n$ endowed with the rank distance defined as
\[
d(x,y)=w(x-y),
\]
where $x, y \in \F_{q^m}^n$. 

Let $\C \subseteq \F_{q^m}^n$ be an $\F_{q^m}$-linear rank metric code. We will write that $\C$ is an $[n,k,d]_{q^m/q}$ code (or $[n,k]_{q^m/q}$ code) if $k$ is the $\F_{q^m}$-dimension of $\C$ and $d$ is its minimum distance, that is 
\[
d=\min\{d(x,y) \colon x, y \in \C, x \neq y  \}.
\]
Moreover, we denote by $A_i(\C)$, or simply $A_i$, the number of codewords in $\C$ of weight $i \in \{0,\ldots,n\}$ and $(A_0,\ldots,A_n)$ is called the \textbf{weight distribution} of $\C$.

It is possible to prove a Singleton-like bound for a rank metric code, which for the case of $\F_{q^m}$-linear codes reads as follows.

\begin{theorem} \cite{de78}\label{th:singletonrank}
    Let $\C \subseteq \F_{q^m}^n$ be an $[n,k,d]_{q^m/q}$ code.
Then 
\begin{equation}\label{eq:boundgen}
mk \leq \max\{m,n\}(\min\{n,m\}-d+1).\end{equation}
\end{theorem}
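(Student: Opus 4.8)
The plan is to prove the bound separately in the two cases $n\le m$ and $m<n$, in each case producing an $\F_q$-linear injection of $\C$ into an ambient space of the right dimension by "forgetting" $d-1$ of the appropriate slots (coordinates in the first case, matrix rows in the second). The underlying elementary fact used throughout is that a vector of $\F_{q^m}^n$ supported on at most $d-1$ coordinates has rank weight at most $d-1$, and likewise a matrix over $\fq$ supported on at most $d-1$ rows has $\fq$-rank at most $d-1$.

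First I would treat the case $n\le m$, so that $\max\{m,n\}=m$ and $\min\{n,m\}=n$, and it suffices to show $k\le n-d+1$. Consider the projection $\pi\colon \F_{q^m}^n\to \F_{q^m}^{n-d+1}$ deleting the last $d-1$ coordinates, and restrict it to $\C$. If $x,y\in\C$ satisfy $\pi(x)=\pi(y)$, then $x-y$ has at most $d-1$ nonzero entries, so $w(x-y)=\dim_{\fq}\langle x_1-y_1,\ldots,x_n-y_n\rangle_{\fq}\le d-1<d$; since $d$ is the minimum distance of $\C$, this forces $x=y$. Hence $\pi|_{\C}$ is injective and $k=\dim_{\F_{q^m}}\C=\dim_{\F_{q^m}}\pi(\C)\le n-d+1$. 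Multiplying by $m$ yields \eqref{eq:boundgen}.

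Next I would treat the case $m<n$, so that $\max\{m,n\}=n$ and $\min\{n,m\}=m$, and the claim becomes $mk\le n(m-d+1)$. Here I pass to the matrix model: fixing an $\fq$-basis of $\F_{q^m}$ identifies $\F_{q^m}^n$ with $\fq^{m\times n}$, sending a vector to the matrix whose $j$-th column is the coordinate vector of its $j$-th entry; under this identification the rank weight of a vector equals the $\fq$-rank of the associated matrix, since the column span of the matrix is exactly the $\fq$-span of the vector's entries. Thus $\C$ corresponds to an $\fq$-subspace of $\fq^{m\times n}$ of $\fq$-dimension $mk$ whose nonzero elements all have rank at least $d$. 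Now apply the $\fq$-linear map $\fq^{m\times n}\to\fq^{(m-d+1)\times n}$ deleting the last $d-1$ rows: a codeword in its kernel is supported on those $d-1$ rows, hence has rank at most $d-1<d$, hence is zero. So this map is injective on $\C$, giving $mk\le (m-d+1)\,n$, which is \eqref{eq:boundgen}.

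The proof is routine; the only step demanding a little attention is the passage to the matrix representation in the case $m<n$ — specifically, verifying that the rank weight of a vector in $\F_{q^m}^n$ coincides with the $\fq$-rank of its matrix and that deleting rows (respectively coordinates) drops the rank below $d$ on the kernel. No genuine obstacle is expected.
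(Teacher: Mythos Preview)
Your proof is correct. Note, however, that the paper does not actually prove this theorem: it is quoted from Delsarte \cite{de78} and stated without proof, so there is no argument in the paper to compare against. Your approach is the standard one --- puncturing coordinates when $n\le m$ and deleting rows in the matrix model when $m<n$ --- and all the steps check out.
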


A $[n,k,d]_{q^m/q}$ code is called \textbf{Maximum Rank Distance code} (or shortly \textbf{MRD code}) if its parameters attains the bound \eqref{eq:boundgen}.

Recall that the $q$-binomial coefficient of two integers $s$ and $t$ is
\[  {s \brack t}_q=\left\{ \begin{array}{lll} 0 & \text{if}\,\, s<0,\,\,\text{or}\,\,t<0,\,\, \text{or}\,\, t>s,\\ 
1 & \text{if}\,\, t=0\,\, \text{and}\,\, s\geq 0,\\
\displaystyle\prod_{i=1}^t \frac{q^{s-i+1}-1}{q^i-1} & \text{otherwise}. \end{array} \right.  \]

This number counts the number of $t$-dimensional $\fq$-subspaces of an $s$-dimensional $\fq$-vector space.

Delsarte in \cite{de78} and later Gabidulin in \cite{ga85a} determined MacWilliams identities for rank metric codes which yield in case of an MRD code to the determination of the weight distribution of MRD codes, see also \cite{ravagnani2016rank}.

\begin{theorem}\label{th:weightdistribution}
Let $\mathcal{C}$ be an MRD code with parameters $[n,k,d]_{q^m/q}$. Let $m'=\min\{m,n\}$ and $n'=\max\{m,n\}$.
Then
\[ A_{d+\ell}={m'\brack d+\ell}_q \sum_{t=0}^\ell (-1)^{t-\ell}{\ell+d \brack \ell-t}_q q^{\binom{\ell-t}{2}}(q^{n'(t+1)}-1) \]
for any $\ell \in \{0,1,\ldots,n'-d\}$.
\end{theorem}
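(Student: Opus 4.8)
The plan is to bypass the MacWilliams transform and compute the $A_i$ directly by a Möbius inversion over the lattice of $\F_q$-subspaces, using a surjectivity property of MRD codes. First I would pass to matrices: fixing an $\F_q$-basis of $\F_{q^m}$, identify $\mathcal{C}$ with an $\F_q$-linear subspace of $\F_q^{m\times n}$, and—after transposing every codeword if necessary (this is needed precisely when $n>m$, and it preserves the rank distribution, the minimum distance and the $\F_q$-dimension)—assume the codewords are $n'\times m'$ matrices over $\F_q$, so that row spaces lie in $\F_q^{m'}$. Since $\mathcal{C}$ is $\F_{q^m}$-linear of dimension $k$ we have $\dim_{\F_q}\mathcal{C}=mk$, and the MRD hypothesis together with Theorem~\ref{th:singletonrank} gives $mk=n'(m'-d+1)=n'k'$, where I set $k':=m'-d+1$.

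The key lemma I would prove is: for every $\F_q$-subspace $U\subseteq\F_q^{m'}$ with $\dim_{\F_q}U=k'$, the restriction map $\rho_U\colon\mathcal{C}\to\mathrm{Hom}_{\F_q}(U,\F_q^{n'})$, $M\mapsto M|_U$, is an $\F_q$-isomorphism. It is injective, because $M|_U=0$ means $U\subseteq\ker M$, so $\rk M\le m'-k'=d-1$ and hence $M=0$ by the minimum distance; and it is then bijective since $\dim_{\F_q}\mathrm{Hom}_{\F_q}(U,\F_q^{n'})=n'k'=\dim_{\F_q}\mathcal{C}$. From this I would deduce that for an $\F_q$-subspace $W\subseteq\F_q^{m'}$ of dimension $w$ the number $f(w):=\#\{M\in\mathcal{C}\colon \mathrm{rowsp}(M)\subseteq W\}$ depends only on $w$, and equals $1$ for $w\le d-1$ and $q^{n'(w-d+1)}$ for $w\ge d-1$ (the two formulas agreeing at $w=d-1$): indeed $\mathrm{rowsp}(M)\subseteq W$ is equivalent to $M$ vanishing on $W^\perp$; for $w\le d-2$ this forces $M=0$, while for $w\ge d-1$ one has $\dim W^\perp=m'-w\le k'$, so extending $W^\perp$ to a $k'$-dimensional $U$ and composing $\rho_U$ with the restriction $\mathrm{Hom}_{\F_q}(U,\F_q^{n'})\twoheadrightarrow\mathrm{Hom}_{\F_q}(W^\perp,\F_q^{n'})$ exhibits $\{M\colon M|_{W^\perp}=0\}$ as a subspace of $\F_q$-dimension $n'k'-n'(m'-w)=n'(w-d+1)$.

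Now let $g(w)$ be the number of codewords whose row space is a fixed $w$-dimensional subspace of $\F_q^{m'}$; counting codewords according to their row space gives $f(w)=\sum_{w'=0}^{w}{w\brack w'}_q g(w')$, and Möbius inversion in the subspace lattice (Möbius function $(-1)^{j}q^{\binom j2}$ on an interval of colength $j$) yields $g(w)=\sum_{w'=0}^{w}(-1)^{w-w'}q^{\binom{w-w'}{2}}{w\brack w'}_q f(w')$; in particular $g$ is indeed a function of $w$ only. Since every rank-$r$ codeword has a unique row space, $A_{d+\ell}={m'\brack d+\ell}_q g(d+\ell)$, which is automatically $0$ when $d+\ell>m'$. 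Writing $f(w')=1+(f(w')-1)$ splits $g(d+\ell)$ into two sums: in the first, $\sum_{w'=0}^{d+\ell}(-1)^{d+\ell-w'}q^{\binom{d+\ell-w'}{2}}{d+\ell\brack w'}_q=0$, because $d+\ell\ge 1$ and the $q$-binomial theorem gives $\sum_{j=0}^{N}(-1)^{j}q^{\binom j2}{N\brack j}_q=\prod_{i=0}^{N-1}(1-q^i)=0$ for $N\ge 1$; in the second, $f(w')-1=q^{n'(w'-d+1)}-1$ vanishes for $w'\le d-1$, so it effectively runs over $d\le w'\le d+\ell$, and the substitution $w'=d+t$ (with ${d+\ell\brack d+t}_q={d+\ell\brack \ell-t}_q$) turns it into $\sum_{t=0}^{\ell}(-1)^{t-\ell}{\ell+d\brack \ell-t}_q q^{\binom{\ell-t}{2}}(q^{n'(t+1)}-1)$, which is exactly the claimed expression.

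The main obstacle is the key lemma and the evaluation of $f(w)$: one must keep the two field structures straight—the crucial input $\dim_{\F_q}\mathcal{C}=mk=n'(m'-d+1)$ is precisely where the MRD hypothesis (via Theorem~\ref{th:singletonrank}) enters—and must use the surjection onto $\mathrm{Hom}_{\F_q}(W^\perp,\F_q^{n'})$ only when $\dim W^\perp\le k'$, i.e.\ when $w\ge d-1$, which is exactly why the transpose reduction to $n'\ge m'$ is made at the start. Once that structural step is in place, the Möbius inversion, the identity $A_{d+\ell}={m'\brack d+\ell}_q g(d+\ell)$, and the $q$-binomial cancellation are routine; the only delicate point there is the bookkeeping needed to land exactly on the stated shape (matching the sign $(-1)^{t-\ell}$, the exponent $\binom{\ell-t}{2}$, and the index shift between $\ell-t$ and $d+t$).
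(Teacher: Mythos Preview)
Your proof is correct. The paper does not actually prove this theorem: it is quoted as a known result, attributed to Delsarte and Gabidulin, with the remark that it follows from the MacWilliams identities for rank metric codes. Your argument deliberately avoids the MacWilliams transform and instead proves the formula by a direct count: the structural lemma that for any $(m'-d+1)$-dimensional $U\subseteq\F_q^{m'}$ the restriction $\mathcal{C}\to\mathrm{Hom}_{\F_q}(U,\F_q^{n'})$ is an $\F_q$-isomorphism is exactly the place where the MRD equality $mk=n'(m'-d+1)$ is used, and everything afterwards (the evaluation of $f(w)$, M\"obius inversion in the subspace lattice, and the $q$-binomial cancellation $\sum_j(-1)^jq^{\binom j2}{N\brack j}_q=0$ for $N\ge1$) is routine. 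This is essentially Delsarte's original combinatorial derivation. Compared with the MacWilliams route referenced in the paper, your approach is more self-contained and makes transparent that only $\F_q$-linearity and the dimension equality are needed, while the MacWilliams approach situates the result within a broader duality theory but requires setting up that machinery first. Your handling of the range $\ell>m'-d$ via ${m'\brack d+\ell}_q=0$ and of the case $n>m$ via transposition is also fine.
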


Another important notion is the rank support of a codeword.
Let $\Gamma=(\gamma_1,\ldots,\gamma_m)$ be an ordered $\fq$-basis of $\F_{q^m}$. For any vector $x=(x_1, \ldots ,x_n) \in \F_{q^m}^n$ define the matrix $\Gamma(x)\in \F_{q}^{n \times m}$, where
$$x_{i} = \sum_{j=1}^m \Gamma (x)_{ij}\gamma_j, \qquad \mbox{ for all } i \in \{1,\ldots,n\},$$
that is $\Gamma(x)$ is the matrix expansion of the vector $x$ with respect to the $\Gamma$ of $\F_{q^m}$ and this clearly preserves its rank, i.e. $w(x)=\mathrm{rk}(\Gamma(x))$.

\begin{definition}
Let $x=(x_1,\dots, x_n)\in\F_{q^m}^{n}$ and  $\Gamma=(\gamma_1,\ldots,\gamma_m)$ an order $\fq$-basis of $\F_{q^m}$.
The \textbf{rank support} of $x$ is defined as the column span of $\Gamma(x)$:
$$\supp(x)=\mathrm{colsp}(\Gamma(x)) \subseteq \fq^n.$$
\end{definition}

As proved in \cite[Proposition 2.1]{alfarano2022linear}, the support does not depend on the choice of $\Gamma$ and we can talk about the support of a vector without mentioning $\Gamma$.

For more details we refer to \cite{martinez2019theory}.

\subsubsection{Geometry of rank metric codes}\label{sec:georankmetric}

Now, we recall the definition of equivalence between rank metric codes in $\F_{q^m}^n$. An $\F_{q^m}$-linear isometry $\phi$ of $\F_{q^m}^n$ is an $\F_{q^m}$-linear map of $\F_{q^m}^{n}$ that preserves the distance, i.e. $w(x)=w(\phi(x))$, for every $x \in  \F_{q^m}^{n}$, or equivalently $d(x,y)=d(\phi(x),\phi(y))$, for every $x,y \in  \F_{q^m}^{n}$.
It has been proved that the group of $\F_{q^m}$-linear isometries of $\F_{q^m}^n$ equipped with rank distance is generated by the (nonzero) scalar
multiplications of $\F_{q^m}$ and the linear group $\mathrm{GL}(n,\F_q)$, see e.g. \cite{berger2003isometries}. Therefore, we say that two rank metric codes $\C,\C' \subseteq \F_{q^m}^n$ are \textbf{(linearly) equivalent} if there exists an isometry $\phi$ such that $\phi(\C)=\C'$.
Clearly, when studying equivalence of $[n, k]_{q^m/q}$ codes the
action of $\F_{q^m}^*$ is trivial. This means that two $[n,k]_{q^m/q}$ codes $\C$ and $\C'$ are equivalent if and only if
there exists $A \in \mathrm{GL}(n,q)$ such that
$\C'=\C A=\{vA : v \in \C\}$. 
Most of the codes we will consider are \emph{non-degenerate}.

 \begin{definition}
An $[n,k]_{q^m/q}$ rank metric code $\C$ is said to be \textbf{non-degenerate} if the columns of any generator matrix of $\C$ are $\fq$-linearly independent.
We denote the set of equivalence classes of $[n,k,d]_{q^m/q}$ non-degenerate rank metric codes by $\mathfrak{C}[n,k,d]_{q^m/q}$.
\end{definition}

The geometric counterpart of rank metric are the systems. 

 \begin{definition}
An $[n,k,d]_{q^m/q}$ \textbf{system} $U$ is an $\F_q$-subspace of $\F_{q^m}^k$ of dimension $n$, such that
$ \langle U \rangle_{\F_{q^m}}=\F_{q^m}^k$ and

$$ d=n-\max\left\{\dim_{\F_q}(U\cap H) \mid H \textnormal{ is an $\F_{q^m}$-hyperplane of }\F_{q^m}^k\right\}.$$

Moreover, two $[n,k,d]_{q^m/q}$ systems $U$ and $U'$ are \textbf{equivalent} if there exists an $\F_{q^m}$-isomorphism $\varphi\in\GL(k,\F_{q^m})$ such that
$$ \varphi(U) = U'.$$
We denote the set of equivalence classes of $[n,k,d]_{q^m/q}$ systems by $\mathfrak{U}[n,k,d]_{q^m/q}$.
\end{definition}

The following result allows us to establihs a correspondence between rank metric codes and systems.

\begin{theorem}\cite{Randrianarisoa2020ageometric} \label{th:connection}
Let $\C$ be a non-degenerate $[n,k,d]_{q^m/q}$ rank metric code and let $G$ be a generator matrix.
Let $U \subseteq \F_{q^m}^k$ be the $\F_q$-span of the columns of $G$.
The rank weight of an element $x G \in \C$, with $x=(x_1,\ldots,x_k) \in \F_{q^m}^k$ is
\begin{equation}\label{eq:relweight}
w(x G) = n - \dim_{\fq}(U \cap x^{\perp}),\end{equation}
where $x^{\perp}=\{y=(y_1,\ldots,y_k) \in \F_{q^m}^k \colon \sum_{i=1}^k x_iy_i =0\}.$ In particular,
\begin{equation} \label{eq:distancedesign}
d=n - \max\left\{ \dim_{\fq}(U \cap H)  \colon H\mbox{ is an } \F_{q^m}\mbox{-hyperplane of }\F_{q^m}^k  \right\}.
\end{equation}
\end{theorem}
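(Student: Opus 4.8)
The plan is to interpret each codeword of $\C$ as the tuple of values of a single $\fq$-linear functional evaluated on the system $U$, and then to extract its rank weight from the rank--nullity theorem. Concretely, I would write $G=(u_1\mid\cdots\mid u_n)$ with columns $u_1,\dots,u_n\in\F_{q^m}^k$; non-degeneracy says these columns are $\fq$-linearly independent, so $\dim_{\fq}U=n$. For a fixed $x=(x_1,\dots,x_k)\in\F_{q^m}^k$, the $j$-th entry of $xG$ is $\sum_{i=1}^k x_i(u_j)_i$, which is precisely the value at $u_j$ of the $\F_{q^m}$-linear (hence $\fq$-linear) functional $\phi_x\colon\F_{q^m}^k\to\F_{q^m}$, $y\mapsto\sum_{i=1}^k x_iy_i$.

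Next I would restrict $\phi_x$ to $U$. Since $u_1,\dots,u_n$ generate $U$ over $\fq$, the image $\phi_x(U)$ is exactly $\la(xG)_1,\dots,(xG)_n\ra_{\fq}$, whence $\dim_{\fq}\phi_x(U)=w(xG)$ by the definition of the rank weight. On the other hand $\ker(\phi_x|_U)=\{u\in U:\phi_x(u)=0\}=U\cap x^{\perp}$. Applying rank--nullity to the $\fq$-linear map $\phi_x|_U\colon U\to\F_{q^m}$ then gives $n=\dim_{\fq}U=\dim_{\fq}(U\cap x^{\perp})+w(xG)$, which is exactly \eqref{eq:relweight}. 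For \eqref{eq:distancedesign} I would note that, as $G$ has $\F_{q^m}$-rank $k$, the map $x\mapsto xG$ is injective, so $d=\min_{x\neq 0}w(xG)=n-\max_{x\neq 0}\dim_{\fq}(U\cap x^{\perp})$ by \eqref{eq:relweight}; and since the standard bilinear form on $\F_{q^m}^k$ is non-degenerate, $x\mapsto x^{\perp}$ induces a bijection from the points of $\PG(k-1,q^m)$ onto the set of $\F_{q^m}$-hyperplanes of $\F_{q^m}^k$, so the maximum over nonzero $x$ coincides with the maximum over all $\F_{q^m}$-hyperplanes $H$.

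I do not expect a genuine obstacle here: the whole argument is essentially rank--nullity together with the standard vector/hyperplane duality. The two points that deserve a little care are (i) checking that $\phi_x|_U$ really surjects onto the $\fq$-span of the coordinates of $xG$, which is immediate once one observes that the columns of $G$ span $U$ over $\fq$; and (ii) in the last part, excluding $x=0$ (harmless, since it only produces the zero codeword) and invoking non-degeneracy of the bilinear form so that the passage from vectors $x$ to hyperplanes $x^{\perp}$ loses nothing.
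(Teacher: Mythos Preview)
Your argument is correct: interpreting the $j$-th coordinate of $xG$ as the value of the $\fq$-linear functional $\phi_x$ at the $j$-th column of $G$, and then applying rank--nullity to $\phi_x|_U$, is exactly the standard way to obtain \eqref{eq:relweight}; the passage to \eqref{eq:distancedesign} via the duality $x\mapsto x^{\perp}$ is also fine. Note, however, that the paper does not supply its own proof of this statement: it is quoted from \cite{Randrianarisoa2020ageometric} and used as a black box, so there is no in-paper argument to compare against. Your proof is the expected one and would serve perfectly well as a self-contained justification.
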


Thanks to the above theorem, we have a complete correspondence between nondegenerate $[n,k,d]_{q^m/q}$ codes and $[n,k,d]_{q^m/q}$ systems.

\begin{theorem}\label{thm:1-1correspondence}
There is a one-to-one  correspondence  between  equivalence  classes of nondegenerate $[n,k,d]_{q^m/q}$ codes and equivalence classes of $[n,k,d]_{q^m/q}$ systems.
\end{theorem}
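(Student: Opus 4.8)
The plan is to make explicit the two maps induced by Theorem \ref{th:connection} and to check that they are well defined on equivalence classes and are mutually inverse. Starting from a non-degenerate $[n,k,d]_{q^m/q}$ code $\C$, fix a generator matrix $G \in \F_{q^m}^{k\times n}$ and set $U = \mathrm{colsp}(G) \subseteq \F_{q^m}^k$. Since $G$ has $\F_{q^m}$-rank $k$ we get $\langle U\rangle_{\F_{q^m}} = \F_{q^m}^k$, and non-degeneracy says precisely that the $n$ columns of $G$ are $\fq$-linearly independent, so $\dim_{\fq}U = n$. Then \eqref{eq:distancedesign} gives that $d$ equals $n$ minus the maximum $\fq$-dimension of $U\cap H$ over $\F_{q^m}$-hyperplanes $H$, hence $U$ is an $[n,k,d]_{q^m/q}$ system. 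Conversely, given such a system $U$, choose an $\fq$-basis $u_1,\dots,u_n$ of $U$ and let $G$ be the matrix with these vectors as columns; then $G$ has rank $k$ because $\langle U\rangle_{\F_{q^m}} = \F_{q^m}^k$, so $G$ generates an $[n,k]_{q^m/q}$ code $\C$, which is non-degenerate since its columns are $\fq$-independent, and again \eqref{eq:distancedesign} yields minimum distance $d$.

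Next I would verify well-definedness. Two generator matrices of the same code differ by left multiplication by some $A\in\GL(k,\F_{q^m})$, which sends $U=\mathrm{colsp}(G)$ to $A\cdot U$, an equivalence of systems; and replacing $\C$ by an equivalent code $\C B$ with $B\in\GL(n,q)$ leaves the column span unchanged, because $B$ is invertible over $\fq$. Hence the equivalence class of $U$ depends only on that of $\C$. In the other direction, a different $\fq$-basis of $U$ corresponds to right multiplication of $G$ by an element of $\GL(n,q)$, i.e. to an equivalent code, while replacing $U$ by $\varphi(U)$ with $\varphi\in\GL(k,\F_{q^m})$ replaces $G$ by its left multiple by the matrix of $\varphi$, again an equivalent code; so the equivalence class of $\C$ depends only on that of $U$.

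Finally, the two assignments are mutually inverse by construction: starting from $\C$, a generator matrix $G$ is sent to $U=\mathrm{colsp}(G)$, and reassembling any $\fq$-basis of $U$ as columns yields a matrix that differs from $G$ by an element of $\GL(n,q)$, hence generates an equivalent code; starting from $U$, the matrix built from a basis has column span exactly $U$. There is no substantive obstacle here — the analytic content is entirely contained in Theorem \ref{th:connection} — and what remains is the bookkeeping of how the two groups $\GL(k,\F_{q^m})$ and $\GL(n,q)$ act compatibly on matrices, on column spans, and on codes; the one point requiring a little care is to keep straight that $\GL(n,q)$ acts on the right and does not change $U$, whereas $\GL(k,\F_{q^m})$ acts on the left and realizes system equivalence.
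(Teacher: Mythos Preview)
Your proposal is correct and follows essentially the same approach as the paper: the paper simply defines the maps $\Psi$ and $\Phi$ via column span and choice of $\fq$-basis, asserts they are well-posed and mutually inverse, and refers to Theorem~\ref{th:connection} and \cite{alfarano2022linear} for details. You have filled in exactly the verifications the paper leaves implicit (independence from the choice of generator matrix/basis, compatibility with the actions of $\GL(k,\F_{q^m})$ on the left and $\GL(n,q)$ on the right), so there is nothing to add.
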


The correspondence can be formalized by the following two maps
\begin{align*}
    \Psi :  \mathfrak{C}[n,k,d]_{q^m/q} &\to\mathfrak{U}[n,k,d]_{q^m/q} \\
    \Phi : \mathfrak{U}[n,k,d]_{q^m/q} &\to \mathfrak{C}[n,k,d]_{q^m/q},
\end{align*}
which act as follows.
Let $[\C]\in\mathfrak{C}[n,k,d]_{q^m/q}$ and $G$ be a generator matrix for $\C$. Then $\Psi([\C])$ is the equivalence class of $[n,k,d]_{q^m/q}$ systems $[U]$, where $U$ is the $\F_q$-span of the columns of $G$. In this case $U$ is also called a \textbf{system associated with} $\C$. Viceversa, given $[U]\in\mathfrak{U}[n,k,d]_{q^m/q}$. Define $G$ as the matrix whose columns are an $\F_q$-basis of $U$ and let $\C$ be the code generated by $G$. Then $\Phi([U])$ is the equivalence class of the $[n,k,d]_{q^m/q}$ codes $[\C]$. $\C$ is also called a \textbf{code associated with} $U$. $\Psi$ and $\Phi$ are well-posed and they are inverse of each other. See also \cite{alfarano2022linear}.

An important code, whose definition arises naturally from the geometric view, is the \textbf{simplex code}, which has been defined in \cite{alfarano2022linear} as any non-degenerate $[mk,k]_{q^m/q}$ code, that is having as an associated system $\F_{q^m}^k$ seen as an $\fq$-vector space of dimension $mk$.

Generalized rank weights have been introduced several times with different definition, see e.g. \cite{jurrius2017defining}, and they have been used also as a tool for the inequivalence of families of codes as was done in \cite{bartoli2022new}.

In this paper we will deal with the definition given in \cite{Randrianarisoa2020ageometric} and more precisely to the equivalent one given in \cite[Theorem 3.14]{alfarano2022linear}, directly connected with the systems.

\begin{definition}
Let $\C$ be a non-degenerate $[n,k,d]_{q^m/q}$ rank metric code and let $U$ be an associated system.
For any $r \in \{1,\ldots,k\}$, the \textbf{$r$-th generalized rank weight} is
\begin{equation}\label{eq:defgenrankweight}
d_r^{\mathrm{rk}}(\C)=n - \max\left\{ \dim_{\fq}(U \cap H)  \colon H\mbox{ is an } \F_{q^m}\mbox{-subspace of codim. $r$ of  }\F_{q^m}^k  \right\}.
\end{equation}
\end{definition}

Note that when $r=1$, in the above defintion we obtain the minimum distance.

In the next, we will recall how the support of a codeword is related to the intersections with a system associated with the code.

Let $G \in \mathbb{F}_{q^m}^{k\times n}$ such that its columns are $\fq$-linearly independent and let $U$ be the $\mathbb{F}_q$-span of the columns of $G$. 
Define the map
\[
\begin{array}{rccl}
\psi_{G}:& \fq^{n}& \longrightarrow &U \\
&\lambda & \longmapsto & \lambda G^\top,\end{array}\]
which turns out to be an $\fq$-linear isomorphism.

\begin{theorem}\cite[Theorem 3.1]{neri2021geometry}\label{th:suppogeom}
Let $\C$ be a non-degenerate $[n,k]_{q^m/q}$ code with generator matrix $G \in \F_{q^m}^{k\times n}$ and let $U$ be the $\F_q$-span of the columns of $G$. Then, for every $x\in \mathbb{F}_{q^m}^k$
$$\psi_{G}^{-1}(U \cap x^\perp)=\supp(vG)^\perp,$$
where $\supp(vG)^\perp$ si the orthogonal complement of $\supp(vG)$ with respect to the standard scalar product in $\fq^n$.
\end{theorem}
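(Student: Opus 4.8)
The plan is to verify the set equality pointwise, i.e.\ to show that a vector $\lambda=(\lambda_1,\dots,\lambda_n)\in\fq^n$ lies in $\psi_G^{-1}(U\cap x^\perp)$ if and only if it lies in $\supp(xG)^\perp$, by rewriting each of the two membership conditions as an explicit system of linear equations in the $\lambda_i$. First I would unwind the left-hand side. Since $\psi_G\colon\fq^n\to U$ is an $\fq$-linear isomorphism (this uses only the $\fq$-linear independence of the columns of $G$), we have $\lambda\in\psi_G^{-1}(U\cap x^\perp)$ exactly when $\psi_G(\lambda)\in x^\perp$. Writing $g_1,\dots,g_n\in\F_{q^m}^k$ for the columns of $G$, one has $\psi_G(\lambda)=\lambda G^\top=\sum_{i=1}^n\lambda_i g_i$, so $\psi_G(\lambda)\in x^\perp$ means $\sum_{i=1}^n\lambda_i\langle x,g_i\rangle=0$, where $\langle\cdot,\cdot\rangle$ is the standard $\F_{q^m}$-bilinear form on $\F_{q^m}^k$. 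The elementary but crucial observation is that $\langle x,g_i\rangle$ is precisely the $i$-th coordinate of the codeword $xG\in\F_{q^m}^n$; hence $\lambda\in\psi_G^{-1}(U\cap x^\perp)$ if and only if $\sum_{i=1}^n\lambda_i(xG)_i=0$ in $\F_{q^m}$.

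Next I would expand the right-hand side over a fixed $\fq$-basis $\Gamma=(\gamma_1,\dots,\gamma_m)$ of $\F_{q^m}$. Writing $(xG)_i=\sum_{j=1}^m\Gamma(xG)_{ij}\gamma_j$ with $\Gamma(xG)_{ij}\in\fq$, and using that $\lambda_i\in\fq$, we may rearrange
$$\sum_{i=1}^n\lambda_i(xG)_i=\sum_{j=1}^m\Bigl(\sum_{i=1}^n\lambda_i\Gamma(xG)_{ij}\Bigr)\gamma_j,$$
where each inner sum lies in $\fq$. By $\fq$-linear independence of $\gamma_1,\dots,\gamma_m$, this vanishes if and only if $\sum_{i=1}^n\lambda_i\Gamma(xG)_{ij}=0$ for every $j\in\{1,\dots,m\}$; that is, if and only if $\lambda$ is orthogonal, with respect to the standard $\fq$-bilinear form on $\fq^n$, to each of the $m$ columns of $\Gamma(xG)$, hence to $\mathrm{colsp}(\Gamma(xG))=\supp(xG)$. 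Combining this with the reformulation of the left-hand side from the previous paragraph yields $\psi_G^{-1}(U\cap x^\perp)=\supp(xG)^\perp$.

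There is essentially no hard step here: the statement is a bookkeeping identity, and the argument above is the whole of it. The only points needing (minor) care are that $\psi_G$ is a well-defined $\fq$-isomorphism, so that preimages make sense; the identification $\langle x,g_i\rangle=(xG)_i$; and — the one genuinely load-bearing fact — that $\lambda$ has coordinates in the \emph{small} field $\fq$, which is exactly what allows one to split a single $\F_{q^m}$-linear condition into $m$ separate $\fq$-linear ones via the basis $\Gamma$. One should also note the degenerate case $x=0$ is covered: there $x^\perp=\F_{q^m}^k$, so the left-hand side is all of $\fq^n$, while $\supp(0)=\{0\}$ gives $\{0\}^\perp=\fq^n$. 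Finally, independence of the identity from the choice of $\Gamma$ is automatic, since $\supp(\cdot)$ is already known to be basis-independent.
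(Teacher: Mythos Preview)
Your proof is correct. The paper does not actually supply a proof of this statement; it is quoted verbatim from \cite[Theorem 3.1]{neri2021geometry} and stated without argument, so there is no in-paper proof to compare against. Your direct verification---identifying $(xG)_i=\langle x,g_i\rangle$, then expanding the single $\F_{q^m}$-linear condition $\sum_i\lambda_i(xG)_i=0$ over an $\fq$-basis using that each $\lambda_i\in\fq$, and recognising the resulting $m$ equations as orthogonality of $\lambda$ to the columns of $\Gamma(xG)$---is the natural way to establish the identity, and you have correctly isolated the one load-bearing point (that $\lambda$ has entries in $\fq$, which permits the splitting into $m$ conditions).
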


\subsection{Linear sets}

In this paper we will often use and look to the systems projectively via the notion of linear sets.
Let $V$ be a $k$-dimensional vector space over $\F_{q^m}$ and let $\Lambda=\PG(V,\F_{q^m})=\PG(k-1,q^m)$.
Recall that, if $U$ is an $\fq$-subspace of $V$ of dimension $n$, then the set of points
\[ L_U=\{\la { u} \ra_{\mathbb{F}_{q^m}} : { u}\in U\setminus \{{ 0} \}\}\subseteq \Lambda \]
is said to be an $\fq$-\textbf{linear set of rank $n$}.
Let $\Omega=\PG(W,\F_{q^m})$ be a projective subspace of $\Lambda$. The \textbf{weight of $\Omega$} in $L_U$ is defined as 
\[ w_{L_U}(\Omega)=\dim_{\fq}(U\cap W). \]
If $N_i$ denotes the number of points of $\Lambda$ having weight $i\in \{0,\ldots,n\}$  in $L_U$, the following relations hold:
\begin{equation}\label{eq:card}
    |L_U| \leq \frac{q^n-1}{q-1},
\end{equation}
\begin{equation}\label{eq:pesicard}
    |L_U| =N_1+\ldots+N_n,
\end{equation}
\begin{equation}\label{eq:pesivett}
    N_1+N_2(q+1)+\ldots+N_n(q^{n-1}+\ldots+q+1)=q^{n-1}+\ldots+q+1.
\end{equation}
Moreover, if $L_U\ne \emptyset$, then 
\begin{equation}\label{eq:cong1}
    |L_U|\equiv 1 \pmod{q},
\end{equation}
and if $\langle L_U \rangle =\PG(k-1,q^m)$ then 
\begin{equation}\label{eq:boundsubgeomcontainedpre}
    |L_U|\geq \frac{q^k-1}{q-1}.
\end{equation}

Note also that if $P_1,\ldots ,P_j \in L_U$ are independent points (i.e. $\langle P_1,\ldots,P_j \rangle$ is a $(j-1)$-dimensional subspace of $\PG(k-1,q^m)$), then 
\begin{equation} \label{eq:boundcomplementaryweight}
w_{L_U}(P_1) + \ldots+w_{L_U}(P_j) \leq n.
\end{equation}

\begin{remark}\label{rk:maximumweighpointscomplementaryweights}
In the case in which there exist $j$ independent points in $L_U$ such that in \eqref{eq:boundcomplementaryweight} the equality holds, then the maximum of the weight of the points in $L_U$ is 
\[ \max\{ w_{L_U}(P_i) \colon i \in\{1,\ldots,j\} \}. \]
\end{remark}

Furthermore, $L_U$ and $U$ are called \textbf{scattered} if $L_U$ has the maximum number $\frac{q^n-1}{q-1}$ of points, or equivalently, if all points of $L_U$ have weight one. 
\textbf{Canonical subgeometries} of $\PG(k-1,q^m)$ are defined as those $\fq$-linear set $L_U$ with rank $k$ spanning the entire space and they are examples of scattered $\fq$-linear sets.
Blokhuis and Lavrauw provided the following bound on the rank of a scattered liner set.

\begin{theorem}\cite{blokhuis2000scattered}\label{th:boundscattrank}
Let $L_U$ be a scattered $\fq$-linear set of rank $n$ in $\PG(k-1,q^m)$, then
\[ n\leq \frac{mk}2. \]
\end{theorem}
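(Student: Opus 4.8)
The plan is to use the scattered hypothesis to manufacture, out of $U$, a companion $\fq$-subspace of the ambient space that is disjoint from $U$ and has the same $\fq$-dimension; a dimension count in the ambient space then yields $2n\le mk$ at once. Throughout, write $V=\F_{q^m}^k$ for the underlying vector space, so $\dim_{\fq}V=mk$, and let $U\subseteq V$ be the $\fq$-subspace of dimension $n$ with $L_U$ scattered. (If $m=1$ the statement is essentially vacuous — every $\fq$-linear set in $\PG(k-1,q)$ is scattered, since a point $\la w\ra_{\fq}$ of $L_U$ already contains a generator of $U\cap\la w\ra_{\fq}$ — so I assume $m\ge2$.)

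First I would fix a scalar $\lambda\in\F_{q^m}\setminus\fq$, which exists precisely because $m\ge2$, and set $U'=\lambda U=\{\lambda u:u\in U\}$. Since multiplication by $\lambda$ is an invertible $\fq$-linear map of $V$, the set $U'$ is an $\fq$-subspace of $V$ with $\dim_{\fq}U'=n$. The key point I would then need to establish is that $U\cap U'=\{0\}$, and this is exactly where scatteredness enters. Suppose not, and pick a nonzero $w\in U\cap U'$, say $w=\lambda u$ with $u\in U\setminus\{0\}$. Then $u$ and $w$ are two nonzero vectors of $U$ representing the same point $P=\la w\ra_{\F_{q^m}}=\la u\ra_{\F_{q^m}}$ of $L_U$. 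As $L_U$ is scattered, $P$ has weight one, i.e. $\dim_{\fq}(\la w\ra_{\F_{q^m}}\cap U)=1$; since this $\fq$-line contains $w$, it equals $\la w\ra_{\fq}$, whence $u\in\la w\ra_{\fq}$, say $u=cw$ with $c\in\fq^*$. But then $w=\lambda u=\lambda c\,w$ forces $\lambda c=1$, i.e. $\lambda=c^{-1}\in\fq$, contradicting the choice of $\lambda$. Hence $U\cap U'=\{0\}$.

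With $U\cap U'=\{0\}$ in hand, the sum $U\oplus U'$ sits inside $V$, so
\[
2n=\dim_{\fq}U+\dim_{\fq}U'=\dim_{\fq}(U\oplus U')\le\dim_{\fq}V=mk,
\]
which is the asserted inequality $n\le mk/2$. I do not anticipate a genuine obstacle here: the entire difficulty is packed into choosing the auxiliary subspace $U'=\lambda U$, after which everything is linear algebra; the only points requiring a little care are the degenerate case $m=1$ and invoking the scattered condition at the \emph{point} $\la w\ra_{\F_{q^m}}$ rather than at a line or larger subspace. One could instead attempt a purely enumerative route, combining $|L_U|=\frac{q^n-1}{q-1}$ with \eqref{eq:pesivett} and \eqref{eq:boundsubgeomcontainedpre}, but on its own that only delivers the weaker bound $n\le m(k-1)+1$, so I would not pursue it.
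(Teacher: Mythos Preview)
Your argument is correct. The paper itself does not prove this theorem; it simply quotes the result from \cite{blokhuis2000scattered}, so there is no in-paper proof to compare against. The device you use---setting $U'=\lambda U$ for some $\lambda\in\F_{q^m}\setminus\fq$ and exploiting scatteredness to force $U\cap\lambda U=\{0\}$, whence $2n\le\dim_{\fq}V=mk$---is a known short proof of the bound and is entirely sound.

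One minor correction to your aside on $m=1$: in that case the statement is not vacuous but actually false (take $U=V$, so $n=k>k/2$), since every $\fq$-linear set in $\PG(k-1,q)$ is scattered as you observe. Thus $m\ge2$ is a genuine standing hypothesis of the theorem rather than a degenerate case that takes care of itself; your decision to assume it is therefore the right one, just for a slightly different reason than stated.
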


A scattered $\fq$-linear set of rank $km/2$ in $\PG(k-1,q^m)$ is said to be a \textbf{maximum scattered} and $U$ is said to be a maximum scattered $\fq$-subspace as well. 

A trivial lower bound on the number of points of a non-empty linear set $L_U$ is $|L_U|\geq 1$. It can be improved if some assumptions are added. 

\begin{theorem}[{\cite[Theorem 1.2]{debeule2019theminimumsize}} and {\cite[Lemma 2.2]{bonoli2005fq}}]\label{th:minsize}
 If $L_U$ is an $\fq$-linear set of rank $n$, with $1 < n \leq m$ on $\PG(1,q^m)$, and $L_U$ contains at least one point of weight $1$, then $|L_U| \geq q^{n-1} + 1$.
\end{theorem}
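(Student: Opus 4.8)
\emph{Approach.} This is a known estimate and the argument is geometric. The first step is a normalisation: using a projectivity of $\PG(1,q^m)$ (equivalently, the $\GL(2,\fqm)$-action on systems) I would move the given weight-$1$ point $P$ to $\langle(0,1)\rangle_{\fqm}$ and arrange $(0,1)\in U$. Since $w_{L_U}(P)=1$, the projection $\rho\colon(a,b)\mapsto a$ restricted to $U$ has kernel $U\cap\langle(0,1)\rangle_{\fqm}=\langle(0,1)\rangle_{\fq}$, so $W:=\rho(U)$ is an $\fq$-subspace of $\fqm$ of dimension $n-1$, and lifting the induced $\fq$-linear map $W\to\fqm/\fq$ we obtain an $\fq$-linear $g\colon W\to\fqm$ with
\[
U=\{\,(b,\,g(b)+t)\ :\ b\in W,\ t\in\fq\,\}.
\]
Every point of $L_U$ other than $P$ has a representative $(b,g(b)+t)$ with $b\neq 0$, hence equals $\langle(1,(g(b)+t)/b)\rangle_{\fqm}$, and the map sending such a point to $(g(b)+t)/b$ is a well-defined bijection between $L_U\setminus\{P\}$ and
\[
Y:=\Big\{\tfrac{g(b)+t}{b}\ :\ b\in W\setminus\{0\},\ t\in\fq\Big\}\subseteq\fqm .
\]
So the statement is equivalent to the estimate $|Y|\ge q^{\,n-1}$.

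\emph{Induction.} I would prove $|Y|\ge q^{\,n-1}$ by induction on $n$. For $n=2$ the space $W=\langle b_0\rangle_{\fq}$ is one‑dimensional, and since $g$ is $\fq$-linear one gets $Y=g(b_0)/b_0+\langle b_0^{-1}\rangle_{\fq}$, a single coset of a $1$-dimensional $\fq$-subspace, so $|Y|=q=q^{2-1}$. For the inductive step, fix a hyperplane $W_0<W$ of $\fq$-dimension $n-2$; the subspace $U_0=\{(b,g(b)+t):b\in W_0,\ t\in\fq\}$ has rank $n-1$ and still contains $P$ with weight $1$, so by the inductive hypothesis the corresponding subset $Y_0\subseteq Y$ has $|Y_0|\ge q^{\,n-2}$. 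It then remains to show that the elements $(g(b)+t)/b$ with $b\in W\setminus W_0$ add at least $q^{\,n-1}-q^{\,n-2}$ new values to $Y$.

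\emph{The crux.} This last point is the heart of the matter and the main obstacle. A convenient bookkeeping device is the weight distribution of $L_U$: counting the nonzero vectors of $U$ in two ways gives $\sum_{Q\in L_U}\big(q^{\,w_{L_U}(Q)}-1\big)=q^{\,n}-1$, which must be combined with the constraint $w_{L_U}(Q)+w_{L_U}(Q')\le n$ for distinct points $Q,Q'$ (the case $j=2$ of \eqref{eq:boundcomplementaryweight}) and with $w_{L_U}(P)=1$. These facts alone only yield $|L_U|\ge q^{\,w_{\max}}+1$, which is sharp just when $w_{\max}=n-1$; the real difficulty is to exclude configurations carrying few points of intermediate weight. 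This is exactly where the hypotheses $n\le m$ and "$L_U$ has a point of weight $1$" are essential: they rule out, e.g., the "product" linear sets $L_U$ with $U=A_0\times cA_0$ ($A_0$ an $\fq$-subspace of $\fqm$, $c\in\fqm^{*}$), which are too small but in which every point has weight $>1$. Here one writes $g$ as the restriction of a $q$-polynomial and bounds the number of roots of the associated linearised polynomials --- this is the content of \cite{debeule2019theminimumsize} (respectively the counting lemma \cite[Lemma 2.2]{bonoli2005fq}) --- to force enough distinct values in $Y\setminus Y_0$. Finally, the bound is sharp: a club of rank $n$ (one point of weight $n-1$, all others of weight $1$) has exactly $q^{\,n-1}+1$ points and contains points of weight $1$.
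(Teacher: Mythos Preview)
The paper does not supply its own proof of this statement: Theorem~\ref{th:minsize} is quoted in the preliminaries from \cite{debeule2019theminimumsize} and \cite{bonoli2005fq} with no argument given, so there is nothing in the paper to compare your proposal against.

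As for the proposal itself, the normalisation and the reduction to the estimate $|Y|\ge q^{\,n-1}$ are correct, and the base case $n=2$ is fine. The genuine gap is the inductive step. You do not show that the values $(g(b)+t)/b$ with $b\in W\setminus W_0$ contribute at least $q^{\,n-1}-q^{\,n-2}$ elements outside $Y_0$; instead you write that ``one \ldots bounds the number of roots of the associated linearised polynomials --- this is the content of \cite{debeule2019theminimumsize} (respectively the counting lemma \cite[Lemma~2.2]{bonoli2005fq})''. That is circular: those are exactly the sources in which the theorem is proved, so at the decisive moment your argument collapses to ``see the original proof''. Note also that the inductive hypothesis $|Y_0|\ge q^{\,n-2}$ by itself carries no information about how the new cosets $g(b)/b+\fq\, b^{-1}$ (for $b\in W\setminus W_0$) meet $Y_0$, so it is not clear that the induction, as you have set it up, can be closed without importing the full strength of the cited results. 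The weight--distribution remarks you make are accurate but, as you yourself observe, only yield $|L_U|\ge q^{\,w_{\max}}+1$ and do not bridge the gap either.

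In short: the proposal is a correct reformulation and a reasonable outline, but it is not a proof; the step you label ``the heart of the matter'' is deferred to the very references the theorem is citing.
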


The following result deals with a family of examples having the minimum number of points, i.e. satisfying the bound of Theorem \ref{th:minsize}.

\begin{theorem}\cite[Theorem 2.7]{jena2021linear}\label{th:constructionVdV}
Let $\lambda\in \F_{q^m}$ be an element generating $\F_{q^m}$ and 
\[ L_U=\{ \langle (\alpha_0+\alpha_1 \lambda+\ldots+\alpha_{t_1-1}\lambda^{t_1-1},\beta_0+\beta_1 \lambda+\ldots+\beta_{t_2-1}\lambda^{t_2-1}) \rangle_{\F_{q^m}} \colon \alpha_i,\beta_i \in \fq,\,\]\[\text{not all zero},\, 1 \leq t_1,t_2, t_1+t_2 \leq m   \}, \]
where
\[U=\langle 1,\lambda, \ldots,\lambda^{t_1-1}\rangle_{\fq}\times \langle 1,\lambda, \ldots,\lambda^{t_2-1} \rangle_{\fq} .\]
Then $L_U$ is an $\fq$-linear set of $\PG(1,q^m)$ of rank $k=t_1+t_2$ with $q^{k-1}+1$ points.
Let $t_1\leq t_2$, then 
\begin{itemize}
\item the point $\langle (0,1)\rangle_{\fqn}$ has weight $t_2$;
\item there are $q^{t_2-t_1+1}$ points of weight $t_1$ different from $\langle (0,1)\rangle_{\fqn}$;
\item there are $q^{k-2i+1}-q^{k-2i-1}$ points of weight $i \in \{1,\ldots, t_1-1\}$.
\end{itemize}
\end{theorem}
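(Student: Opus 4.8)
The plan is to work in the polynomial model $\F_{q^m}\cong\fq[\lambda]/(f)$, where $f\in\fq[\lambda]$ is the minimal polynomial of $\lambda$ over $\fq$; since $\lambda$ generates $\F_{q^m}$, $f$ is irreducible of degree $m\ge t_1+t_2$. Under this identification the two factors of $U$ become the spaces $A$ and $B$ of polynomials of degree $<t_1$ and of degree $<t_2$ respectively, so $\dim_{\fq}U=\dim_{\fq}A+\dim_{\fq}B=t_1+t_2=k$ and $L_U$ has rank $k$. The point $\langle (0,1)\rangle_{\F_{q^m}}$ has weight $\dim_{\fq}(U\cap(\{0\}\times\F_{q^m}))=\dim_{\fq}B=t_2$, and every remaining point of $L_U$ is of the form $\langle (1,c)\rangle_{\F_{q^m}}$ with $c=b/a$ for some $a\in A\setminus\{0\}$ and $b\in B$ (allowing $c=0$); for such a point,
\[ w_{L_U}(\langle (1,c)\rangle_{\F_{q^m}})=\dim_{\fq}\{x\in A:\ cx\in B\}. \]

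The first real step is to normalise the slopes and read off their weights. Reducing $c=b/a$ to lowest terms, $c=\beta(\lambda)/\alpha(\lambda)$ with $\alpha$ monic and $\gcd(\alpha,\beta)=1$, one gets $\deg\alpha\le\deg a<t_1$ and $\deg\beta\le\deg b<t_2$ (and $\alpha=1$, $\beta=0$ if $c=0$); conversely each such pair $(\alpha,\beta)$ produces a slope because $\alpha(\lambda)\ne 0$. This representation is unique: from $\beta_1(\lambda)\alpha_2(\lambda)=\beta_2(\lambda)\alpha_1(\lambda)$ in $\F_{q^m}$ together with $\deg(\beta_i\alpha_j)\le t_1+t_2-2<m$ one obtains the polynomial identity $\beta_1\alpha_2=\beta_2\alpha_1$, whence $\alpha_1=\alpha_2$ and $\beta_1=\beta_2$ by coprimality and monicity. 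For such a $c$, the condition $cx\in B$ with $x\in A$ likewise becomes, after clearing denominators and using $\deg(\beta x),\deg(\alpha y)<m$, the polynomial identity $\beta x=\alpha y$ with $y\in B$; coprimality forces $x=\alpha x_1$, and the surviving constraints are $\deg x_1<t_1-\deg\alpha$ and $\deg x_1<t_2-\deg\beta$. Since $x\mapsto x_1$ is an $\fq$-isomorphism,
\[ w_{L_U}(\langle (1,c)\rangle_{\F_{q^m}})=\min\{t_1-\deg\alpha,\ t_2-\deg\beta\} \]
(with the convention $t_2-\deg\beta=+\infty$ when $\beta=0$); in particular this weight always lies between $1$ and $t_1$, so $\langle (0,1)\rangle_{\F_{q^m}}$ is the only point of weight exceeding $t_1$.

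It then remains to count, for $1\le i\le t_1$, the coprime pairs $(\alpha,\beta)$ with $\alpha$ monic, $\deg\alpha<t_1$, $\deg\beta<t_2$ and $\min\{t_1-\deg\alpha,t_2-\deg\beta\}=i$; this is the step I expect to be technically hardest, since one must control, for a fixed $\alpha$, how many $\beta$ of bounded degree are coprime to it. The device is the polynomial Euler function $\phi$: for monic $\alpha$ of degree $d$ the number of $\beta$ with $\deg\beta\le D$ and $\gcd(\alpha,\beta)=1$ equals $q^{D+1-d}\phi(\alpha)$ whenever $D\ge d-1$, and $\sum_{\alpha\ \mathrm{monic},\,\deg\alpha=d}\phi(\alpha)=q^{2d}-q^{2d-1}$ for $d\ge 1$ (and $=1$ for $d=0$), the latter following by induction from $\sum_{\gamma\mid\alpha}\phi(\gamma)=q^{\deg\alpha}$. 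For $i=t_1$ the condition forces $\alpha=1$ and $\deg\beta\le t_2-t_1$, giving exactly $q^{t_2-t_1+1}$ points of weight $t_1$ distinct from $\langle (0,1)\rangle_{\F_{q^m}}$. For $i\le t_1-1$ I split the pairs according to whether $\deg\alpha=t_1-i$ (with $\deg\beta\le t_2-i$ free) or $\deg\alpha<t_1-i$ and $\deg\beta=t_2-i$; each part is a geometric sum that telescopes, and the two contributions come out to $q^{k-2i+1}-q^{k-2i}$ and $q^{k-2i}-q^{k-2i-1}$, summing to $q^{k-2i+1}-q^{k-2i-1}$. Finally, adding $\langle (0,1)\rangle_{\F_{q^m}}$ and summing the sizes of all weight classes (another telescoping sum) yields $|L_U|=q^{k-1}+1$, in agreement with \eqref{eq:pesicard}. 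Throughout I assume $t_1\le t_2$, which costs nothing: interchanging the two coordinates is an $\F_{q^m}$-linear collineation carrying $L_{A\times B}$ to $L_{B\times A}$, so the size of the linear set and the stated weight distribution are symmetric in $t_1,t_2$.
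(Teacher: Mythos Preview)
Your argument is correct. Note, however, that the paper itself does not prove this statement: it is quoted verbatim from \cite[Theorem~2.7]{jena2021linear} and used as a black box, so there is no ``paper's own proof'' to compare against.

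On the substance: your polynomial model is exactly the right tool. The identification $\F_{q^m}\cong\fq[\lambda]/(f)$ turns the weight computation into a divisibility question, and your key observation---that the unique coprime representative $c=\beta/\alpha$ of a slope satisfies $w_{L_U}(\langle(1,c)\rangle)=\min\{t_1-\deg\alpha,\,t_2-\deg\beta\}$---is both correct and cleanly justified (the degree bound $t_1+t_2-2<m$ is precisely what lets you pass from equalities in $\F_{q^m}$ to polynomial identities). The counting via the polynomial Euler function is standard and your two auxiliary identities (the count of coprime residues in a degree window and $\sum_{\deg\alpha=d}\phi(\alpha)=q^{2d}-q^{2d-1}$) are the polynomial analogues of classical facts; both are easy consequences of $\sum_{\gamma\mid\alpha}\phi(\gamma)=q^{\deg\alpha}$ as you indicate. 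The case split for $i\le t_1-1$ into $\deg\alpha=t_1-i$ versus $\deg\alpha<t_1-i,\ \deg\beta=t_2-i$ is a clean partition, the sums telescope as claimed, and the final tally $|L_U|=q^{k-1}+1$ drops out. This is essentially the argument one finds in \cite{jena2021linear}, so your approach is not materially different from the original source.
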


Recently, extending the results in \cite{debeule2019theminimumsize}, in \cite{SamPaolo} the following lower bound on the size of a linear set has been proved.

\begin{theorem}\cite{SamPaolo,debeule2019theminimumsize}
 \label{thm:newBound}
Let $L_U$ be an $\fq$-linear set of rank $n$ in $\PG(k-1,q^m)$ spanning the whole space.
Suppose that there exists some $(r-1)$-space $\Omega$, with $r < k-1$, such that $L_U$ meets $\Omega$ in a canonical $\fq$-subgeometry of $\Omega$.
Then
\begin{equation}
  \label{eq:debeulebound}
  |L_U| \geq q^{n-1} + \ldots + q^{n-r} +  \frac{q^{k-r}-1}{q-1}.
 \end{equation}
If $r<k-1$, the equality holds if and only if $L_U$ is a canonical subgeometry $\PG(k-1,q)$ in $\PG(k-1,q^m)$. 
\end{theorem}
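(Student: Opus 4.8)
The plan is to prove the bound by repeatedly projecting $L_U$ from a point of the subgeometry $L_U\cap\Omega$, and to extract the equality characterisation from the last step of this iteration. Write $\Omega=\PG(W,\F_{q^m})$ with $\dim_{\F_{q^m}}W=r$, so that $L_U\cap\Omega=L_{U\cap W}$ is, by hypothesis, a canonical $\fq$-subgeometry of $\Omega$ and in particular $\dim_{\fq}(U\cap W)=r$. First I would pick any point $P=\la u_0\ra_{\F_{q^m}}\in L_U\cap\Omega$: since $\la u_0\ra_{\F_{q^m}}\subseteq W$ we have $U\cap\la u_0\ra_{\F_{q^m}}\subseteq U\cap W$, and as $L_{U\cap W}$ is scattered this intersection is $1$-dimensional, so $P$ has weight $1$ in $L_U$. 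Let $\pi_P$ be the projection with centre $P$ and $L_{U'}\subseteq\PG(k-2,q^m)$ the image of $L_U$, so that $U'=(U+\la u_0\ra_{\F_{q^m}})/\la u_0\ra_{\F_{q^m}}$ has $\fq$-dimension $n-1$. Then $L_{U'}$ still spans $\PG(k-2,q^m)$, and a direct computation shows that $L_{U'}$ meets the $(r-2)$-space $\pi_P(\Omega)$ exactly in the image of $L_U\cap\Omega$, which has rank $r-1$ and spans $\pi_P(\Omega)$ and is therefore again a canonical subgeometry; so the hypotheses are inherited by $L_{U'}$ with parameters $(n-1,k-1,r-1)$.

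The core of the argument will be the inequality $|L_U|\ge q^{n-1}+|L_{U'}|$. The nonzero fibres of $\pi_P$ on $L_U$ partition $L_U\setminus\{P\}$, and for $Q=\la\bar v\ra_{\F_{q^m}}\in L_{U'}$ (with $v\in U$, $v\notin\la u_0\ra_{\F_{q^m}}$) the fibre over $Q$ is $L_T\setminus\{P\}$, where $T=U\cap\la u_0,v\ra_{\F_{q^m}}$ is an $\fq$-subspace of dimension $w_Q+1$, $w_Q:=w_{L_{U'}}(Q)$ (the extra $1$ coming from $U\cap\la u_0\ra_{\F_{q^m}}=\la u_0\ra_{\fq}$). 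Now $L_T$ is a linear set of rank $w_Q+1$ on the line $\la u_0,v\ra_{\F_{q^m}}$ containing the weight-$1$ point $P$, so Theorem \ref{th:minsize} gives $|L_T|\ge q^{w_Q}+1$ as long as $w_Q+1\le m$; in the remaining case $w_Q=m$ a short direct argument (using that an injective $\fq$-linear self-map of $\F_{q^m}$ is surjective) shows that $L_T$ is the whole line, so $|L_T|=q^{w_Q}+1$ still holds. Hence every fibre has at least $q^{w_Q}$ points, and combining this with the identity $\sum_{Q\in L_{U'}}(q^{w_Q}-1)=q^{n-1}-1$ coming from \eqref{eq:pesivett} applied to $L_{U'}$ gives
\[
|L_U|\;=\;1+\sum_{Q\in L_{U'}}\big|\pi_P^{-1}(Q)\cap L_U\big|\;\ge\;1+\sum_{Q\in L_{U'}}q^{w_Q}\;=\;q^{n-1}+|L_{U'}|.
\]

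Iterating this estimate $r$ times (at each step the subgeometry part loses one dimension and the ambient space drops by one) I would arrive at $|L_U|\ge q^{n-1}+\cdots+q^{n-r}+\big|L_{U^{(r)}}\big|$, where $L_{U^{(r)}}$ is an $\fq$-linear set of rank $n-r$ spanning $\PG(k-r-1,q^m)$; then \eqref{eq:boundsubgeomcontainedpre} gives $\big|L_{U^{(r)}}\big|\ge\frac{q^{k-r}-1}{q-1}$, which is \eqref{eq:debeulebound}. For the equality statement (with $r<k-1$): if $L_U$ is a canonical subgeometry $\PG(k-1,q)$ then $n=k$ and $|L_U|=\frac{q^k-1}{q-1}$, and a short telescoping shows this equals the right-hand side of \eqref{eq:debeulebound}; conversely, equality forces every inequality above to be tight, so $\big|L_{U^{(r)}}\big|=\frac{q^{k-r}-1}{q-1}$, and since $k-r\ge 2$ a spanning $\fq$-linear set of this minimum size in $\PG(k-r-1,q^m)$ must be a canonical subgeometry (the equality case of \eqref{eq:boundsubgeomcontainedpre}), whence $\rk L_{U^{(r)}}=n-r=k-r$, i.e. $n=k$; as $L_U$ spans $\PG(k-1,q^m)$ and has rank $k$, it is itself a canonical subgeometry.

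The main obstacle is the uniform fibre estimate $\big|\pi_P^{-1}(Q)\cap L_U\big|\ge q^{w_Q}$: Theorem \ref{th:minsize} only covers ranks up to $m$, so the boundary case $w_Q=m$ — where $L_T$ unexpectedly fills the whole line — has to be disposed of by a separate linear-algebra argument. Everything else is routine bookkeeping around the projection $\pi_P$, the weight relation \eqref{eq:pesivett}, and the minimum-size bound \eqref{eq:boundsubgeomcontainedpre}.
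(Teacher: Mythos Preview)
The paper does not supply its own proof of Theorem~\ref{thm:newBound}; the result is quoted from \cite{SamPaolo,debeule2019theminimumsize}. Your argument---iterated projection from a weight-one point of the subgeometry, the fibre estimate via Theorem~\ref{th:minsize}, and the final appeal to \eqref{eq:boundsubgeomcontainedpre}---is precisely the method of those references, and the details you give (the modular-law computations showing that the subgeometry hypothesis descends under projection, and the separate treatment of the boundary case $w_Q=m$) are correct.

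One small point: in the equality analysis you conclude that $|L_{U^{(r)}}|=\frac{q^{k-r}-1}{q-1}$ forces $L_{U^{(r)}}$ to be a canonical subgeometry, citing this as ``the equality case of \eqref{eq:boundsubgeomcontainedpre}''. Inequality \eqref{eq:boundsubgeomcontainedpre} as stated in the paper carries no equality clause; the fact you need is a separate (true) statement---it is exactly \cite[Lemma~3.2]{SamPaolo}, which the paper itself invokes later in the proof of Theorem~5.1. So the mathematics is fine, but the justification should point to that lemma rather than to \eqref{eq:boundsubgeomcontainedpre}.
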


Moreover, the rank of a linear set is determined by its size and the minimum weight of its points, indeed the following holds.

\begin{proposition} \cite[Proposition 3.17]{SamPaolo}\label{pro:sizeminweight}
 Let $L_U$ be an $\fq$-linear set in $\PG(k-1,q^m)$, containing more than one point.
 Denote $e = \min_{P \in L_U} w_{L_U}(P)$.
 Then the rank of $L_U$ is the unique integer $n$ satisfying
 \[
  q^{n-e} < |L_U| \leq \frac{q^{n}-1}{q^e-1},
 \]
 i.e.\ $n = \lfloor \log_q(|L_U|) \rfloor + e$. Moreover, if $L_U$ spans the entire space then 
 \[
  q^{n-e} +\frac{q^{k-1}-1}{q-1} \leq  |L_U| \leq \frac{q^{n}-1}{q^e-1},
 \]
\end{proposition}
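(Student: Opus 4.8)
The plan is to derive the claimed inequalities from the weight-counting relations for linear sets, specialized to the situation where the minimum point-weight is $e$. Write $n$ for the rank of $L_U$ and recall from \eqref{eq:pesicard} and \eqref{eq:pesivett} that, denoting by $N_i$ the number of points of weight $i$, we have $|L_U| = \sum_{i=e}^{n} N_i$ and $\sum_{i=e}^{n} N_i \frac{q^i-1}{q-1} = \frac{q^n-1}{q-1}$. First I would establish the upper bound $|L_U| \leq \frac{q^n-1}{q^e-1}$: since every point has weight at least $e$, each contributes at least $\frac{q^e-1}{q-1}$ to the sum in \eqref{eq:pesivett}, so $|L_U| \cdot \frac{q^e-1}{q-1} \leq \sum_i N_i \frac{q^i-1}{q-1} = \frac{q^n-1}{q-1}$, which rearranges to the desired bound. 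This is the easy direction.

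Next I would establish the lower bound $|L_U| > q^{n-e}$. The idea is to pick a point $P_0$ of weight exactly $e$, i.e. with $\dim_{\fq}(U \cap W_0) = e$ where $W_0$ is the $1$-dimensional $\fqn$-subspace corresponding to $P_0$, and to consider the $\fq$-linear map $U \to U/(U\cap W_0)$, or equivalently project $U$ onto a complement of $W_0$. The image is an $\fq$-subspace of dimension $n-e$ inside $V/W_0 \cong \F_{q^m}^{k-1}$, and the cosets of $U\cap W_0$ in $U$ are in bijection with the $\frac{q^{n-e}-1}{q-1}$ one-dimensional subspaces of this image plus appropriate bookkeeping. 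The cleanest route: each point $Q \neq P_0$ of $L_U$ spans together with $P_0$ a line, and counting the points of $L_U$ on lines through $P_0$ via the projection shows $|L_U| - 1 \geq$ (number of points of the projected linear set) $\cdot$ (something), ultimately forcing $|L_U| \geq q^{n-e}+1 > q^{n-e}$; alternatively one invokes the standard fact that a linear set of rank $n$ all of whose points have weight $\geq e$ and which contains at least two points satisfies $|L_U| \geq q^{n-e}+1$ (the $e=1$ case being \eqref{eq:card}-type reasoning, the general case by the projection argument just sketched). Once both bounds are in place, the uniqueness of $n$ follows because the intervals $(q^{n-e}, \frac{q^n-1}{q^e-1}]$ for distinct $n$ are disjoint: indeed $\frac{q^n-1}{q^e-1} < q^{n-e+1} = q^{(n+1)-e}$, so no value of $|L_U|$ can lie in two such intervals, and $n = \lfloor \log_q |L_U| \rfloor + e$ reads off directly.

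Finally, for the refined lower bound under the hypothesis that $\langle L_U \rangle = \PG(k-1,q^m)$, I would combine the spanning condition with the weight estimate. Since $L_U$ spans the whole space, it contains $k$ independent points $P_1,\dots,P_k$; these already account for at least $\frac{q^k-1}{q-1}$ points, but that alone is too weak. Instead, fix a point $P_0$ of weight $e$ and project from $P_0$: the image $\bar{L}$ is an $\fq$-linear set of rank $n-e$ spanning $\PG(k-2,q^m)$, so by \eqref{eq:boundsubgeomcontainedpre} applied to $\bar L$ one gets $|\bar L| \geq \frac{q^{k-1}-1}{q-1}$. Pulling back, the lines of $\PG(k-1,q^m)$ through $P_0$ that meet $L_U$ in more than just $P_0$ correspond to points of $\bar L$, and on each such line $L_U$ meets it in a sub-linear-set containing $P_0$; a careful count of the fibers — using that the total rank is $n$ and $P_0$ absorbs weight $e$ — yields $|L_U| \geq q^{n-e} + \frac{q^{k-1}-1}{q-1}$. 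The main obstacle I anticipate is making this fiber count rigorous: one must control how the weight $n$ is distributed among the lines through $P_0$ and ensure the contributions from distinct lines do not overcount $P_0$ itself, which is exactly the kind of bookkeeping where the projection map $U \to U/(U\cap W_0)$ and its restriction to each two-dimensional $\fqn$-subspace through $W_0$ must be tracked explicitly. I would handle this by writing $U \cap (\text{line through } P_0) = (U\cap W_0) \oplus U_{\bar Q}$ for each $\bar Q \in \bar L$ and summing $\dim U_{\bar Q}$ over $\bar Q$, relating this to $n - e$ via the analogue of \eqref{eq:pesivett} for $\bar L$.
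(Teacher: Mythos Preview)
The paper does not contain a proof of this proposition; it is quoted verbatim from \cite[Proposition 3.17]{SamPaolo} as a preliminary result, so there is nothing in the present paper to compare your attempt against.

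For what it is worth, your outline is essentially the standard route. The upper bound via \eqref{eq:pesivett} is correct as written, and the disjointness of the intervals $(q^{n-e},\frac{q^n-1}{q^e-1}]$ yielding the formula for $n$ is fine. For the lower bound and its spanning refinement, the projection argument you sketch is exactly what is needed: projecting from a point $P_0$ of weight $e$ gives a linear set $L_{\bar U}$ of rank $n-e$ in $\PG(k-2,q^m)$, and on each line $\ell$ through $P_0$ the restriction $L_U\cap\ell$ is a rank $e+w_{L_{\bar U}}(\bar\ell)$ linear set in $\PG(1,q^m)$ with minimum weight $e$; summing the identity $\sum_{\bar\ell}(q^{w_{L_{\bar U}}(\bar\ell)}-1)=q^{n-e}-1$ together with $|L_{\bar U}|\geq\frac{q^{k-1}-1}{q-1}$ from \eqref{eq:boundsubgeomcontainedpre} gives the refined bound. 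One genuine gap to flag: your sentence ``alternatively one invokes the standard fact that a linear set of rank $n$ all of whose points have weight $\geq e$ \dots\ satisfies $|L_U|\geq q^{n-e}+1$'' is circular --- that \emph{is} the lower bound being proved. The honest base case is the $e=1$ statement on $\PG(1,q^m)$, i.e.\ Theorem~\ref{th:minsize}, and the general $e$ in $\PG(1,q^m)$ follows by passing to a subspace $U'\subset U$ of dimension $n-e+1$ chosen so that $w_{L_{U'}}(P_0)=1$; then $L_{U'}\subseteq L_U$ and Theorem~\ref{th:minsize} applied to $U'$ gives $|L_U|\geq|L_{U'}|\geq q^{n-e}+1$.
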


\subsection{Duality of linear sets and $\F_q$-subspaces}\label{sec:Dualityfq-sub}

Now, we recall the notion of the dual of a linear set.
Let $\sigma \colon V \times V \rightarrow \F_{q^m}$ be a nondegenerate reflexive sesquilinear form on the $k$-dimensional $\F_{q^m}$-vector space $V$ and consider \[
\begin{array}{cccc}
    \sigma': & V \times V & \longrightarrow & \F_q  \\
     & (x,y) & \longmapsto & \mathrm{Tr}_{q^m/q} (\sigma(x,y)).
\end{array}
\] 
So, $\sigma'$ is a nondegenerate reflexive sesquilinear form on $V$ seen as an $\fq$-vector space of dimension $km$. Then we may consider $\perp$ and $\perp'$ as the orthogonal complement maps defined by $\sigma$ and $\sigma'$, respectively, and $\tau$ and $\tau'$ as the polarities of $\PG(V,\F_{q^m})$ and $\PG(V,\fq)$ induced by $\sigma$ and $\sigma'$, respectively. For an $\F_q$-linear set in $\PG(V,\F_{q^m})$ of rank $n$, the $\F_q$-linear set $L_U^\tau=L_{U^{\perp'}}$ in $\PG(V,\F_{q^m})$ of rank $km-n$ is called the \textbf{dual linear set of $L_U$} with respect to the polarity $\tau$.
The definition of the dual linear set does not depend on the choice of the polarity.
Indeed, in \cite[Proposition 2.5]{polverino2010linear} is proved that, if $\tau_1$ and $\tau_2$ are two polarities and $\perp_1'$ and $\perp_2'$ are the orthogonal complement maps as above, then the dual linear sets $U^{\perp_1'}$ and $U^{\perp_2'}$ are $\mathrm{\Gamma L}(k,q^m)$-equivalent.

Moreover, we have the following relation between the weight of a subspace with respect to a linear set and the weight of its polar space with respect to the dual linear set. This property mainly relies on the fact that for any $\F_{q^m}$-subspace of $V$ holds $W^{\perp'}=W^{\perp}$.

\begin{proposition} \cite[Property 2.6]{polverino2010linear}\label{prop:weightdual}
Let $L_U \subseteq \PG(k-1,q^m)$ be an $\fq$-linear set of rank $n$. Let $\Omega_s=\PG(W,\F_{q^m})$ be an $s$-space of $\PG(k-1,q^m)$.
Then 
\[
w_{L_U^{\tau}}(\Omega_s^{\tau})=w_{L_U}(\Omega_s)+km-n-(s+1)m,
\]
i.e.
\[ \dim_{\fq}(U^{\perp'}\cap W^{\perp})=\dim_{\fq}(U\cap W)+\dim_{\fq}(V)-\dim_{\fq}(U)-\dim_{\fq}(W). \]
\end{proposition}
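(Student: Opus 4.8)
The final statement to prove is Proposition~\ref{prop:weightdual}, the duality relation for weights of subspaces with respect to a linear set and its dual. Since the proposition itself records that the statement ``mainly relies on the fact that $W^{\perp'}=W^{\perp}$ for any $\F_{q^m}$-subspace $W$ of $V$,'' the plan is to isolate that fact first and then reduce everything to a dimension count over $\fq$.

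\textbf{Step 1: Reduce to the equivalent $\fq$-dimension identity.} By definition of the dual linear set, $L_U^\tau = L_{U^{\perp'}}$ has rank $km-n$, and $\Omega_s^\tau = \PG(W^{\perp},\F_{q^m})$ since $\tau$ is induced by $\sigma$. Unwinding the definition of weight, $w_{L_U^\tau}(\Omega_s^\tau) = \dim_{\fq}(U^{\perp'}\cap W^{\perp})$ while $w_{L_U}(\Omega_s) = \dim_{\fq}(U\cap W)$. With $\dim_{\fq}V = km$, $\dim_{\fq}U = n$, $\dim_{\fq}W = (s+1)m$, the claimed scalar identity $w_{L_U^\tau}(\Omega_s^\tau) = w_{L_U}(\Omega_s) + km - n - (s+1)m$ is exactly the second displayed equation, namely $\dim_{\fq}(U^{\perp'}\cap W^{\perp}) = \dim_{\fq}(U\cap W) + \dim_{\fq}V - \dim_{\fq}U - \dim_{\fq}W$. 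So it suffices to prove this last identity.

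\textbf{Step 2: Establish $W^{\perp'} = W^{\perp}$.} For an $\F_{q^m}$-subspace $W$, the two orthogonal complements coincide: the inclusion $W^{\perp}\subseteq W^{\perp'}$ is immediate because $\sigma(x,w)=0$ forces $\mathrm{Tr}_{q^m/q}(\sigma(\lambda x,w)) = \mathrm{Tr}_{q^m/q}(\lambda\sigma(x,w)) = 0$ for all $\lambda\in\F_{q^m}$, so $x\in W^{\perp}$ implies $x\in W^{\perp'}$; conversely, if $x\in W^{\perp'}$ then $\mathrm{Tr}_{q^m/q}(\lambda\sigma(x,w))=0$ for all $\lambda$ and all $w\in W$ (using that $W$ is $\F_{q^m}$-closed), and nondegeneracy of the trace form forces $\sigma(x,w)=0$, i.e. $x\in W^\perp$. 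A dimension check ($\dim_{\fq}W^{\perp'} = km - (s+1)m = \dim_{\fq}W^{\perp}$) confirms equality. This is the only genuinely structural point, and it is essentially routine; I expect the \emph{main obstacle}, such as it is, to be stating this cleanly rather than any deep difficulty.

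\textbf{Step 3: Dimension count.} Now apply the Grassmann identity over $\fq$ to the $\fq$-subspaces $U^{\perp'}$ and $W^{\perp} = W^{\perp'}$ inside $V$:
\[
\dim_{\fq}(U^{\perp'}\cap W^{\perp'}) = \dim_{\fq}U^{\perp'} + \dim_{\fq}W^{\perp'} - \dim_{\fq}(U^{\perp'} + W^{\perp'}).
\]
Since $\sigma'$ is nondegenerate on $V$ as an $\fq$-space, $\dim_{\fq}U^{\perp'} = km - n$, $\dim_{\fq}W^{\perp'} = km - (s+1)m$, and $U^{\perp'}+W^{\perp'} = (U\cap W)^{\perp'}$, whose $\fq$-dimension is $km - \dim_{\fq}(U\cap W)$. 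Substituting these three values and simplifying yields $\dim_{\fq}(U^{\perp'}\cap W^{\perp}) = \dim_{\fq}(U\cap W) + km - n - (s+1)m$, which is the desired identity; translating back via Step~1 gives the stated weight relation. Note the argument only uses that $W^{\perp'}$ has the ``right'' codimension and that the complement map $\perp'$ reverses sums and intersections for $\fq$-subspaces (a formal consequence of nondegeneracy of $\sigma'$), so one could even bypass Step~2 and phrase the whole thing purely in terms of $\perp'$; invoking $W^{\perp'}=W^{\perp}$ is only needed to match the polarity $\tau$ used to define $\Omega_s^\tau$.
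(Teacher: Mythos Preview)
Your proof is correct and follows precisely the approach the paper indicates: the paper does not supply its own proof of this proposition but cites \cite[Property 2.6]{polverino2010linear} and remarks that it ``mainly relies on the fact that for any $\F_{q^m}$-subspace of $V$ holds $W^{\perp'}=W^{\perp}$,'' which is exactly your Step~2, after which your Grassmann-type dimension count in Step~3 is the standard way to finish. One tiny cosmetic point: in Step~2 the inclusion $W^{\perp}\subseteq W^{\perp'}$ is immediate from $\sigma'(x,w)=\mathrm{Tr}_{q^m/q}(\sigma(x,w))=\mathrm{Tr}_{q^m/q}(0)=0$, so the detour through $\lambda x$ is unnecessary (and in the converse, since $\sigma$ is only assumed sesquilinear, $\sigma(x,\lambda w)$ equals $\lambda^{\theta}\sigma(x,w)$ for some field automorphism $\theta$ rather than $\lambda\sigma(x,w)$; the conclusion is unaffected because $\lambda\mapsto\lambda^{\theta}$ is a bijection of $\F_{q^m}$).
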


We can characterize maximum scattered linear sets via its dual.

\begin{proposition}\cite[Theorem 3.5]{polverino2010linear}\label{prop:scattdual}
Let $L_U \subseteq \PG(k-1,q^m)$ be an $\fq$-linear set of rank $mk/2$. Then $L_U$ is scattered if and only if $L_{U^{\perp'}}$ is scattered.
\end{proposition}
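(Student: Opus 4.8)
The final statement to prove is Proposition~\ref{prop:scattdual}: for an $\fq$-linear set $L_U \subseteq \PG(k-1,q^m)$ of rank $mk/2$, $L_U$ is scattered if and only if $L_{U^{\perp'}}$ is scattered. The plan is to run everything through the weight-duality formula of Proposition~\ref{prop:weightdual} applied to points, together with the rank count $km - n = km - mk/2 = mk/2$, so that the dual linear set also has rank $mk/2$ and the situation is symmetric; hence it suffices to prove one implication.

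First I would set $n = mk/2$ and record that $\dim_{\fq} V = km$, $\dim_{\fq} U = n = mk/2$, and $\dim_{\fq} U^{\perp'} = km - n = mk/2$, so $L_{U^{\perp'}}$ is again a rank-$mk/2$ linear set in $\PG(k-1,q^m)$ and the roles of $U$ and $U^{\perp'}$ are interchangeable (using $(U^{\perp'})^{\perp'} = U$). Next, suppose $L_U$ is scattered, i.e.\ every point $P = \PG(W,\F_{q^m})$ of $L_U$ has $w_{L_U}(P) = 1$, where $W$ is the corresponding $1$-dimensional $\F_{q^m}$-subspace, so $\dim_{\fq} W = m$ and $s = 0$. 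I want to show every point of $\PG(k-1,q^m)$ has weight $0$ or $1$ in $L_{U^{\perp'}}$. Take any point $Q = \PG(W',\F_{q^m})$ with $\dim_{\fq} W' = m$; its polar image $Q^\tau = \PG((W')^{\perp},\F_{q^m})$ is a hyperplane, i.e.\ an $(k-2)$-space, and by Proposition~\ref{prop:weightdual} (reading it in the direction from $L_{U^{\perp'}}$ back to $L_U$, using that $\tau$ is an involution) one gets a linear relation of the form
\[
w_{L_{U^{\perp'}}}(Q) = w_{L_U}(Q^{\tau}) + \big(km - n\big) - \big((k-1)\cdot\text{(something)}\big),
\]
which after substituting $km-n = mk/2 = n$ and the dimension of the hyperplane $Q^\tau$ collapses to the clean identity $w_{L_{U^{\perp'}}}(Q) = w_{L_U}(Q^{\tau})$. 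So a point of the dual has weight $i$ exactly when the corresponding hyperplane has weight $i$ in $L_U$; equivalently $w_{L_{U^{\perp'}}}(Q) \le 1$ for all $Q$ iff every hyperplane of $\PG(k-1,q^m)$ meets $U$ in an $\fq$-subspace of dimension at most... — here I must be careful, since the weight of a hyperplane in a scattered set is typically much larger than $1$. This tells me the correct route is not to compare a point of the dual with a hyperplane of $L_U$, but to compare a point of the dual with a point of $L_U$ directly via a \emph{chain}: apply Proposition~\ref{prop:weightdual} with $\Omega_s$ a point (so $s=0$), obtaining $w_{L_U^\tau}(\Omega_0^\tau) = w_{L_U}(\Omega_0) + km - n - m$, and then dualize the hyperplane $\Omega_0^\tau$ again; the cleanest formulation is to iterate the formula so that a hyperplane weight is expressed through a point weight of the double dual $= L_U$.

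Concretely, the key computation is: for a point $Q$ of the dual, write $w_{L_{U^{\perp'}}}(Q)$ using Proposition~\ref{prop:weightdual} applied to the pair $(L_{U^{\perp'}}, Q^{\tau})$ where $Q^\tau$ is a hyperplane and $(L_{U^{\perp'}})^\tau = L_U$; with $s = k-2$ and rank of $L_{U^{\perp'}}$ equal to $mk/2$ this gives $w_{L_U}(Q^{\tau\tau}) = w_{L_U}(Q)$ on the left after simplification — no, the honest move is simply: $L_U$ scattered $\iff$ every point has weight $1$ $\iff$ (by \eqref{eq:card} and \eqref{eq:pesicard}) $|L_U| = (q^n-1)/(q-1)$ with $n = mk/2$; I would instead argue via weights of hyperplanes. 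A point $Q$ has weight $\ge 2$ in $L_{U^{\perp'}}$ iff $\dim_{\fq}(U^{\perp'}\cap W') \ge 2$ where $\dim_{\fq} W' = m$; by the dimension formula $\dim_{\fq}(U^{\perp'}\cap W') = \dim_{\fq}(U \cap (W')^{\perp}) + \dim_{\fq}(U^{\perp'}) + \dim_{\fq}((W')^{\perp}) - \dim_{\fq}V$ applied with the roles reversed, and $(W')^\perp$ is an $\F_{q^m}$-hyperplane of $\F_{q^m}^k$ of $\fq$-dimension $m(k-1)$. Plugging in $\dim_{\fq} U^{\perp'} = mk/2$, $\dim_{\fq} V = mk$ gives $\dim_{\fq}(U^{\perp'}\cap W') = \dim_{\fq}(U\cap (W')^{\perp}) + mk/2 + m(k-1) - mk = \dim_{\fq}(U\cap(W')^{\perp}) - mk/2 + m(k-1)$. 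So $w_{L_{U^{\perp'}}}(Q) \le 1 \iff \dim_{\fq}(U\cap(W')^{\perp}) \le mk/2 - m(k-1) + 1 = m(2-k)/2 + 1$ — this is exactly the statement that the $\F_{q^m}$-hyperplane $(W')^\perp$ meets $U$ in the minimal possible $\fq$-dimension, which is precisely one of the standard equivalent characterizations of $L_U$ being maximum scattered (a rank-$mk/2$ linear set is scattered iff every hyperplane meets $U$ in dimension exactly $m(k-2)/2$, by counting: $N_1 = |L_U|$, and using \eqref{eq:pesivett} and the hyperplane-intersection count). The main obstacle is getting this last equivalence stated and justified crisply — that a rank-$mk/2$ linear set is scattered iff all its hyperplane weights equal $m(k-2)/2$ — which is a counting argument over hyperplanes (or, cleanly, is itself just Proposition~\ref{prop:weightdual} applied once more, turning "all points of $L_U$ have weight $1$" into "all hyperplanes of $L_U^\tau$ have weight $m(k-2)/2$", and then observing the polarity is a bijection on hyperplanes). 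I would therefore present the proof as: (i) compute that $L_{U^{\perp'}}$ has rank $mk/2$; (ii) show, via Proposition~\ref{prop:weightdual} applied to points, that $w_{L_{U^{\perp'}}}(Q) = 1$ for all points $Q$ iff all hyperplanes of $\PG(k-1,q^m)$ have weight $m(k-2)/2$ in $L_U$; (iii) show the latter is equivalent to $L_U$ scattered by a direct double-counting of incident (point, hyperplane) pairs using \eqref{eq:pesicard} and \eqref{eq:pesivett}, or by symmetry invoking step (ii) in the reverse direction; (iv) conclude by the symmetry $U \leftrightarrow U^{\perp'}$.
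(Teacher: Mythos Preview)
The paper does not prove this proposition; it is quoted verbatim from \cite[Theorem 3.5]{polverino2010linear} with no argument given. So there is nothing to compare against, and the question is simply whether your sketch stands on its own.

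Your reduction via Proposition~\ref{prop:weightdual} is the right starting move, but there are two problems.

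First, a sign error. With $W'$ a one-dimensional $\F_{q^m}$-subspace and $H=(W')^\perp$ a hyperplane, the dimension formula gives
\[
\dim_{\fq}(U^{\perp'}\cap W') \;=\; \dim_{\fq}(U\cap H) + mk - \tfrac{mk}{2} - m(k-1) \;=\; \dim_{\fq}(U\cap H) - \tfrac{m(k-2)}{2},
\]
not $+\,m(k-2)/2$ as you wrote. So the correct equivalence is $w_{L_{U^{\perp'}}}(Q)\le 1 \iff \dim_{\fq}(U\cap H)\le \tfrac{m(k-2)}{2}+1$, which is at least a sensible bound (your expression $m(2-k)/2+1$ is negative for $k\ge 3$).

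Second, and more seriously, the logical core is missing. After the reduction you need the implication
\[
\text{all points of }L_U\text{ have weight }\le 1 \quad\Longrightarrow\quad \text{all hyperplanes of }L_U\text{ have weight }\le \tfrac{m(k-2)}{2}+1,
\]
and neither of your two suggestions establishes it. Option (b), ``apply Proposition~\ref{prop:weightdual} once more'', only tells you that all point weights of $L_U$ are $\le 1$ iff all hyperplane weights of $L_{U^{\perp'}}$ are $\le \tfrac{m(k-2)}{2}+1$; combined with step~(ii) you get two equivalences that do not chain, so the argument is circular. Option (a), ``double-count point--hyperplane incidences using \eqref{eq:pesicard} and \eqref{eq:pesivett}'', is not spelled out, and those two identities concern point weights only; a first-moment count over hyperplanes gives an average constraint, not the pointwise upper bound you need. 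This hyperplane bound for maximum scattered linear sets is exactly the nontrivial content of \cite[Theorem 3.5]{polverino2010linear}, and it does require an additional idea beyond Proposition~\ref{prop:weightdual}.
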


\subsection{Geometric dual of a rank metric code}

We recall an operation recently introduced on rank metric codes called  \emph{geometric dual}, which takes any element in $\mathfrak{C}[n,k,d]_{q^m/q}$ and gives another element in $\mathfrak{C}[mk-n,k,d']_{q^m/q}$. 

\begin{definition}
Let $\C$ be a non-degenerate $[n,k,d]_{q^m/q}$ and let $U$ be a system associated with $\C$.
Suppose also that $d_{k-1}^{\mathrm{rk}}(\C)\geq n-m+1$.
Then a \textbf{geometric dual} $\C^{\perp_{\mathcal{G}}}$ of $\C$ (with respect to $\perp'$) is defined as $\C'$, where $\C'$ is any code associated with the system $U^{\perp'}$, where $\perp'$ is defined as in Section \ref{sec:Dualityfq-sub}.
\end{definition}

This definition is justified by the following result.

\begin{theorem}\cite{BoreZull0}
Let $\C$ be an $[n,k,d]_{q^m/q}$ code, and let $U$ be a system associated with $\C$.
Suppose also that $d_{k-1}^{\mathrm{rk}}(\C)\geq n-m+1$. Then, up to equivalence, a geometric dual $\C^{\perp_{\mathcal{G}}}$ of $\C$ does not depend on the choice of the associated system and on the choice of code in $[\C]$, hence $\perp_{\mathcal{G}}$ is well-defined.
Moreover, $[\C^{\perp_{\mathcal{G}}}] \in \mathfrak{C}[km-n,k,d']_{q^m/q}$ for some $d'$.
\end{theorem}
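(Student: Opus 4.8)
The plan is to transfer the statement, via the code--system dictionary of Theorem~\ref{thm:1-1correspondence}, to a statement about an associated $\fq$-subspace $U\subseteq\F_{q^m}^k$ and the orthogonal-complement map $\perp'$: first one checks that $U^{\perp'}$ is again a system with the announced parameters, and then that the equivalence class of the code it defines is unchanged when $U$ is replaced by any $\GL(k,q^m)$-equivalent subspace. As a preliminary I would rewrite the hypothesis: an $\F_{q^m}$-subspace of codimension $k-1$ in $\F_{q^m}^k$ is $1$-dimensional, i.e.\ a point of $\PG(k-1,q^m)$, so that $d_{k-1}^{\mathrm{rk}}(\C)=n-\max\{w_{L_U}(P)\colon P\in\PG(k-1,q^m)\}$; hence $d_{k-1}^{\mathrm{rk}}(\C)\geq n-m+1$ says exactly that every point of $L_U$ has weight at most $m-1$, equivalently that $U$ contains no $1$-dimensional $\F_{q^m}$-subspace.

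The heart of the argument is that $U^{\perp'}$ is an $[km-n,k,d']_{q^m/q}$ system. Its $\fq$-dimension is $km-n$ because $\sigma'$ is non-degenerate. To show it spans $\F_{q^m}^k$, I would argue by contradiction: if not, $U^{\perp'}$ lies in some $\F_{q^m}$-hyperplane $W$; since $\sigma'$ is non-degenerate, $\perp'$ is an involution on $\fq$-subspaces, so $W^{\perp'}\subseteq (U^{\perp'})^{\perp'}=U$; but $W$ is an $\F_{q^m}$-subspace, so $W^{\perp'}=W^{\perp}$ is a $1$-dimensional $\F_{q^m}$-subspace contained in $U$, contradicting the reformulated hypothesis. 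Thus $U^{\perp'}$ is a system, and by Theorem~\ref{thm:1-1correspondence} every code associated with it is a non-degenerate $[km-n,k,d']_{q^m/q}$ code for some $d'$, which is the last assertion. (Incidentally, applying Proposition~\ref{prop:weightdual} to the pairs point/polar-hyperplane gives $\max_H w_{L_{U^{\perp'}}}(H)=\max_P w_{L_U}(P)+km-n-m$, whence $d'=m-n+d_{k-1}^{\mathrm{rk}}(\C)$; so the hypothesis is precisely the requirement $d'\geq 1$.)

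For the well-posedness, note that every system obtainable from $[\C]$ --- choosing any code $\C'=\C A$ in the class ($A\in\GL(n,q)$) and any generator matrix of it --- has the form $B(U)$ for some $B\in\GL(k,q^m)$: the $\GL(n,q)$-part recombines the columns over $\fq$ and so does not change their $\fq$-span, while changing the generator matrix of a fixed code replaces $\mathrm{colsp}(G)$ by $B\,(\mathrm{colsp}(G))$. This is the well-posedness of $\Psi$ in Theorem~\ref{thm:1-1correspondence}. It therefore suffices to prove that $\perp'$ respects $\GL(k,q^m)$-equivalence, i.e.\ that $\varphi(U)^{\perp'}$ is $\GL(k,q^m)$-equivalent to $U^{\perp'}$ for $\varphi\in\GL(k,q^m)$. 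For this I would use that the $\sigma$-adjoint $\varphi^*$ of $\varphi$ is again $\F_{q^m}$-linear and, by the defining relation $\sigma'(x,y)=\Tr_{q^m/q}(\sigma(x,y))$, is also the $\sigma'$-adjoint of $\varphi$; a short computation then gives $\varphi(U)^{\perp'}=(\varphi^*)^{-1}(U^{\perp'})$ with $(\varphi^*)^{-1}\in\GL(k,q^m)$. Applying Theorem~\ref{thm:1-1correspondence} again, $\GL(k,q^m)$-equivalent systems yield linearly equivalent codes, so $[\C^{\perp_{\mathcal{G}}}]$ depends only on $[\C]$; independence of the chosen polarity $\perp'$ is covered by \cite[Proposition~2.5]{polverino2010linear}.

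I expect the only real difficulty to be the middle step, namely realizing that $d_{k-1}^{\mathrm{rk}}(\C)\geq n-m+1$ is not a technical convenience but exactly the condition forcing $U^{\perp'}$ to span the whole space, so that a geometric dual exists at all; the mechanism is the involutory behaviour of $\perp'$ on $\fq$-subspaces together with the identity $W^{\perp'}=W^{\perp}$ for $\F_{q^m}$-subspaces $W$. Everything else is bookkeeping with the code--system correspondence and the standard fact that reflexive non-degenerate sesquilinear forms admit $\F_{q^m}$-linear adjoints.
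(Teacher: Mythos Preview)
Your argument is correct. Note, however, that the paper does not supply its own proof of this statement: it is quoted from \cite{BoreZull0}. The only proof-like content in the present paper is Remark~\ref{rk:geogenrank}, which records exactly the geometric observation you use as the ``heart of the argument'', namely that $U^{\perp'}$ is a system if and only if $U$ contains no $1$-dimensional $\F_{q^m}$-subspace, and that this is equivalent to $d_{k-1}^{\mathrm{rk}}(\C)\geq n-m+1$. Your write-up supplies the details the remark omits (the involutory behaviour of $\perp'$, the identity $W^{\perp'}=W^{\perp}$ for $\F_{q^m}$-subspaces, and the adjoint computation showing $\varphi(U)^{\perp'}=(\varphi^{*})^{-1}(U^{\perp'})$ for $\varphi\in\GL(k,q^m)$), so your approach is the natural completion of the sketch in the paper and is consistent with it.
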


\begin{remark}\label{rk:geogenrank}
Note that $U^{\perp'}$ is a system if and only if  $U^{\perp'}$ is not contained in any hyperplane of $\F_{q^m}^k$, which dually corresponds to $U$ not containing any $1$-dimensional $\F_{q^m}$-subspace of $\F_{q^m}^k$. By \eqref{eq:defgenrankweight}, this corresponds to require that $d_{k-1}^{\mathrm{rk}}(\C)\geq n-m+1$.
\end{remark}

This operation has been exploited  in \cite{BoreZull0} in the context of sum-rank metric codes.

\section{Bounds on the number of maximum weight codewords}\label{sec:bounds}

In this section we give upper and lower bounds on $M(\C)$, by using old and new bounds on linear sets.
In order to clarify the techniques and the arguments involved, we divide the analysis according to whether $\dim_{\F_{q^m}}(\C)$ is two or greater than two and $n\leq m$ or $n\geq m$.

\subsection{Dimension two case and $n\leq m$}

We start by describing the geometric meaning of $M(\C)$.

\begin{proposition}\label{prop:points-codewords2}
Let $\C$ be a non-degenerate $[n,2]_{q^m/q}$ code and let $U$ be any of its associated system.
Assume that $n\leq m$. Then 
\[M(\C)=(q^m-1) |\mathrm{PG}(1,q^m)\setminus L_U|.\]
\end{proposition}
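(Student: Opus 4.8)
The plan is to count codewords of maximum weight $n$ (recall $n \le m$, so $\min\{m,n\} = n$) by translating the weight condition on a codeword into an incidence condition in $\PG(1,q^m)$ via the code-system correspondence (Theorem~\ref{th:connection}). Write $\C = \{xG : x \in \F_{q^m}^2\}$ for a generator matrix $G$ whose columns span the associated system $U \subseteq \F_{q^m}^2$. By \eqref{eq:relweight}, for $x = (x_1,x_2) \in \F_{q^m}^2$ we have $w(xG) = n - \dim_{\fq}(U \cap x^\perp)$, where $x^\perp = \{y \in \F_{q^m}^2 : x_1y_1 + x_2y_2 = 0\}$. Since $k = 2$, for $x \ne 0$ the subspace $x^\perp$ is a $1$-dimensional $\F_{q^m}$-subspace, i.e.\ (projectively) a point $P_x = \langle x^\perp \rangle_{\F_{q^m}}$ of $\PG(1,q^m)$. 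Thus $w(xG) = n$ if and only if $\dim_{\fq}(U \cap x^\perp) = 0$, that is, $w_{L_U}(P_x) = 0$, i.e.\ $P_x \notin L_U$.

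Next I would make the counting precise. The nonzero codewords are indexed by $x \in \F_{q^m}^2 \setminus \{0\}$, but distinct $x$ give distinct codewords since $G$ has rank $2$; so there are $q^{2m} - 1$ nonzero codewords. The map $x \mapsto x^\perp$ sends $\F_{q^m}^2 \setminus \{0\}$ onto the $1$-dimensional $\F_{q^m}$-subspaces of $\F_{q^m}^2$, and the fibre over each such subspace has size $q^m - 1$ (the scalar multiples of a fixed $x$); equivalently $x \mapsto P_x$ is a $(q^m-1)$-to-$1$ map onto the points of $\PG(1,q^m)$. Combining with the previous paragraph, the codeword $xG$ has weight $n$ exactly when $P_x$ lies in $\PG(1,q^m) \setminus L_U$, and each such point is hit by exactly $q^m - 1$ values of $x$. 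Hence
\[
M(\C) = (q^m - 1)\,\bigl|\PG(1,q^m) \setminus L_U\bigr|,
\]
noting in passing that the zero codeword has weight $0 \ne n$ (as $n \ge 1$) and so is correctly excluded. One should also remark that since $\C$ is non-degenerate, $U$ is indeed a system (spans $\F_{q^m}^2$), so $L_U$ and the correspondence are well-defined, and the count is independent of the choice of associated system $U$ because different choices are $\GL(2,q^m)$-equivalent and this equivalence fixes $|\PG(1,q^m) \setminus L_U|$.

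The only genuine point requiring care — rather than a real obstacle — is the duality bookkeeping: verifying that "$\dim_{\fq}(U \cap x^\perp) = 0$" is the correct rephrasing of "$w_{L_U}(P_x) = 0$", which is immediate once one observes that $x^\perp$ is precisely the $1$-dimensional $\F_{q^m}$-subspace $W$ with $P_x = \PG(W, \F_{q^m})$, so $U \cap x^\perp = U \cap W$ and $w_{L_U}(P_x) = \dim_{\fq}(U \cap W)$ by the definition of weight of a subspace in a linear set. Everything else is the elementary fibre count above, and the hypothesis $n \le m$ is used only to identify $\min\{m,n\}$ with $n$.
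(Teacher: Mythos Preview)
Your proof is correct and follows essentially the same approach as the paper: use \eqref{eq:relweight} to identify codewords of weight $n$ with nonzero $x$ whose associated point $P_x = \langle x^\perp\rangle_{\F_{q^m}}$ lies outside $L_U$, then count fibres of size $q^m-1$. The paper's version is simply terser, omitting the explicit fibre count and the remarks on non-degeneracy and independence of the choice of $U$.
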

\begin{proof}
Let $G$ be a generator matrix of $\C$ whose $\F_q$-span of its columns is $U$. Then by \eqref{eq:relweight}, for any nonzero $u \in \F_{q^m}^2$, we have
\[w_{L_U}(P)=n-w(uG),\]
where $P=\langle u^{\perp} \rangle_{\F_{q^m}} \in \PG(1,q^m)$
and hence the assertion follows from the fact that $P \notin L_U$ if and only if $w_{L_{U}}(P)=0$.
\end{proof}

Now, we are ready to give bounds on $M(\C)$.

\begin{theorem}\label{thm:boundsMC21}
Let $\C$ be a non-degenerate $[n,2]_{q^m/q}$ code and assume that $n\leq m$.
Then
\begin{equation}\label{eq:boundmC2}
q^{2m}-1-(q^m-1)  \frac{q^n-1}{q-1} \leq M(\C) \leq q^{2m}-1-(q^m-1)(q+1) .
\end{equation}
Moreover, if $\C$ contains a codeword of weight $n-1$,
\begin{equation}\label{eq:boundmC2scatt1}
M(\C) \leq q^{2m}-1-(q^m-1)(q^{n-1}+1),
\end{equation}
that is
\[
q^{2m}-q^{m+n-1}-\ldots -q^m +q^{n-1}+\ldots+q \leq M(\C) \leq q^{2m}-q^{m+n-1}-q^m+q^{n-1}.
\]
\end{theorem}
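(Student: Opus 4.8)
The plan is to combine Proposition~\ref{prop:points-codewords2} with the classical size bounds for linear sets on a projective line. Let $U$ be a system associated with $\C$. Since $\C$ is non-degenerate, $L_U$ is an $\fq$-linear set of rank $n$ in $\PG(1,q^m)$ spanning the whole line, and since the dimension is $k=2$ we automatically have $n\geq 2$, hence $1<n\leq m$. By Proposition~\ref{prop:points-codewords2},
\[
M(\C)=(q^m-1)\,|\PG(1,q^m)\setminus L_U|=(q^m-1)(q^m+1-|L_U|)=(q^{2m}-1)-(q^m-1)|L_U|,
\]
so that every bound on $|L_U|$ translates directly into a bound on $M(\C)$, with the sense of the inequality reversed. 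This identity is the only structural input needed; the rest is invoking the appropriate linear-set estimates.

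First I would prove the general bounds \eqref{eq:boundmC2}. The upper bound $|L_U|\leq \frac{q^n-1}{q-1}$ is \eqref{eq:card}, which gives $M(\C)\geq q^{2m}-1-(q^m-1)\frac{q^n-1}{q-1}$. For the lower bound on $|L_U|$, note that $\langle L_U\rangle=\PG(1,q^m)$, so \eqref{eq:boundsubgeomcontainedpre} applied with $k=2$ yields $|L_U|\geq \frac{q^2-1}{q-1}=q+1$, hence $M(\C)\leq q^{2m}-1-(q^m-1)(q+1)$.

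Next I would handle the refinement \eqref{eq:boundmC2scatt1}. By \eqref{eq:relweight} (equivalently, through the correspondence of Theorem~\ref{th:connection}), a codeword of $\C$ of weight $n-1$ corresponds to a point $P\in\PG(1,q^m)$ with $w_{L_U}(P)=n-(n-1)=1$; thus the hypothesis exactly says that $L_U$ contains a point of weight $1$. Since $1<n\leq m$, Theorem~\ref{th:minsize} gives $|L_U|\geq q^{n-1}+1$, whence $M(\C)\leq q^{2m}-1-(q^m-1)(q^{n-1}+1)$. Finally, the two displayed ``expanded'' inequalities are merely the lower bound of \eqref{eq:boundmC2} and the bound \eqref{eq:boundmC2scatt1} rewritten, using $(q^m-1)\frac{q^n-1}{q-1}=(q^m-1)(q^{n-1}+\cdots+q+1)$ and $(q^m-1)(q^{n-1}+1)$ and collecting terms.

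I do not expect a genuine obstacle: the argument is essentially a translation of known linear-set bounds through Proposition~\ref{prop:points-codewords2}. The only points requiring a little care are verifying that the hypotheses of Theorem~\ref{th:minsize} are met (that $n>1$, which follows from non-degeneracy in dimension $2$, and the passage from a weight-$(n-1)$ codeword to a weight-$1$ point of $L_U$), and the routine bookkeeping in the final algebraic expansion.
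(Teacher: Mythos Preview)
Your proof is correct and follows essentially the same approach as the paper: translate $M(\C)$ into a count of points outside $L_U$ via Proposition~\ref{prop:points-codewords2}, and then invoke \eqref{eq:card} and Theorem~\ref{th:minsize}. The only cosmetic difference is that for the bound $|L_U|\geq q+1$ the paper uses the congruence \eqref{eq:cong1} together with the observation that $L_U$ is not a single point, whereas you use \eqref{eq:boundsubgeomcontainedpre}; both are immediate.
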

\begin{proof}
Let $U$ be any associated system with $\C$. Since $\langle U \rangle_{\F_{q^m}}=\mathbb{F}_{q^m}^2$ then $L_U$ cannot be a point and by \eqref{eq:cong1} we have
\[|L_U| \geq q+1.\]
Moreover, by \eqref{eq:card} 
\[|L_U| \leq \frac{q^n-1}{q-1}.\]
Therefore, by Proposition \ref{prop:points-codewords2} the first part of the assertion follows.
Now, let $G$ be any generator matrix of $\C$ and assume that $\C$ contains a codeword $uG$ of weight $n-1$. \eqref{eq:relweight} implies that there exists a point $P \in L_U$ with $w_{L_U}(P)=1$. By Theorem \ref{th:minsize}, we have
\[|L_U| \geq q^{n-1}+1\]
and hence again Proposition \ref{prop:points-codewords2} provides the upper bound on $M(\C)$.
\end{proof}

The above bounds \eqref{eq:boundmC2} can be improved once we know the second maximum weight of the code, extending the second part of the above theorem.

\begin{theorem}\label{thm:2dim2}
Let $\C$ be a non-degenerate $[n,2]_{q^m/q}$ code. Assume that $n\leq m$ and that $n-e$ is the second maximum weight. Then
\[q^{2m}-1-(q^m-1)\frac{q^n-1}{q^e-1} \leq M(\C) \leq q^{2m}-1-(q^m-1)(q^{n-e}+1),\]
%that is
%\[
%q^{2m}-q^{m+n-e}-\ldots -q^m +q^{n-e}+\ldots+q^e \leq M(\C) \leq q^{2m}-q^{m+n-e}-q^m+q^{n-e},
%\]
i.e. $n =  \lfloor \log_q(q^m+1-\frac{M(\C)}{q^m-1}) \rfloor + e$. \\
Moreover, if $m=n$ then $e \mid m$.
%\textcolor{blue}{Scrivo un attimino questo per tener conto alla fine degli ordini di grandezza
%\[
%q^{2m}-q^{m+n-e}-\ldots -q^m +q^{n-e}+\ldots+q^e \leq M(\C) \leq q^{2m}-q^{m+n-e}-q^m+q^{n-e}.
%\]
%\[
% q^{n-e} < q^m+1-\frac{M(\C)}{q^m-1} \leq \frac{q^{n}-1}{q^e-1},
%\]
%i.e, $n = \lceil \log_q(q^m+1-\frac{M(\C)}{q^m-1}) \rceil + e - 1 = \lfloor \log_q(q^m+1-\frac{M(\C)}{q^m-1}) \rfloor + e$.
%}
\end{theorem}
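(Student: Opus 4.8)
The plan is to derive the two inequalities from the results on linear sets already collected, and then obtain the divisibility claim $e\mid m$ from the structure of the maximum-weight point when $m=n$. Throughout, let $U$ be a system associated with $\C$, so that by Proposition \ref{prop:points-codewords2} we have $M(\C)=(q^m-1)|\PG(1,q^m)\setminus L_U|$, i.e.\ $|L_U| = q^m+1 - \tfrac{M(\C)}{q^m-1}$. The hypothesis that the second maximum weight of $\C$ equals $n-e$ translates, via \eqref{eq:relweight}, into the statement that the minimum weight of a point of $L_U$ is exactly $e$: there is a point of weight $e$ (corresponding to a codeword of weight $n-e$) and no point of smaller positive weight (a point of weight $i<e$ would yield a codeword of weight $n-i > n-e$, and since $n-i<n$ because $\langle U\rangle_{\F_{q^m}}=\F_{q^m}^2$ forces $L_U\neq\PG(1,q^m)$ hence no weight-$n$ point... actually weight-$n$ codewords are allowed, but weight strictly between $n-e$ and $n$ are not).

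First I would apply Proposition \ref{pro:sizeminweight} with $k=2$: since $L_U$ spans $\PG(1,q^m)$ and $e=\min_{P\in L_U} w_{L_U}(P)$, the rank $n$ satisfies
\[
q^{n-e} + \frac{q^{1}-1}{q-1} \leq |L_U| \leq \frac{q^n-1}{q^e-1},
\]
that is $q^{n-e}+1 \leq |L_U| \leq \tfrac{q^n-1}{q^e-1}$. Substituting $|L_U| = q^m+1-\tfrac{M(\C)}{q^m-1}$ and rearranging gives exactly
\[
q^{2m}-1-(q^m-1)\frac{q^n-1}{q^e-1} \leq M(\C) \leq q^{2m}-1-(q^m-1)(q^{n-e}+1).
\]
The reformulation $n = \lfloor \log_q(q^m+1-\tfrac{M(\C)}{q^m-1})\rfloor + e$ is then immediate from the second part of Proposition \ref{pro:sizeminweight}, which says $n = \lfloor \log_q |L_U|\rfloor + e$; one only has to check the bounds above are tight enough that $\lfloor \log_q|L_U|\rfloor$ is unambiguously $n-e$, which is exactly what $q^{n-e} < |L_U| \le \tfrac{q^n-1}{q^e-1} < q^{n-e+1}$ guarantees.

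The remaining claim is that if $m=n$ then $e\mid m$. Here I would argue as follows. By \eqref{eq:boundcomplementaryweight} applied to two distinct points $P_1,P_2$ of $L_U$ in $\PG(1,q^m)$ we have $w_{L_U}(P_1)+w_{L_U}(P_2)\le n=m$. Take $P_1$ to be a point of minimum weight $e$ and consider the quotient/restriction structure: the subspace $U$ has dimension $n=m$ in the $2m$-dimensional $\fq$-space $\F_{q^m}^2$, and the weight function on points of $\PG(1,q^m)$ records $\dim_{\fq}(U\cap \langle v\rangle_{\F_{q^m}})$ for the $2$-dimensional $\F_{q^m}$-lines $\langle v\rangle$. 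The key point is that $U\cap\langle v\rangle_{\F_{q^m}}$ is an $\fq$-subspace of the $1$-dimensional $\F_{q^m}$-space $\langle v\rangle$, but more is true when $m=n$: one shows that the point $P_1$ of minimum weight $e$ has the property that $U\cap P_1$ (viewed inside the field $\F_{q^m}$ after a coordinate change putting $P_1 = \langle(0,1)\rangle$) is an $\F_{q^e}$-subspace. **The hard part will be** precisely this: establishing that the minimum-weight intersection is closed under multiplication by a subfield $\F_{q^e}$. I would try to get this by a dimension/averaging argument — the number of points and the congruence \eqref{eq:cong1}, combined with the extremal role of $e$ as the minimum weight, should force $\F_q$-linearity to upgrade to $\F_{q^e}$-linearity of $U\cap P_1$; alternatively one can invoke that a linear set of $\PG(1,q^m)$ whose minimum point-weight is $e$ and whose rank equals $m$ must, at a minimum-weight point, look locally like an $\F_{q^e}$-linear set of rank $m/e$, which is only possible if $e\mid m$. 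Once $U\cap P_1\cong \F_{q^e}$ sits inside $\F_{q^m}$, standard subfield theory gives $e\mid m$.
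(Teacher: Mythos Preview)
Your argument for the two inequalities and the formula $n = \lfloor \log_q|L_U|\rfloor + e$ is correct and matches the paper's proof exactly: translate the second-maximum-weight hypothesis into $e = \min_{P\in L_U} w_{L_U}(P)$ via \eqref{eq:relweight}, apply Proposition~\ref{pro:sizeminweight}, and convert the resulting bounds on $|L_U|$ into bounds on $M(\C)$ through Proposition~\ref{prop:points-codewords2}.

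The divisibility claim $e\mid m$ when $n=m$ is where your proposal has a genuine gap. You correctly isolate the target --- upgrading $\fq$-linearity to $\F_{q^e}$-linearity somewhere in $U$ --- but neither of your sketched routes reaches it. The ``dimension/averaging'' idea based on \eqref{eq:cong1} and the extremality of $e$ does not by itself produce any divisibility of $m$ by $e$; and the alternative phrasing (``at a minimum-weight point $L_U$ must look locally like an $\F_{q^e}$-linear set of rank $m/e$'') already presupposes that $e\mid m$, so it is circular. Note also that showing only $U\cap P_1$ is $\F_{q^e}$-linear would not suffice: an $e$-dimensional $\fq$-subspace of $\F_{q^m}$ need not be of the form $\F_{q^e}\cdot v$ unless $e\mid m$ is already known. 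The paper does not prove this step from scratch either: it simply invokes \cite[Proposition~3.1]{polverino2022divisible}, which establishes that under these hypotheses $U$ is an $\F_{q^e}$-subspace of $\F_{q^m}^2$, whence $e\mid m$. That external result relies on a multiplicative/stabiliser argument rather than the counting considerations you outline, so to complete your proof you should either cite it or reproduce that argument.
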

\begin{proof}
Let $U$ be any associated system with $\C$. \eqref{eq:relweight} implies that $e = \min_{P \in L_U} w_{L_U}(P)$.
 Then by Proposition \ref{pro:sizeminweight}, we get
 \[
  q^{n-e} +1 \leq  |L_U| \leq \frac{q^{n}-1}{q^e-1}.
 \]
and by Proposition \ref{prop:points-codewords2} the bounds follows.
In particular, when $m=n$, then  
\cite[Proposition 3.1]{polverino2022divisible} implies that $e \mid m$ and $U$ is an $\F_{q^e}$-subspace of $\mathbb{F}_{q^m}^2$. 
%Therefore, $L_U$ is an $\F_{q^e}$-linear set of rank $n/e$ and hence by Theorem \ref{th:minsize} we have $|L_U|\geq q^{\frac{n}e-1}+1$ and hence the desired upper bound
\end{proof}

\begin{remark}
In the case in which $m=n$ and $n-e$ is the second maximum weight, the code $\C$ turns out to be $e$-divisible, that is all the weights of the codewords are multiply of $e$; see \cite{polverino2022divisible}.
\end{remark}

\subsection{Dimension two case and $n> m$}

We will now deal with the case $n>m$. To this aim we will need the aid of the dual of linear sets.

\begin{theorem}\label{thm:boundMC22}
Let $\C$ be a non-degenerate $[n,2]_{q^m/q}$ code and assume that $m< n$ and $d(\C)\geq n-m+1$.
Then
\begin{equation}\label{eq:boundmC2dual}
q^{2m}-1-(q^m-1)  \frac{q^{2m-n}-1}{q-1} \leq M(\C) \leq q^{2m}-1-(q^m-1)(q+1) .
\end{equation}
%In particular, if $\C$ contains a codeword of weight $m-1$,
%\begin{equation}\label{eq:boundmC2scatt}
%M(\C) \leq q^{2m}-1-(q^m-1)(q^{2m-n-1}+1).
%\end{equation}
%\textcolor{blue}{Anche qui scrivo un attimino questo per tener conto alla fine degli ordini di grandezza
%\[
%q^{2m}-q^{3m-n-1}-\ldots -q^m +q^{2m-n-1}+\ldots+q \leq M(\C) \leq q^{2m}-q^{3m-n-1}-q^m+q^{2m-n-1}.
%\]
%}

%\textcolor{blue}{
If $m-e$ is the second maximum weight of $\C$, then
\[q^{2m}-1-(q^m-1)\frac{q^{2m-n}-1}{q^e-1} \leq M(\C) \leq q^{2m}-1-(q^m-1)(q^{2m-n-e}+1),\]
%i.e.
%\textcolor{blue}{
% q^{2m-n-e} < q^m+1-\frac{M(\C)}{q^m-1} \leq \frac{q^{2m-n}-1}{q^e-1},
%\[
%q^{2m}-q^{3m-n-e}-\ldots -q^m +q^{2m-n-e}+\ldots+q \leq M(\C) \leq q^{2m}-q^{3m-n-e}-q^m+q^{2m-n-e},
%\]
and $n =2m- \lfloor \log_q(q^m+1-\frac{M(\C)}{q^m-1}) \rfloor - e$.
%}
\end{theorem}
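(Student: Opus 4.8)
The plan is to reduce this statement entirely to the already-proved dimension-two results for codes of length $\le m$ by passing to the geometric dual. First I would note that for $k=2$ one has $d_{k-1}^{\mathrm{rk}}(\C)=d_1^{\mathrm{rk}}(\C)=d(\C)$, so the hypothesis $d(\C)\ge n-m+1$ is exactly the condition in Remark~\ref{rk:geogenrank} that guarantees the existence of a geometric dual $\C^{\perp_{\mathcal{G}}}$; by the theorem of \cite{BoreZull0} it is a non-degenerate $[2m-n,2,d']_{q^m/q}$ code, and since $n\ge m$ its length $2m-n$ satisfies $2m-n\le m$. Fix a system $U$ associated with $\C$, so that $U^{\perp'}$ is a system associated with $\C^{\perp_{\mathcal{G}}}$ and $L_{U^{\perp'}}=L_U^\tau$ has rank $2m-n$ and spans $\PG(1,q^m)$.

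The key identity I would establish is $M(\C)=M(\C^{\perp_{\mathcal{G}}})$. Since $2m-n\le m$, Proposition~\ref{prop:points-codewords2} applied to $\C^{\perp_{\mathcal{G}}}$ gives $M(\C^{\perp_{\mathcal{G}}})=(q^m-1)\,|\PG(1,q^m)\setminus L_{U^{\perp'}}|$. On the other hand, by \eqref{eq:relweight} a codeword of $\C$ attains the maximum weight $m=\min\{m,n\}$ precisely when the corresponding point $P\in\PG(1,q^m)$ has $w_{L_U}(P)=n-m$; Proposition~\ref{prop:weightdual} with $k=2$ and $s=0$ yields $w_{L_{U^{\perp'}}}(P^\tau)=w_{L_U}(P)+m-n$, so $w_{L_U}(P)=n-m$ if and only if $P^\tau\notin L_{U^{\perp'}}$. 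As $\tau$ is a bijection on $\PG(1,q^m)$, the number of such $P$ equals $|\PG(1,q^m)\setminus L_{U^{\perp'}}|$, whence $M(\C)=(q^m-1)\,|\PG(1,q^m)\setminus L_{U^{\perp'}}|=M(\C^{\perp_{\mathcal{G}}})$.

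Given this identity, the first displayed chain of inequalities is just \eqref{eq:boundmC2} of Theorem~\ref{thm:boundsMC21} applied to the $[2m-n,2]_{q^m/q}$ code $\C^{\perp_{\mathcal{G}}}$, with $n$ replaced by $2m-n$. For the refined statement I would transfer the second-maximum-weight hypothesis across the duality: the relation $w_{L_{U^{\perp'}}}(P^\tau)=w_{L_U}(P)+m-n$ is a bijective shift of the multiset of point weights, so if the two smallest point weights of $L_U$ are $n-m$ and $n-m+e$ (which is the meaning of ``$m-e$ is the second maximum weight of $\C$'' via \eqref{eq:relweight}), then the minimum weight of a point of $L_{U^{\perp'}}$ is $e$, i.e.\ by \eqref{eq:relweight} the maximum weight of $\C^{\perp_{\mathcal{G}}}$ is $2m-n$ and its second maximum weight is $(2m-n)-e$. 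Feeding this into Theorem~\ref{thm:2dim2} for $\C^{\perp_{\mathcal{G}}}$ gives the refined bounds and the identity $2m-n=\lfloor\log_q(q^m+1-\tfrac{M(\C^{\perp_{\mathcal{G}}})}{q^m-1})\rfloor+e$, which combined with $M(\C^{\perp_{\mathcal{G}}})=M(\C)$ rearranges to $n=2m-\lfloor\log_q(q^m+1-\tfrac{M(\C)}{q^m-1})\rfloor-e$.

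The index bookkeeping in Propositions~\ref{prop:points-codewords2} and \ref{prop:weightdual} is routine; the only point genuinely requiring care is the first one, namely verifying that the hypotheses $n\ge m$ and $d(\C)\ge n-m+1$ are exactly what make $\C^{\perp_{\mathcal{G}}}$ a well-defined non-degenerate code of length $2m-n$ with $2m-n\le m$, so that the earlier dimension-two theorems apply verbatim. (Alternatively, one can avoid invoking the code $\C^{\perp_{\mathcal{G}}}$ and argue directly from $M(\C)=q^{2m}-1-(q^m-1)|L_{U^{\perp'}}|$, bounding $|L_{U^{\perp'}}|$ by \eqref{eq:card}, \eqref{eq:cong1} and Proposition~\ref{pro:sizeminweight}, using that $U^{\perp'}$ spans $\F_{q^m}^2$.)
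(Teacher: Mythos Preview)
Your proof is correct and follows essentially the same route as the paper: both arguments derive the identity $M(\C)=(q^m-1)\,|\PG(1,q^m)\setminus L_{U^{\perp'}}|$ from Proposition~\ref{prop:weightdual} and then bound $|L_{U^{\perp'}}|$. The only cosmetic difference is that you package this as $M(\C)=M(\C^{\perp_{\mathcal{G}}})$ and invoke Theorems~\ref{thm:boundsMC21} and~\ref{thm:2dim2} for the dual code, whereas the paper reapplies the linear-set bounds \eqref{eq:card}, $|L_{U^{\perp'}}|\ge q+1$, and Proposition~\ref{pro:sizeminweight} directly to $L_{U^{\perp'}}$; your parenthetical ``Alternatively'' is precisely the paper's actual proof.
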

\begin{proof}
Let $U$ be any system associated with $\C$.
As for Proposition \ref{prop:points-codewords2}, we may determine $M(\C)$ by determining the number of points of weight $n-m$ in $L_U$. 
Consider $L_{U^{\perp'}}$ the dual linear set of $L_U$. Proposition \ref{prop:weightdual} implies that a point $P$ is such that $w_{L_U}(P)=n-m$ if and only if $w_{L_{U^{\perp'}}}(P^\tau)=0$. Hence, 
\begin{equation}\label{eq:M(C)dual}
M(\C)=(q^m-1) |\mathrm{PG}(1,q^m)\setminus L_{U^{\perp'}}|.
\end{equation}
Since the rank of $L_{U^{\perp'}}$ is $2m-n < m$, by \eqref{eq:card} we have
\[ q+1\leq |L_{U^{\perp'}}| \leq \frac{q^{2m-n}-1}{q-1}, \]
since $|L_{U^{\perp'}}|>1$ by Remark \ref{rk:geogenrank}.
If $\C$ contains a codeword of weight $m-e$, then there exists a point $P$ such that $w_{L_U}(P)=n-m+e$ and hence $w_{L_{U^{\perp'}}}(P^\tau)=e$, by Proposition \ref{prop:weightdual}. Now, by applying Proposition \ref{pro:sizeminweight}
\[ |L_{U^{\perp'}}|\geq q^{2m-n-e}+1.\]
The bounds follow by \eqref{eq:M(C)dual}.
\end{proof}

\subsection{Larger dimension case and $n\leq m$}

In this section we assume that $n\leq m$ and $k>2$.
In order to underline the second order of magnitude in $M(\C)$, we will write the bounds not directly on $M(\C)$ but on $\frac{M(\C)}{q^m-1}$.
Under these assumptions, $M(\C)$ corresponds to determine the number of external hyperplanes to a linear set.

\begin{proposition}\label{prop:points-codewordsk}
Let $\C$ be a non-degenerate $[n,k]_{q^m/q}$ code and let $U$ be any associated system. Assume that $n\leq m$. Then 
\[M(\C)=(q^m-1) |\{ H=\PG(k-2,q^m) \colon H \cap L_U=\emptyset \}|.\]
\end{proposition}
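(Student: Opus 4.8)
The plan is to mimic exactly the argument of Proposition \ref{prop:points-codewords2}, replacing "points of $\PG(1,q^m)$ not on $L_U$" with "hyperplanes of $\PG(k-1,q^m)$ external to $L_U$." First I would fix a generator matrix $G$ of $\C$ whose columns $\fq$-span the associated system $U\subseteq\F_{q^m}^k$. By Theorem \ref{th:connection}, for any nonzero $u=(u_1,\dots,u_k)\in\F_{q^m}^k$ the codeword $uG$ has weight $w(uG)=n-\dim_{\fq}(U\cap u^\perp)=n-w_{L_U}(H)$, where $H=\PG(u^\perp,\F_{q^m})$ is the hyperplane of $\PG(k-1,q^m)=\PG(k-2,q^m)$ (in the paper's convention a hyperplane is a $\PG(k-2,q^m)$) with equation $\sum u_iy_i=0$.

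Next I would observe that, since $n\le m$, the maximum possible weight of a codeword is $\min\{m,n\}=n$, and $w(uG)=n$ holds if and only if $\dim_{\fq}(U\cap u^\perp)=0$, i.e. if and only if $w_{L_U}(H)=0$, which is precisely the condition that the hyperplane $H$ be external to the linear set $L_U$ (contains no point of $L_U$). The map $u\mapsto u^\perp$ gives a bijection between the $q^{km}-1$ nonzero vectors of $\F_{q^m}^k$ and the hyperplanes of $\PG(k-1,q^m)$ counted with multiplicity $q^m-1$ (two vectors give the same hyperplane exactly when they are $\F_{q^m}$-proportional). Therefore the codewords of maximum weight $n$ are in $(q^m-1)$-to-one correspondence with the external hyperplanes, giving
\[
M(\C)=(q^m-1)\,\bigl|\{H=\PG(k-2,q^m) : H\cap L_U=\emptyset\}\bigr|.
\]

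There is essentially no obstacle here: the statement is a direct translation through Theorem \ref{th:connection}, and the only things to be careful about are (i) checking that non-degeneracy of $\C$ guarantees $\langle U\rangle_{\F_{q^m}}=\F_{q^m}^k$ so that $L_U$ and the notion of "external hyperplane" behave as expected, and (ii) confirming the counting factor $q^m-1$ coming from $\F_{q^m}$-scalars, exactly as in the $k=2$ case of Proposition \ref{prop:points-codewords2}. I would present the proof in two or three lines, citing \eqref{eq:relweight} for the weight formula and noting that the argument is identical to that of Proposition \ref{prop:points-codewords2}.
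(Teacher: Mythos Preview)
Your proposal is correct and follows exactly the same approach as the paper's proof: fix a generator matrix $G$, use \eqref{eq:relweight} to see that $w(uG)=n$ if and only if $w_{L_U}(u^\perp)=0$, and conclude. The paper's version is more terse (three lines), omitting the explicit discussion of the $(q^m-1)$-to-one correspondence, but the argument is identical.
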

\begin{proof}
Let $G$ be a generator matrix of $\C$ such that the $\fq$-span of its columns is $U$. Then by \eqref{eq:relweight}, a codeword $c=uG$ has maximum weight if and only if
\[w_{L_U}(u^\perp)=n-w(uG)=0,\]
and hence the assertion follows.
\end{proof}

To prove our bounds, we need the following two geometric lemmas.

\begin{lemma}\label{lem:hypersubg}
Let $\Sigma=\PG(k-1,q)$ be a canonical subgeometry in $\PG(k-1,q^m)$ and $k\leq m$. Then the number of hyperplanes of $\PG(k-1,q^m)$ meeting $\Sigma$ in at least one point is
\[\alpha=\frac{q^{mk}-1}{q^m-1}-\prod_{i=1}^{k-1}(q^m-q^i),\]
and the number of hyperplanes of $\PG(k,q^m)$ meeting $\Sigma$ is at least two points (and hence at least $q+1$) is
\[\beta=\frac{q^{mk}-1}{q^m-1}-\prod_{i=1}^{k-1}(q^m-q^i)-\frac{q^k-1}{q-1}
\prod_{i=0}^{k-2}(q^m-q^i) .\]
\end{lemma}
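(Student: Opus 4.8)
The plan is to count, for a canonical subgeometry $\Sigma = \PG(k-1,q)$ sitting inside $\PG(k-1,q^m)$, the hyperplanes of the big space according to how they meet $\Sigma$, and to do this by dualizing and counting configurations of points of $\Sigma$. First I would recall the standard fact (a consequence of the definition of canonical subgeometry, cf.\ the discussion around Theorem~\ref{thm:newBound}) that $\Sigma$ spans $\PG(k-1,q^m)$ and that any $j\le k$ points of $\Sigma$ that are independent over $\fq$ remain independent over $\F_{q^m}$; equivalently, a point of $\Sigma$ cannot lie in the $\F_{q^m}$-span of fewer of its points than it would over $\fq$. Dually, a hyperplane $H$ of $\PG(k-1,q^m)$ meets $\Sigma$ in a subspace of $\Sigma$ whose $\fq$-dimension equals $w_{L_U}(H)$ where $U$ realizes $\Sigma$, and the possible intersections with $\Sigma$ are exactly the subgeometries $\PG(j-1,q)\subseteq\Sigma$ for $j=0,1,\dots,k-1$ (the case $j=k$ being excluded since $H$ is a proper subspace).

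Next I would compute $\frac{q^{mk}-1}{q^m-1} - \alpha$, i.e.\ the number of hyperplanes of $\PG(k-1,q^m)$ that are \emph{external} to $\Sigma$. A hyperplane $H$ is external to $\Sigma$ iff, writing $H = x^\perp$ for a point $\langle x\rangle$ of the dual space, the linear functional $x$ does not vanish on any nonzero vector of the $\fq$-structure $U\cong\fq^k$ spanning $\Sigma$; choosing coordinates so that $U = \fq^k \subseteq \F_{q^m}^k$, this says the row vector $x=(x_1,\dots,x_k)\in\F_{q^m}^k$ satisfies $\sum x_i u_i \ne 0$ for all nonzero $u\in\fq^k$, which is precisely the condition that $(x_1,\dots,x_k)$ be $\fq$-linearly independent, i.e.\ spans a $k$-dimensional $\fq$-subspace of $\F_{q^m}$. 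The number of such ordered tuples is $\prod_{i=0}^{k-1}(q^m - q^i)$, and dividing by $q^m-1$ (passing from vectors to points of the dual space) gives $\frac{1}{q^m-1}\prod_{i=0}^{k-1}(q^m-q^i) = \prod_{i=1}^{k-1}(q^m-q^i)$ external hyperplanes; hence $\alpha = \frac{q^{mk}-1}{q^m-1} - \prod_{i=1}^{k-1}(q^m-q^i)$, as claimed. (Here one uses $k\le m$ so that $k$ independent $\fq$-vectors of $\F_{q^m}$ exist and the count is nonzero.)

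For $\beta$ I would further subtract the hyperplanes meeting $\Sigma$ in exactly one point. Such a hyperplane is a tangent hyperplane to $\Sigma$ at some point $P\in\Sigma$; the number of points of $\Sigma$ is $\frac{q^k-1}{q-1}$, and for a fixed $P$ the tangent hyperplanes at $P$ correspond to hyperplanes through $P$ external to $\Sigma\setminus\{P\}$. Equivalently, in dual terms: a functional $x$ vanishing on a fixed nonzero $u_0\in\fq^k$ but on no other nonzero vector of $\fq^k$. Factoring out the condition $x(u_0)=0$ reduces to counting functionals on $\F_{q^m}^k/\langle u_0\rangle$ (a $k$-dimensional $\F_{q^m}$-space containing the canonical subgeometry $\Sigma/\{P\}\cong\PG(k-2,q)$) that are external to that subgeometry; by the $\alpha$-type computation just performed (with $k$ replaced appropriately) the number of \emph{ordered} such $x$ is $\prod_{i=0}^{k-2}(q^m-q^i)$, and summing over the $\frac{q^k-1}{q-1}$ choices of $P$, then dividing by $q^m-1$ to pass to projective points, gives exactly $\frac{q^k-1}{q-1}\cdot\frac{1}{q^m-1}\prod_{i=0}^{k-2}(q^m-q^i)$ tangent hyperplanes — matching the subtracted term in the statement. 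Therefore $\beta = \alpha - \frac{q^k-1}{q-1}\prod_{i=0}^{k-2}(q^m-q^i)\cdot\frac{1}{q^m-1}\cdot(q^m-1)$; one must be a little careful that the index shift in the product in the displayed formula ($\prod_{i=0}^{k-2}(q^m-q^i)$ versus the $\frac{1}{q^m-1}$ normalization) works out, which is the routine bookkeeping.

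The main obstacle I anticipate is not any single computation but getting the normalizations consistent: one is counting ordered tuples of field elements, then quotienting by scalars to get projective points, then possibly quotienting $\F_{q^m}^k$ by a subgeometry point; keeping track of which $q^m-1$ factors cancel against which, and making sure the "independent over $\fq$ $\Rightarrow$ independent over $\F_{q^m}$" property is invoked correctly so that tangent hyperplanes really are in bijection with (point, external-functional-on-quotient) pairs with no overcounting, is where the argument could slip. An alternative, perhaps cleaner, route for $\beta$ that I would keep in reserve is inclusion–exclusion directly on the number of points of $\Sigma$ on a hyperplane, using that $H\mapsto |H\cap\Sigma|$ takes only the values $\frac{q^j-1}{q-1}$ and relating $\alpha$ and $\beta$ via the identity $\sum_H(\text{number of points of }\Sigma\text{ on }H) = \frac{q^k-1}{q-1}\cdot\frac{q^{m(k-1)}-1}{q^m-1}$ (each point of $\Sigma$ lies on $\frac{q^{m(k-1)}-1}{q^m-1}$ hyperplanes), which pins down the count of hyperplanes with $\ge 1$ point once combined with the $\alpha$ computation.
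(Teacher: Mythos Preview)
Your approach is correct and, for the computation of $\alpha$, essentially identical to the paper's: both count nonzero vectors $x\in\F_{q^m}^k$ whose entries are $\fq$-linearly independent (equivalently, codewords of weight $k$ in the code $\F_{q^m}^k$), obtain $\prod_{i=0}^{k-1}(q^m-q^i)$, and divide by $q^m-1$.

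For $\beta$ your route is slightly more elementary than the paper's. The paper frames everything via the MRD code $\F_{q^m}^k$ and invokes the MRD weight-distribution formula (their Theorem~\ref{th:weightdistribution}) to read off $A_{k-1}$, the number of weight-$(k-1)$ vectors, then sets the tangent count to $A_{k-1}/(q^m-1)$. You instead fix a point $P\in\Sigma$ and reduce the tangent count at $P$ to an external-hyperplane count one dimension down, which is just the $\alpha$-computation with $k$ replaced by $k-1$; summing over $P$ gives the same $A_{k-1}$ directly without appealing to any MRD machinery. Your version is self-contained and makes the recursive structure transparent; the paper's version has the advantage of immediately slotting into the code-theoretic language used elsewhere.

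Your caution about normalization is well placed: the number of tangent hyperplanes is $\frac{q^k-1}{q-1}\cdot\frac{1}{q^m-1}\prod_{i=0}^{k-2}(q^m-q^i)=\frac{q^k-1}{q-1}\prod_{i=1}^{k-2}(q^m-q^i)$, and indeed the paper's statement of the lemma writes the subtracted term with the product starting at $i=0$ while the restatement in Theorem~\ref{thm:boundn<=m} starts it at $i=1$; your derivation (and the paper's own proof, which sets $\delta=A_{k-1}/(q^m-1)$) agrees with the latter.
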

\begin{proof}
Let $U$ be any $k$-dimensional $\fq$-subspace of $\F_{q^m}^k$ such that $L_U=\Sigma$. Clearly, $U$ is a $[k,k,1]_{q^m/q}$ system. Then any code $\C$ associated with $U$ coincide with $\F_{q^m}^k$ (and it is trivially an MRD code). Moreover, by \eqref{eq:relweight} the number $\gamma$ of external hyperplanes corresponds to the  number of codewords of $\C$ with weight $k$ divided by $q^m-1$, more precisely
\[\gamma = \frac{A_k}{q^m-1},\]
where $A_k$ denotes the number of vectors in $\F_{q^m}^k$ with weight $k$.
It is an easy computation to see that 
\[A_k=\prod_{i=0}^{k-1}(q^m-q^i).\]
Therefore,
\[\gamma=\prod_{i=1}^{k-1}(q^m-q^i)\]
and hence
\[\alpha=\frac{q^{mk}-1}{q^m-1}-\gamma.\]
The value of $\beta$ can be also obtained by subtracting to the number of hyperplanes in $\PG(k-1,q^m)$ the value of $\gamma$ and the number $\delta$ of the tangent hyperplanes to $\Sigma$.
As before, one can see that 
\[\delta=\frac{A_{k-1}}{q^m-1}, \]
where $A_{k-1}$ denotes the number of vectors in $\F_{q^m}^k$ with weight $k-1$. By Theorem \ref{th:weightdistribution} and \cite{migler2004weight}, it follows that
\[ A_{k-1}={k\brack 1}_q \sum_{t=0}^{k-2} (-1)^{t-k}{k-1 \brack k-2-t}_q q^{\binom{k-2-t}{2}}(q^{m(t+1)}-1)=\frac{q^k-1}{q-1}
\prod_{i=0}^{k-2}(q^m-q^i).
\]
\end{proof}

\begin{lemma} \label{lem:tangenthyperlinset}
Let $L_U$ be an $\F_q$-linear set spanning $\PG(k-1,q^m)$ having rank $n$ with $3 \leq k \leq n \leq m$. Then for each point $P \in L_U$, there exists an $r$-space through $P$ meeting $L_U$ exactly in $P$, for each $r \in \{0,\ldots,k-2\}$. In particular, there exists a hyperplane through $P$ which is tangent to $L_U$.
\end{lemma}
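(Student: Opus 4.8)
The plan is to fix a point $P=\langle v\rangle_{\F_{q^m}}\in L_U$ (so $v\in U$ and $w_{L_U}(P)\geq 1$) and to translate the required geometric statement into linear algebra over $\F_{q^m}$. If $\Omega=\PG(W,\F_{q^m})$ is an $r$-space through $P$, i.e.\ $v\in W$ and $\dim_{\F_{q^m}}W=r+1$, then $\Omega\cap L_U=\{P\}$ holds exactly when every vector of $U\cap W$ is an $\F_{q^m}$-multiple of $v$, that is, when $U\cap W=U\cap\langle v\rangle_{\F_{q^m}}$. So the goal is, for each $r\in\{0,\dots,k-2\}$, to build an $(r+1)$-dimensional $\F_{q^m}$-subspace $W_r\ni v$ with $U\cap W_r=U\cap\langle v\rangle_{\F_{q^m}}$; the associated $\Omega_r=\PG(W_r,\F_{q^m})$ is then the desired $r$-space.

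I would do this by induction on $r$, starting from $W_0=\langle v\rangle_{\F_{q^m}}$ (the case $r=0$). For the inductive step, suppose $W_{r-1}$ has been constructed with $1\leq r\leq k-2$, and pass to the quotient $\overline V=V/W_{r-1}$, which has $\F_{q^m}$-dimension $k-r\geq 2$; write $\overline x$ for the image of $x\in V$ and let $\overline U$ be the image of $U$, an $\F_q$-subspace of $\overline V$ of dimension $n-\dim_{\F_q}(U\cap W_{r-1})=n-w_{L_U}(P)$. The key observation is that, for $w\notin W_{r-1}$, the extension $W_r:=W_{r-1}+\langle w\rangle_{\F_{q^m}}$ satisfies $U\cap W_r=U\cap W_{r-1}$ if and only if $\langle \overline w\rangle_{\F_{q^m}}\notin L_{\overline U}$, where $L_{\overline U}\subseteq\PG(\overline V,\F_{q^m})=\PG(k-r-1,q^m)$ is the projection of $L_U\setminus\Omega_{r-1}$ from $\Omega_{r-1}$. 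Hence it suffices to exhibit one point of $\PG(k-r-1,q^m)$ off $L_{\overline U}$ and lift it.

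The crux is therefore a size comparison: by \eqref{eq:card}, $|L_{\overline U}|\leq\frac{q^{\,n-w_{L_U}(P)}-1}{q-1}\leq\frac{q^{\,n-1}-1}{q-1}<q^{\,n-1}\leq q^{\,m-1}$, while $\PG(k-r-1,q^m)$ has $\frac{q^{m(k-r)}-1}{q^m-1}\geq q^m+1$ points because $k-r\geq 2$; so $L_{\overline U}$ cannot cover the ambient space and a suitable $w$ exists. The hypotheses $w_{L_U}(P)\geq 1$ (from $P\in L_U$), $n\leq m$, and $r\leq k-2$ are exactly what is needed to close this chain of inequalities, and the step I expect to require the most care is the bookkeeping of $\F_q$- versus $\F_{q^m}$-dimensions inside the quotient, together with verifying the stated equivalence for $W_r$. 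Taking $r=k-2$ then produces a hyperplane through $P$ meeting $L_U$ in exactly $P$, hence tangent to $L_U$, which gives the final assertion; one may note in passing that this tangent hyperplane has weight $w_{L_U}(P)$ in $L_U$.
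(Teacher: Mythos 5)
Your proposal is correct and follows essentially the same route as the paper: an induction on $r$ that extends a subspace tangent at $P$ by one dimension at a time, with the key step being a cardinality comparison made possible by $n\leq m$ (your bound $|L_{\overline U}|\leq\frac{q^{n-1}-1}{q-1}<q^m+1$ is just the quotient-space rephrasing of the paper's count, which instead bounds $|L_U|$ from below by $q(q^m+1)+1$ under the assumption that every extension is secant and contradicts $|L_U|\leq\frac{q^n-1}{q-1}$). The only difference is presentational: you construct the extension directly by projecting from the tangent space and exhibiting a point off the projected linear set, while the paper argues by contradiction in the ambient space.
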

\begin{proof}
Suppose by contradiction that all the lines through $P$ are not tangent in $P$ and so they meet $L_U$ in at least $q+1$ points.  Then 
\[
    q \frac{q^{m(k-1)}-1}{q^m-1}+1 \leq \lvert L_U \rvert \leq \frac{q^n-1}{q-1},
\]
and so we obtain a contradiction since $n \leq m$.
Now, suppose the assertion holds for any $r \in \{1,\ldots,t\}$ and let $\pi$ be a $t$-space through $P$ which is tangent to $L_U$. Suppose by contradiction that all the $q^m+1$ $(t+1)$-spaces through $\pi$ are not tangent $L_U$ then
\[
    q (q^m+1)+1 \leq \lvert L_U \rvert \leq \frac{q^n-1}{q-1},
\]
again a contradition to the fact that $n\leq m$.
\end{proof}

We can use Lemma \ref{lem:hypersubg} to obtain upper and lower bounds, using the fact that in a system $U$ in $\F_{q^m}^k$ we always find $k$ $\F_{q^m}$-linearly independent vectors, which geometrically means that in $L_U$ is contained a canonical subgeometry.

\begin{theorem}\label{thm:boundn<=m}
Let $\C$ be a non-degenerate $[n,k]_{q^m/q}$ code. Assume that $n\leq m$. Then 
\begin{equation} \label{eq:boundM(C)n<m}
\frac{q^{mk}-1}{q^m-1}-\frac{q^n-1}{q-1}\frac{q^{(k-1)m}-1}{q^m-1} + q\beta \leq \frac{M(\C)}{q^m-1}\leq  \prod_{i=1}^{n-1}(q^m-q^i) 
\end{equation}
%\[ (q^{mk}-1) -\frac{q^n-1}{q-1}(q^{(k-1)m}-1) +q(q^m-1)\beta \leq M(\C)\leq (q^m-1)\prod_{i=1}^{n-1}(q^m-q^i), \]
where
\[\beta=\frac{q^{mk}-1}{q^m-1}-\prod_{i=1}^{k-1}(q^m-q^i)-\frac{q^k-1}{q-1}
\prod_{i=1}^{k-2}(q^m-q^i) .\]
\end{theorem}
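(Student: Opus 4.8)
The strategy is to translate both bounds into statements about hyperplanes external to the linear set $L_U$ associated with $\C$, via Proposition \ref{prop:points-codewordsk}, and then to exploit the fact that $L_U$ contains a canonical subgeometry. For the \emph{upper bound}, the key observation is that a system $U\subseteq\F_{q^m}^k$ of rank $n$ contains $n$ vectors that are $\F_q$-linearly independent but span an $\F_q$-subspace of $\F_{q^m}$-dimension $k$; more usefully, one can pick inside $L_U$ an $n$-dimensional canonical subgeometry's worth of structure — but since the ambient dimension is only $k-1$, the right move is to project. I would argue that the number of external hyperplanes to $L_U$ is at most the number of external hyperplanes to a sub-linear-set of rank $n$ spanning a $\PG(n-1,q^m)$, which by the $n=k$ case of Lemma \ref{lem:hypersubg} (applied with $k$ replaced by $n$: the canonical subgeometry $\PG(n-1,q)$ has exactly $\prod_{i=1}^{n-1}(q^m-q^i)$ external hyperplanes, and any rank-$n$ linear set has at least as many points hence at least as many secant/tangent hyperplanes, so \emph{fewer} external ones). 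Concretely: take $n$ independent points of $L_U$ (they exist since rank equals $n\ge k$ and one can grow an independent set greedily using \eqref{eq:boundcomplementaryweight}); these span, inside a suitable $\PG(n-1,q^m)\supseteq\PG(k-1,q^m)$ — no, rather one embeds $\F_{q^m}^k$ and counts — the cleanest route is: a hyperplane $H$ of $\PG(k-1,q^m)$ external to $L_U$ is in particular external to the canonical subgeometry $\Sigma\subseteq L_U$, and the bound $\prod_{i=1}^{n-1}(q^m-q^i)$ is what one gets from the simplex/MRD weight distribution reasoning; I expect the honest derivation of the upper bound to be the main obstacle and to require either a direct counting of hyperplanes missing a rank-$n$ linear set or an appeal to the MRD weight distribution (Theorem \ref{th:weightdistribution}) for the associated simplex-type code.

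For the \emph{lower bound}, I would count external hyperplanes by inclusion–exclusion against the canonical subgeometry $\Sigma=\PG(k-1,q)\subseteq L_U$ guaranteed by choosing $k$ independent points of $L_U$. Every hyperplane of $\PG(k-1,q^m)$ external to $L_U$ is external to $\Sigma$; conversely a hyperplane external to $\Sigma$ meets $L_U$ in a linear set of rank at most $n-k$ (since $\dim_{\fq}(U\cap H)\le n-k$, because $U\cap H$ together with a complement mapping onto the $k$ coordinates of $\Sigma$ must have codimension at least $k$ — here one uses that $\Sigma$ spans). So the number of hyperplanes external to $L_U$ is at least $(\text{number external to }\Sigma)$ minus $(\text{number of hyperplanes external to }\Sigma\text{ but meeting }L_U)$. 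The number external to $\Sigma$ is $\prod_{i=1}^{k-1}(q^m-q^i)$ by Lemma \ref{lem:hypersubg}. To bound the subtracted term: a hyperplane meeting $L_U$ in at least one point but $\Sigma$ in none still meets $L_U$; one overcounts by bounding the total number of (hyperplane, point) incidences with points of $L_U$ by $|L_U|\cdot\frac{q^{(k-1)m}-1}{q^m-1}$ (each point lies in $\frac{q^{(k-1)m}-1}{q^m-1}$ hyperplanes), then subtracting off the incidences already "used up" by hyperplanes meeting $\Sigma$ — these contribute at least $q\beta + (\text{incidences from tangent hyperplanes to }\Sigma)$, where $\beta$ (from Lemma \ref{lem:hypersubg}) counts hyperplanes through $\ge 2$ points of $\Sigma$, each carrying $\ge q+1$ further incidences. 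Using $|L_U|\le\frac{q^n-1}{q-1}$ and reorganizing yields exactly
\[
\frac{M(\C)}{q^m-1}\ \ge\ \frac{q^{mk}-1}{q^m-1}-\frac{q^n-1}{q-1}\cdot\frac{q^{(k-1)m}-1}{q^m-1}+q\beta,
\]
after accounting for the hyperplanes already counted via $\Sigma$.

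The bookkeeping in the lower bound is delicate: one must be careful not to double-count the hyperplanes that meet both $\Sigma$ and $L_U\setminus\Sigma$, and the factor $q\beta$ arises precisely because each of the $\beta$ hyperplanes secant to $\Sigma$ in $\ge q+1$ points contributes at least $q$ "extra" incidences with $L_U$ beyond a single point, which must be removed from the crude incidence bound $\frac{q^n-1}{q-1}\cdot\frac{q^{(k-1)m}-1}{q^m-1}$. I expect the cleanest write-up to fix a canonical subgeometry $\Sigma\subseteq L_U$, partition hyperplanes of $\PG(k-1,q^m)$ into those external to $L_U$, those tangent to $L_U$ (weight-one secant), and those of higher secant type, and then compare the resulting incidence identity for $L_U$ with the one for $\Sigma$ from Lemma \ref{lem:hypersubg}; the inequality $n\le m$ is used (as in Lemma \ref{lem:tangenthyperlinset}) to control the local structure. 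The genuinely hard step, as noted, is pinning down the upper bound $\prod_{i=1}^{n-1}(q^m-q^i)$ cleanly — I would attack it by noting that $L_U$ contains, projectively, the image of a canonical subgeometry of rank $n$ under a linear projection, so its external hyperplanes are at most those of that subgeometry, which is the $n$-dimensional instance of the $\gamma$-count in the proof of Lemma \ref{lem:hypersubg}.
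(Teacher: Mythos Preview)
Your lower-bound plan has the right ingredients (a canonical subgeometry $\Sigma=\PG(k-1,q)\subseteq L_U$, point--hyperplane incidence counting, and the quantity $\beta$ from Lemma~\ref{lem:hypersubg}), but your organisation is more convoluted than necessary. The paper does not route through ``hyperplanes external to $\Sigma$ but meeting $L_U$'' or the rank-$(n-k)$ observation at all. It simply partitions the hyperplanes of $\PG(k-1,q^m)$ according to $|H\cap L_U|$ into three classes with counts $\tau_0$ (external), $\tau_1$ (one point), $\tau_s$ (at least two, hence $\ge q+1$ by \eqref{eq:cong1}), double-counts the set $\{(P,H):P\in L_U\cap H\}$ to obtain
\[
\tau_1+(q+1)\tau_s\ \le\ |L_U|\cdot\frac{q^{(k-1)m}-1}{q^m-1},
\]
rewrites this as $\tau_1+\tau_s\le |L_U|\cdot\frac{q^{(k-1)m}-1}{q^m-1}-q\tau_s$, and then plugs in $|L_U|\le\frac{q^n-1}{q-1}$ together with $\tau_s\ge\beta$ (any hyperplane secant to $\Sigma$ is secant to $L_U$). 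Subtracting from the total number of hyperplanes gives the lower bound directly. Your sketch would likely get there, but the $\tau_0/\tau_1/\tau_s$ device is what makes the $q\beta$ term fall out cleanly.

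Your treatment of the upper bound, by contrast, is genuinely off track. You call it ``the main obstacle'' and try to produce $n$ independent points of $L_U$, or a canonical subgeometry $\PG(n-1,q)$ inside $\PG(k-1,q^m)$; neither is possible once $n>k$, since the ambient projective dimension is $k-1$. The paper's argument here is a single line with no geometry at all: since $\C\subseteq\F_{q^m}^n$ and $n\le m$, the number $M(\C)$ of codewords of weight $n$ is at most the total number of vectors in $\F_{q^m}^n$ of rank $n$, i.e.\ the number of $\F_q$-linearly independent ordered $n$-tuples in $\F_{q^m}$, which is $\prod_{i=0}^{n-1}(q^m-q^i)=(q^m-1)\prod_{i=1}^{n-1}(q^m-q^i)$. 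That is the whole upper bound.
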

\begin{proof}
Let $U$ be any associated system with $\C$. 
By Proposition \ref{prop:points-codewordsk}, we need to count the number of external hyperplanes to $L_U$ in $\PG(k-1,q^m)$.
Denote by $\tau_0,\tau_1$ and $\tau_s$ the number of hyperplanes in $\PG(k-1,q^m)$ meeting $L_U$ in $0,1$ and at least two points, respectively.
Clearly, $\tau_0+\tau_1+\tau_s=\frac{q^{mk}-1}{q^m-1}$.
Note that, by \eqref{eq:cong1} if a hyperplane meets $L_U$ is at least two points, then the intersection will contain at least $q+1$ points. Therefore, by double counting the set
\[\{ (P,H) \colon P \text{ point}, H \text{ hyperplane} \text{ and } P\in L_U \cap H \},\]
and using that any secant hyperplane meets $L_U$ in at least $q+1$ points, we obtain
\[ \tau_1+\tau_s (q+1) \leq |L_U| \frac{q^{(k-1)m}-1}{q^m-1}, \]
from which we derive
\[ \tau_1+\tau_s\leq |L_U| \frac{q^{(k-1)m}-1}{q^m-1} -q\tau_s. \]
Note that $U$ contains $k$ vectors $u_1,\ldots,u_k$ which are $\F_{q^m}$-linearly independent.
Denote by $W=\langle u_1,\ldots,u_k \rangle_{\fq}$, then $L_W=\PG(k-1,q)$ is a canonical subgeometry contained in $L_U$. Therefore,
\[ \tau_s \geq \beta,\]
where $\beta$ is the number of secant hyperplanes to $L_W$ (computed in Lemma \ref{lem:hypersubg}).
Therefore, we have 
\begin{equation}\label{eq:boundtactical}
\tau_1+\tau_s\leq |L_U| \frac{q^{(k-1)m}-1}{q^m-1} -q\tau_s\leq \frac{q^n-1}{q-1}\frac{q^{(k-1)m}-1}{q^m-1} -q\beta, 
\end{equation}
and hence
\[\tau_0 \geq \frac{q^{mk}-1}{q^m-1} -\frac{q^n-1}{q-1}\frac{q^{(k-1)m}-1}{q^m-1} +q\beta. \]
Moreover, we can upper bound $M(\C)$ with the number of matrices in $\fq^{m\times n}$ of rank $n$ which is
$(q^m-1)\prod_{i=1}^{n-1}(q^m-q^i)$.
\end{proof}

\begin{remark}\label{rk:exthyperssubgeo}
Note that when $n=k$
\[ (q^m-1)\prod_{i=1}^{k-1}(q^m-q^i)= q^{mk}-1-(q^m-1)\alpha,\]
where $\alpha$ is the number of hyperplanes of $\PG(k-1,q^m)$ meeting $\PG(k-1,q)$ in at least one point.
\end{remark}

We can prove another upper bound on $M(\C)$ in which, unlike the previous bound, also the length of the code is involved.

\begin{theorem}\label{thm:boundn<=m2}
Let $\C$ be a non-degenerate $[n,k]_{q^m/q}$ code. Assume that $n\leq m$ and $n-e$ is the second maximum weight of $\C$. Then 
\[ q^{m(k-1)}-q^{m(k-2)+n-e}-q^{m(k-2)} \left(\frac{q^{n-e}-1}{q^e-1}\right) \leq \frac{M(\C)}{q^m-1}\leq q^{m(k-1)}-q^{m(k-2)+n-e},\]
i.e., 
$m(k-2)+n  = \lfloor \log_q(q^{m(k-1)}-\frac{M(\C)}{q^m-1}) \rfloor + e$.
\end{theorem}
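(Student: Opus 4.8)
The plan is to mimic the two-dimensional argument of Theorem \ref{thm:2dim2}, but now counting the hyperplanes of $\PG(k-1,q^m)$ that are external to an associated linear set $L_U$. Let $U$ be any system associated with $\C$. By Proposition \ref{prop:points-codewordsk} we have $M(\C)=(q^m-1)\tau_0$, where $\tau_0$ is the number of hyperplanes $H$ with $H\cap L_U=\emptyset$. The hypothesis that $n-e$ is the second maximum weight of $\C$ translates, via \eqref{eq:relweight}, into the statement that the minimum weight of a point of $L_U$ is exactly $e$, i.e.\ $e=\min_{P\in L_U}w_{L_U}(P)$, and every hyperplane of $\PG(k-1,q^m)$ either misses $L_U$ or meets it in weight at least $e$ (since a point in $H\cap L_U$ already contributes weight $\geq e$). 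So the intersection pattern of a non-external hyperplane with $L_U$ behaves exactly like an $\fq$-linear set of rank $\geq e$ inside $H=\PG(k-2,q^m)$.

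The key step is a double count of incidences between points of $L_U$ and non-external hyperplanes. First I would bound the size of $L_U$: since $L_U$ spans $\PG(k-1,q^m)$ and has minimum point-weight $e$, Proposition \ref{pro:sizeminweight} gives
\[
q^{n-e}+\frac{q^{k-1}-1}{q-1}\leq |L_U|\leq \frac{q^{n}-1}{q^e-1}.
\]
Then, writing $\tau_{\geq 1}=\frac{q^{m(k-1)}-1}{q^m-1}-\tau_0$ for the number of non-external hyperplanes, I would count pairs $(P,H)$ with $P\in L_U\cap H$. Each point lies in $\frac{q^{m(k-1)}-1}{q^m-1}$ hyperplanes, so the total is $|L_U|\cdot\frac{q^{m(k-1)}-1}{q^m-1}$; on the other hand each non-external $H$ contains at least $\frac{q^{e}-1}{q^e-1}=1$ point but in fact, since $H\cap L_U$ is itself a linear set of rank $\geq e$ in $H$ meeting a point of weight $e$, it contains at least $\frac{q^e-1}{q-1}$ points — more usefully, I want the sharper fact that a non-external hyperplane contains at least $q^{e}-1)/(q-1)$... actually the clean bound to use is that $H\cap L_U\neq\emptyset$ forces $|H\cap L_U|\geq 1$ and, combined with the congruence arguments and the rank-$\geq e$ structure, the right count to extract the claimed powers of $q$ is that the \emph{number of non-external hyperplanes through a fixed point of minimum weight $e$} is controlled. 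Concretely, fix a point $P_0$ with $w_{L_U}(P_0)=e$; projecting $L_U$ from $P_0$ onto a $\PG(k-2,q^m)$ yields an $\fq$-linear set of rank $n-e$ spanning that space, whose complement (external hyperplanes of the projection) is in bijection with external hyperplanes of $L_U$ not through $P_0$ — but all external hyperplanes miss $P_0$, so this is all of $\tau_0$. Applying the rank-$(k-1)$, length-$(n-e)$ version of the bounds recursively (or directly Proposition \ref{pro:sizeminweight} in the hyperplane at infinity) gives exactly
\[
q^{m(k-2)}\Bigl(q^{n-e}+\tfrac{q^{k-2}-1}{q-1}\Bigr)\ \leq\ \tfrac{q^{m(k-1)}-1}{q^m-1}-\tau_0\ \leq\ q^{m(k-2)}\cdot\tfrac{q^{n-e}-1}{q^e-1},
\]
from which the stated inequalities on $M(\C)/(q^m-1)$ follow after rearranging and absorbing lower-order terms; the final equality $m(k-2)+n=\lfloor\log_q(q^{m(k-1)}-M(\C)/(q^m-1))\rfloor+e$ is then just reading off the dominant power of $q$, exactly as in Theorem \ref{thm:2dim2}.

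The main obstacle I expect is making the projection/recursion argument rigorous: one must check that projecting $L_U$ from a minimum-weight point $P_0$ does produce a linear set of rank exactly $n-e$ that still spans $\PG(k-2,q^m)$ (the spanning part uses that $L_U$ spans the whole space, the rank computation uses $w_{L_U}(P_0)=e$), and that external hyperplanes upstairs correspond bijectively to external hyperplanes of the projection downstairs. A secondary subtlety is that the lower bound on $|L_U|$ from Proposition \ref{pro:sizeminweight} requires $L_U$ to span, which is guaranteed since $U$ is a system, but one should double-check that the minimum-weight point needed for that proposition exists — it does, by definition of $e$ as the second maximum weight being attained. Once the projection is set up, everything else is the same incidence-counting and power-of-$q$ bookkeeping already used in the two-dimensional and $k>2$ cases above.
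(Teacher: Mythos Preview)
Your projection step contains a genuine error. When you project $L_U$ from a minimum-weight point $P_0$ onto $\PG(k-2,q^m)$ (equivalently, pass to the quotient $V/\langle v_0\rangle_{\F_{q^m}}$), the hyperplanes of the quotient space correspond bijectively to the hyperplanes of $\PG(k-1,q^m)$ \emph{through} $P_0$, not to those missing $P_0$. Concretely, a hyperplane $\overline H$ of the quotient is external to $L_{\overline U}$ iff its pullback $H\supset P_0$ satisfies $U\cap H=U\cap\langle v_0\rangle_{\F_{q^m}}$, i.e.\ $H$ is tangent to $L_U$ at $P_0$. So your ``bijection with external hyperplanes of $L_U$ not through $P_0$'' is false: the projection counts tangent hyperplanes at $P_0$, never $\tau_0$. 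The displayed inequality you write down (including the total $\frac{q^{m(k-1)}-1}{q^m-1}$ instead of $\frac{q^{mk}-1}{q^m-1}$ for the number of hyperplanes) therefore has no justification from this argument.

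The paper's approach avoids this by quotienting the other way: it fixes a hyperplane $\pi$ with $w_{L_U}(\pi)=e$ (so $\pi\cap L_U=\{P\}$), and for each $(k-3)$-space $W\subset\pi$ with $P\notin W$ passes to the quotient by $W$, landing in $\PG(1,q^m)$ rather than $\PG(k-2,q^m)$. Since $U\cap W=\{0\}$, the image $\overline U$ has rank $n$, the point $\pi/W$ has weight exactly $e$, and every point of $L_{\overline U}$ has weight $\ge e$; Proposition~\ref{pro:sizeminweight} on the line then gives $q^{n-e}+1\le|L_{\overline U}|\le\frac{q^n-1}{q^e-1}$. Every hyperplane $H$ not through $P$ meets $\pi$ in exactly one such $W$, and there are $q^{m(k-2)}$ choices of $W$; summing $|L_{\overline U}|-1$ over them and adding the $\frac{q^{m(k-1)}-1}{q^m-1}$ hyperplanes through $P$ yields the count of non-external hyperplanes and hence the stated bounds on $\tau_0$. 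A secondary point you should also address: the translation ``second maximum weight $n-e$ $\Rightarrow$ minimum point weight $e$'' is not automatic; the paper proves it using Lemma~\ref{lem:tangenthyperlinset} (through any point there is a tangent hyperplane), and you would need that step too.
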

\begin{proof}
Let $U$ be any associated system with $\C$ and let $c \in \C$ with $w(c)=n-e$.
We determine a lower and an upper bound on the number of external hyperplanes to $L_U$ in $\PG(k-1,q^m)$.
Since $n-e$ is the second maximum weight in $\C$, by Theorem \ref{th:connection} we have that
\begin{equation} \label{eq:minweighthyperplanes}
e=\min \{\dim_{\fq}(U\cap H) \colon  H\mbox{ is an } \F_{q^m}\mbox{-hyperplane of }\F_{q^m}^k \mbox{ such that }U \cap H \neq \{0\}\}.
\end{equation}
We prove that, $e=\min \{w_{L_U}(P) \colon  P \in L_U\}$. Suppose that there exists a point $Q \in L_U$ such that $w_{L_U}(Q)=e'<e$. Then by Lemma \ref{lem:tangenthyperlinset}, there exists a hyperplane $H'$ through $Q$ that is tangent to $L_U$. This means that $\dim_{\F_q}(U \cap H')=e'$, contradicting \eqref{eq:minweighthyperplanes}. On the other hand, if $w_{L_U}(P) \geq e+1$, for every $P \in L_U$, then we have $\dim_{\fq}(U\cap H) \geq e+1$, for every  $H$ hyperplane of $\F_{q^m}^k \mbox{ such that }U \cap H \neq \{0\}$, contradicting again \eqref{eq:minweighthyperplanes}. \\
%By \cite[Lemma 3.8]{SamPaolo}, there exists an $\F_q$-linear set $L_{U'}$ of $\PG(k-1,q^m)$ containing a point $P$ of weight $1$ having rank $n-e+1$ such that $L_U=L_{U'}$. So, to determine lower and an upper bound on the number of external hyperplanes to $L_U$ is equivalent to determine lower and an upper bound on the number of external hyperplanes to $L_{U'}$.
Since $w(c)=n-e$, then by Theorem \ref{th:connection} there exists a projective hyperplane $\pi=\PG(H,\F_{q^m})$ such that $w_{L_U}(\pi)=e$.
Since $w_{L_U}(Q)\geq e$ for any point $Q\in L_U$, then $\pi \cap L_U=\{P\}$, for some point $P$.
%So let $P \in L_U$ such that $w_{L_U}(P)=e$. By Lemma \ref{lem:tangenthyperlinset}, there exists a hyperplane $H$ through $P$ tangent to $L_{U}$ and so $\dim_{\F_q}(U \cap H)=e$.
Let $\PG(k-3,q^m)=\PG(W,\F_{q^m})$ be any projective hyperplane of $\pi$ not containing $P$, i.e. $W\cap U=\{0\}$, and let $\overline{U}=(U+W)/W$, which is an $\fq$-subspace of the quotient $V=\F_{q^m}^k/W=V(2,q^m)$.
Since $U\cap W=\{0\}$, $L_{\overline{U}}$ is an $\fq$-linear set of rank $n$ contained in $\PG(1,q^m)=\PG(V,\F_{q^m})$ having $\langle (U\cap H + W)/W \rangle_{\F_{q^m}}$ as a point of weight $e$, and all the other points of weight greater than or equal than $e$. Hence, by Proposition \ref{pro:sizeminweight}
\[ q^{n-e}+1 \leq |L_{\overline{U}}|\leq \frac{q^n-1}{q^e-1}. \]
Therefore, 
\begin{equation}\label{eq:boundquotient}
|L_{\overline{U}}|=q^{n-e}+c+1,
\end{equation}
for some integer $c$ such that 
\begin{equation}\label{eq:19}
0\leq c \leq \frac{q^{n-e}-1}{q^e-1}-1.
\end{equation}
Now, the size of $L_{\overline{U}}$ is the number of projective hyperplanes $\PG(W,\F_{q^m})$ meeting $L_U$ in at least one point.
The number of projective hyperplanes in $\PG(k-1,q^m)$ passing through $P$ is 
\[\frac{q^{m(k-1)}-1}{q^m-1},\]
whereas the number of projective hyperplanes in $\pi$ not passing through $P$ is
\[\frac{q^{m(k-1)}-1}{q^m-1}-\frac{q^{m(k-2)}-1}{q^m-1}=q^{m(k-2)}.\]
Denote by $\pi'_1=\PG(W_1,\F_{q^m}),\ldots,\pi'_{q^{m(k-2)}}=\PG(W_{q^{m(k-2)}},\F_{q^m})$ the projective hyperplanes of $\pi$ not passing through $P$ and, because of \eqref{eq:boundquotient} we can write
\[ |L_{(U+W_i)/W_i}|=q^{n-e}+c_i+1, \]
for any $i$. Plugging together all of the above information, the number of projective hyperplanes meeting $L_U$ in at least one point is
\[ \frac{q^{m(k-1)}-1}{q^m-1}+\sum_{i=1}^{q^{m(k-2)}} (q^{n-e}+c_i) \]
that is
\[ q^{m(k-2)+n-e}+\sum_{i=1}^{q^{m(k-2)}} c_i + \frac{q^{m(k-1)}-1}{q^m-1}.  \]
Therefore, the number of external hyperplanes to $L_U$ is
\[ \frac{M(\C)}{q^m-1}=\frac{q^{mk}-1}{q^m-1}-q^{m(k-2)+n-e}-\sum_{i=1}^{q^{m(k-2)}} c_i - \frac{q^{m(k-1)}-1}{q^m-1}, \]
i.e.,
\[ \frac{M(\C)}{q^m-1}=q^{m(k-1)}-q^{m(k-2)+n-e}-\sum_{i=1}^{q^{m(k-2)}} c_i , \]
and hence the assertion follows by \eqref{eq:19}.
%\[ q^{m(k-1)}-q^{m(k-2)+n-e}-q^{m(k-2)} (q^{n-2e}+\ldots+q^e) \leq \frac{M(\C)}{q^m-1}\leq q^{m(k-1)}-q^{m(k-2)+n-e}.\]
\end{proof}

\begin{remark}\label{rk:disjoint}
The bounds of the above theorem depends on the second maximum weight and the possible values of $M(\C)$ are in disjoint intervals (according to $e$).
Moreover, once four values among $m,n,q, M(\C)$ and the second maximum weight $m-e$ are known, then one can determine the remaining one directly from the relation $m(k-2)+n  = \lfloor \log_q(q^{m(k-1)}-\frac{M(\C)}{q^m-1}) \rfloor + e$.
\end{remark}

\subsection{Larger dimension case and $m\leq n$}

In this section we will now deal with the case in which $n\geq m$. As for the previous section, we give a geometric interpretation for the value of $M(\C)$, which corresponds to count the number of external points to the dual of the linear set defined by a system associated with $\C$.

\begin{proposition}\label{prop:points-codewordsk2}
Let $\C$ be a non-degenerate $[n,k]_{q^m/q}$ code and let $U$ be any associated system. Assume that $m\leq n$. Then 
\[M(\C)=(q^m-1) |\{ H=\PG(k-2,q^m) \colon w_{L_U}(H)=n-m \}|\]
\[ =(q^m-1) |\PG(k-1,q^m)\setminus L_{U^{\perp'}}|. \]
\end{proposition}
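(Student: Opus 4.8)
The plan is to reduce both claimed equalities to statements already proved. For the first equality, I would start exactly as in the proof of Proposition~\ref{prop:points-codewordsk}: fix a generator matrix $G$ of $\C$ whose columns span $U$ over $\fq$, and use \eqref{eq:relweight}, which gives $w(uG) = n - w_{L_U}(\langle u^\perp\rangle_{\F_{q^m}})$ for every nonzero $u \in \F_{q^m}^k$. Hence $uG$ has maximum weight $\min\{m,n\} = m$ (using $m \le n$) if and only if the hyperplane $H = \langle u^\perp\rangle_{\F_{q^m}}$ satisfies $w_{L_U}(H) = n - m$. Since the map $u \mapsto \langle u^\perp\rangle$ sends the $q^m-1$ nonzero scalar multiples of a fixed $u$ to the same hyperplane, counting codewords of weight $m$ amounts to $(q^m-1)$ times the number of hyperplanes $H$ of $\PG(k-1,q^m)$ with $w_{L_U}(H) = n-m$, which is the first displayed expression.

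For the second equality I would invoke the duality of linear sets from Section~\ref{sec:Dualityfq-sub}. Apply Proposition~\ref{prop:weightdual} with $s = k-2$ (so $\Omega_{k-2}$ is a hyperplane) and rank $n$ in $\PG(k-1,q^m)$: for a hyperplane $H = \PG(W,\F_{q^m})$ we get
\[
w_{L_U^\tau}(H^\tau) = w_{L_U}(H) + km - n - (k-1)m = w_{L_U}(H) + m - n.
\]
Thus $w_{L_U}(H) = n-m$ if and only if $w_{L_U^\tau}(H^\tau) = 0$, i.e.\ the point $H^\tau = P$ of $\PG(k-1,q^m)$ lies outside $L_{U^{\perp'}} = L_U^\tau$. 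Since the polarity $\tau$ is a bijection between hyperplanes and points of $\PG(k-1,q^m)$, the number of such hyperplanes equals $|\PG(k-1,q^m)\setminus L_{U^{\perp'}}|$, giving the second equality. One should also note, as in Remark~\ref{rk:geogenrank}, that $U^{\perp'}$ genuinely defines a linear set of rank $km-n$; here no nondegeneracy hypothesis on the dual is needed for the mere counting identity, since even if $L_{U^{\perp'}}$ fails to span, the point-set complement is still well-defined.

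The argument is essentially bookkeeping, so there is no serious obstacle; the only point requiring a little care is making sure the weight shift in Proposition~\ref{prop:weightdual} is applied with the correct value of $s$ and that $\min\{m,n\} = m$ is used consistently (this is where the hypothesis $m \le n$ enters, exactly as $n \le m$ was used in Proposition~\ref{prop:points-codewordsk}). I would also remark that the first equality already appears implicitly in the $n \le m$ case, and the content of this proposition is really the translation into the dual linear set, which is what makes the later bounds (via bounds on $|L_{U^{\perp'}}|$) possible.
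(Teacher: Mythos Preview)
Your proposal is correct and follows essentially the same approach as the paper: the first equality comes from \eqref{eq:relweight} exactly as in Proposition~\ref{prop:points-codewordsk}, and the second from applying Proposition~\ref{prop:weightdual} to translate hyperplanes of weight $n-m$ in $L_U$ into points of weight $0$ in $L_{U^{\perp'}}$. The paper's proof is just a terser version of what you wrote, omitting the explicit computation of the weight shift and the $(q^m-1)$-to-$1$ counting, which you spell out in full.
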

\begin{proof}
Let $G$ be a generator matrix of $\C$ such that the $\fq$-span of its columns is $U$. Then by \eqref{eq:distancedesign}, a codeword $c=uG$ has maximum weight if and only if
\[w_{L_U}(u^\perp)=n-w(uG)=n-m,\]
and hence the first equality follows. The second one follows by appyling Proposition \ref{prop:weightdual}.
\end{proof}

We can now derive bounds on $M(\C)$ by making use of the bounds on the number of points of linear sets.

\begin{theorem}\label{th:k>2weak}
Let $\mathcal{C}\subseteq \mathbb{F}_{q^m}^n$ be a non-degenerate $[n,k]_{q^m/q}$ code. Assume that $m\leq n$ and $d_{k-1}^{\mathrm{rk}}(\C)\geq n-m+1$.
Then
\begin{equation}\label{eq:boundmC2dualk}
\frac{q^{km}-1}{q^m-1} -\frac{q^{km-n}-1}{q-1} \leq  \frac{M(\C)}{q^m-1} \leq \frac{q^{km}-1}{q^m-1} -\frac{q^k-1}{q-1}.
\end{equation}
In particular, if the second maximum weight of $\C$ is $m-e$,
\begin{equation}\label{eq:boundmC2scattk2}
 \frac{q^{km}-1}{q^m-1}-\frac{q^{km-n}-1}{q^e-1} \leq \frac{M(\C)}{q^m-1} \leq \frac{q^{km}-1}{q^m-1}-\left( q^{km-n-e}+\frac{q^{k-1}-1}{q-1}\right),
\end{equation}
i.e.
$km-n = \lfloor \log_q(\frac{q^{mk}-1}{q^m-1}-\frac{M(\C)}{q^m-1}) \rfloor + e$.
\end{theorem}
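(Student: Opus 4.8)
The plan is to reduce everything to counting external points of the dual linear set $L_{U^{\perp'}}$, exactly as Proposition \ref{prop:points-codewordsk2} allows. Let $U$ be any system associated with $\C$; since $d_{k-1}^{\mathrm{rk}}(\C) \geq n-m+1$, Remark \ref{rk:geogenrank} guarantees that $U^{\perp'}$ is a system, i.e.\ $L_{U^{\perp'}}$ spans $\PG(k-1,q^m)$, and by Proposition \ref{prop:weightdual} its rank equals $km-n$. By Proposition \ref{prop:points-codewordsk2} we have $M(\C) = (q^m-1)\left(\frac{q^{km}-1}{q^m-1} - |L_{U^{\perp'}}|\right)$, so the whole theorem is just a translation of size bounds on $L_{U^{\perp'}}$.

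For the first pair of inequalities \eqref{eq:boundmC2dualk}: the upper bound on $|L_{U^{\perp'}}|$ is the trivial cardinality bound \eqref{eq:card}, $|L_{U^{\perp'}}| \leq \frac{q^{km-n}-1}{q-1}$, which yields the stated lower bound on $M(\C)$; the lower bound on $|L_{U^{\perp'}}|$ is \eqref{eq:boundsubgeomcontainedpre}, namely $|L_{U^{\perp'}}| \geq \frac{q^k-1}{q-1}$ since $L_{U^{\perp'}}$ spans the whole space, which yields the stated upper bound on $M(\C)$. These are immediate. For the refined bounds \eqref{eq:boundmC2scattk2}, assume the second maximum weight of $\C$ is $m-e$. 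The key point is to identify $e$ with $\min_{P \in L_{U^{\perp'}}} w_{L_{U^{\perp'}}}(P)$. A codeword of weight $m-e$ corresponds, via \eqref{eq:distancedesign}/Proposition \ref{prop:weightdual}, to a hyperplane $\pi$ of $\PG(k-1,q^m)$ with $w_{L_U}(\pi) = n-m+e$, dually a point $P = \pi^\tau$ with $w_{L_{U^{\perp'}}}(P) = e$; and I must argue no point of $L_{U^{\perp'}}$ has smaller weight — this follows because a point of weight $e' < e$ in $L_{U^{\perp'}}$ would, dualizing, give a hyperplane meeting $L_U$ in weight $n-m+e'$, hence a codeword of weight $m-e' > m-e$ strictly between $m-e$ and the maximum $m$, contradicting that $m-e$ is the \emph{second} maximum weight. (One should also use that $L_{U^{\perp'}}$ has more than one point, which holds since its rank $km-n \geq 1$ and it spans a space of dimension $k-1 \geq 2$; alternatively Lemma \ref{lem:tangenthyperlinset} applies once $km-n \geq k$, and the small cases can be checked directly.) Then Proposition \ref{pro:sizeminweight} gives $q^{km-n-e} + \frac{q^{k-1}-1}{q-1} \leq |L_{U^{\perp'}}| \leq \frac{q^{km-n}-1}{q^e-1}$, and plugging these two bounds into the displayed formula for $M(\C)$ produces \eqref{eq:boundmC2scattk2}. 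Finally, the identity $km - n = \lfloor \log_q(\frac{q^{mk}-1}{q^m-1} - \frac{M(\C)}{q^m-1}) \rfloor + e$ is just the "moreover" part of Proposition \ref{pro:sizeminweight} applied to $L_{U^{\perp'}}$ (whose rank is $km-n$ and whose minimum point-weight is $e$), rewritten using $|L_{U^{\perp'}}| = \frac{q^{mk}-1}{q^m-1} - \frac{M(\C)}{q^m-1}$.

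The main obstacle I anticipate is not any hard calculation but the careful bookkeeping around the two edge conditions needed to invoke Proposition \ref{pro:sizeminweight}: first, that $L_{U^{\perp'}}$ genuinely contains more than one point (so the proposition applies at all), and second, the clean identification of $e$ as the minimum point-weight of the dual linear set rather than something smaller. The latter is the real content: it uses in an essential way that $m-e$ is the \emph{second} maximum and that the maximum weight $m = \min\{m,n\}$ is attained — this is where Proposition \ref{alfarano2022linear}'s existence of a maximum-weight codeword (or equivalently the fact that $L_{U^{\perp'}}$ is not the whole space trivially) and the dictionary of Proposition \ref{prop:weightdual} must be combined correctly, paralleling the argument already carried out in the proof of Theorem \ref{thm:boundn<=m2}. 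Everything else is substitution into the size bounds and the formula $M(\C) = (q^m-1)|\PG(k-1,q^m) \setminus L_{U^{\perp'}}|$.
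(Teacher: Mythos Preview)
Your proposal is correct and follows essentially the same route as the paper: reduce to counting points of $L_{U^{\perp'}}$ via Proposition~\ref{prop:points-codewordsk2}, use Remark~\ref{rk:geogenrank} to ensure it spans $\PG(k-1,q^m)$, apply \eqref{eq:card} and \eqref{eq:boundsubgeomcontainedpre} for \eqref{eq:boundmC2dualk}, and then identify $e$ as the minimum point-weight of $L_{U^{\perp'}}$ (via \eqref{eq:relweight} and Proposition~\ref{prop:weightdual}) to invoke Proposition~\ref{pro:sizeminweight} for \eqref{eq:boundmC2scattk2}. Your extra care about the edge conditions (that $L_{U^{\perp'}}$ has more than one point, and the precise justification that no point can have weight below $e$) is more explicit than the paper's own terse treatment, but the argument is the same.
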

\begin{proof}
Let $U$ be any system associated with $\C$.
By Proposition \ref{prop:points-codewordsk2}, 
\[M(\C)=(q^m-1)|\PG(k-1,q^m)\setminus L_{U^{\perp'}}|.\]
Since $L_{U^{\perp'}}$ has rank $km-n\leq (k-1)m$ and $\langle L_{U^{\perp'}}\rangle =\PG(k-1,q^m)$ by Remark \ref{rk:geogenrank}, then
\[ \frac{q^k-1}{q-1}\leq |L_{U^{\perp'}}|\leq \frac{q^{km-n}-1}{q-1}.\]
Moreover, if the second maximum weight in $\C$ is $m-e$ then this means that there exists a point $P$ such that $w_{L_{U^{\perp'}}}(P)=e$ and $w_{L_{U^{\perp'}}}(Q)\geq e$ for any point $Q\in L_{U^{\perp'}}$, because of \eqref{eq:relweight} and Proposition \ref{prop:weightdual}.
Therefore, by Proposition \ref{pro:sizeminweight} we have
\[ q^{km-n-e}+\frac{q^{k-1}-1}{q-1} \leq  |L_{U^{\perp'}}|\leq \frac{q^{km-n}-1}{q^e-1} \]
and \eqref{eq:boundmC2scattk2} follows.
\end{proof}

\begin{remark}\label{rk:disjoint2}
The same remark as Remark \ref{rk:disjoint} applies to the bounds \eqref{eq:boundmC2scattk2}.
\end{remark}

The above lower bound \eqref{eq:boundmC2scattk2} can be proved with less restrictive hypothesis but with a more involved condition.

\begin{theorem}\label{th:boundsubgeometrycodes}
Let $\C$ be a non-degenerate $[n,k]_{q^m/q}$ code and assume that $m\leq n$ and $d_{k-1}^{\mathrm{rk}}(\C)\geq n-m+1$. 
Let $G'$ be any of generator matrix of $\C^{\perp_{\mathcal{G}}}$. Suppose there exist $r\geq 1$ codewords $c_1,\ldots,c_r \in \C^{\perp_{\mathcal{G}}}$ $\F_{q^m}$-linearly independent such that the $\fq$-subspace
\[ W=\psi_{G'} \left( \bigcap_{i=1}^r \mathrm{supp}(c_i)^\perp \right)\]
satisfies $\dim_{\fq}(W)=\dim_{\F_{q^m}}(\langle W \rangle_{\F_{q^m}})=k-r$.
%is isomorphic to a $[k-r,k-r]_{q^m/q}$ system.
Then
\begin{equation}\label{eq:boundmC2scattk2boundnew}
\frac{M(\C)}{q^m-1} \leq \frac{q^{km}-1}{q^m-1}-\left( q^{km-n-1}+\ldots+q^{km-n-k+r}+\frac{q^r-1}{q-1}\right).
\end{equation}
\end{theorem}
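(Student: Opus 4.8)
The plan is to mimic the proof of Theorem~\ref{th:k>2weak}, but replace the use of Proposition~\ref{pro:sizeminweight} (the ``minimum size given minimum weight'' bound) by the sharper bound of Theorem~\ref{thm:newBound}, which applies precisely when the linear set meets some subspace in a canonical subgeometry. By Proposition~\ref{prop:points-codewordsk2} we have $M(\C)=(q^m-1)|\PG(k-1,q^m)\setminus L_{U^{\perp'}}|$, so it suffices to prove a \emph{lower} bound on $|L_{U^{\perp'}}|$, namely
\[
|L_{U^{\perp'}}|\geq q^{km-n-1}+\ldots+q^{km-n-k+r}+\frac{q^{r}-1}{q-1}.
\]
Note $L_{U^{\perp'}}$ has rank $km-n$ and spans $\PG(k-1,q^m)$ by Remark~\ref{rk:geogenrank}.

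First I would translate the hypothesis on the $c_i$ into a statement about $L_{U^{\perp'}}$. Take $U'=U^{\perp'}$, a system associated with $\C^{\perp_{\mathcal G}}$, and let $G'$ be the generator matrix of $\C^{\perp_{\mathcal G}}$ whose columns span $U'$. Writing $c_i=v_iG'$, Theorem~\ref{th:suppogeom} gives $\psi_{G'}^{-1}(U'\cap v_i^\perp)=\supp(c_i)^\perp$, hence $\psi_{G'}\big(\bigcap_{i=1}^r\supp(c_i)^\perp\big)=U'\cap\big(\bigcap_i v_i^\perp\big)=U'\cap \Omega$, where $\Omega=\PG(W',\F_{q^m})$ is the projective $(k-r-1)$-space with $W'=\bigcap_i v_i^\perp$ (this is genuinely of codimension $r$ because the $c_i$, equivalently the $v_i$, are $\F_{q^m}$-linearly independent). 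The hypothesis $\dim_{\fq}(W)=\dim_{\F_{q^m}}(\langle W\rangle_{\F_{q^m}})=k-r$ then says exactly that $\dim_{\fq}(U'\cap W')=k-r=\dim_{\F_{q^m}}W'$, i.e.\ $L_{U'}$ meets the $(k-r-1)$-space $\Omega$ in a linear set of rank equal to its projective dimension plus one which spans $\Omega$ — that is, in a canonical subgeometry $\PG(k-r-1,q)$ of $\Omega$.

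Now I would apply Theorem~\ref{thm:newBound} to $L_{U'}$ with the roles ``$k$'' $\mapsto k$, ``$n$'' $\mapsto km-n$, and ``$r$'' (the dimension of the subspace $\Omega$, which there is the $(r-1)$-space) $\mapsto k-r$; one should check $k-r<k-1$, i.e.\ $r\geq 2$ — for $r=1$ the subgeometry is a single point and the bound \eqref{eq:debeulebound} degenerates, but in that case the conclusion reduces to the weaker bound $|L_{U'}|\geq q^{km-n-1}+\frac{q^{k-1}-1}{q-1}$ coming from Proposition~\ref{prop:points-codewordsk2} applied to a point of weight $\geq 1$, which is consistent with \eqref{eq:boundmC2scattk2boundnew}; alternatively one invokes Theorem~\ref{thm:newBound} verbatim since its statement allows $r<k-1$. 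This yields
\[
|L_{U'}|\geq q^{(km-n)-1}+\ldots+q^{(km-n)-(k-r)}+\frac{q^{k-(k-r)}-1}{q-1}
=q^{km-n-1}+\ldots+q^{km-n-k+r}+\frac{q^{r}-1}{q-1},
\]
and subtracting from $\frac{q^{km}-1}{q^m-1}$ and multiplying by $q^m-1$ gives \eqref{eq:boundmC2scattk2boundnew}.

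The main obstacle I expect is the bookkeeping in the second step: verifying carefully that $\psi_{G'}$ carries $\bigcap_i\supp(c_i)^\perp$ to $U'\cap W'$ with $W'$ an $\F_{q^m}$-subspace of codimension exactly $r$, and that the two equal dimensions in the hypothesis are precisely the ``canonical subgeometry'' condition of Theorem~\ref{thm:newBound} (rank $=$ projective dimension $+\,1$, and spanning). The index-shifting when matching the parameters of Theorem~\ref{thm:newBound} also needs care — in particular confirming that the range condition $r<k-1$ there becomes $k-r<k-1$, i.e.\ $r\geq 2$, and handling $r=1$ separately as above. Once the dictionary between the code-theoretic hypothesis and the linear-set hypothesis is set up correctly, the rest is a direct substitution.
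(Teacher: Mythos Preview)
Your proposal is correct and follows essentially the same route as the paper: identify $W$ via Theorem~\ref{th:suppogeom} with $U^{\perp'}\cap(v_1^\perp\cap\cdots\cap v_r^\perp)$, observe that the dimension hypothesis makes $L_W$ a canonical subgeometry $\PG(k-r-1,q)$ inside the $(k-r-1)$-space $\Omega=\PG(\bigcap_i v_i^\perp,\F_{q^m})$, apply Theorem~\ref{thm:newBound}, and conclude via Proposition~\ref{prop:points-codewordsk2}. One small bookkeeping slip: for $r=1$ (in the notation of the present theorem) the subgeometry is $\PG(k-2,q)$ in a hyperplane, not a single point---you momentarily conflated the two $r$'s---but the paper, like your ``alternative'' remark, simply invokes Theorem~\ref{thm:newBound} directly without isolating this case.
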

\begin{proof}
Let $U$ be the system associated with $\C$ such that $U^{\perp'}$ is the $\fq$-span of the columns of $G'$. Note that $U^{\perp'}$ has dimension $km-n$.
Since $c_i \in \C^{\perp_{\mathcal{G}}}$, for any $i \in \{1,\ldots,r\}$ there exists $v_i \in \F_{q^m}^k$ such that $c_i=v_i G'$ and $v_1,\ldots,v_r$ are $\F_{q^m}$-linearly independent. Therefore, by Theorem \ref{th:suppogeom}
\[ W=\psi_{G'} \left( \bigcap_{i=1}^r \mathrm{supp}(c_i)^\perp \right)= (U^{\perp'}\cap v_1^{\perp})\cap \ldots (U^{\perp'}\cap v_r^{\perp})=U^{\perp'}\cap (v_1^{\perp}\cap\ldots\cap v_r^{\perp}), \]
and note that $v_1^{\perp}\cap\ldots\cap v_r^{\perp}=\langle v_1,\ldots,v_r\rangle_{\F_{q^m}}^{\perp}$ is an $\F_{q^m}$-subspace of $\F_{q^m}^k$ having dimension $k-r$ meeting $U^{\perp'}$ in $W$, which is an $\fq$-subspace such that $\dim_{\fq}(W)=\dim_{\F_{q^m}}(\langle W \rangle_{\F_{q^m}})=k-r$, contained in $L_{U^{\perp'}}$. Therefore, $L_W$ is a canonical subgeometry $\PG(k-r-1,q)$ in $\PG(v_1^{\perp}\cap\ldots\cap v_r^{\perp},\F_{q^m})=\PG(k-r-1,q^m)$ contained in $L_{U^{\perp'}}$. 
By Theorem \ref{thm:newBound} we have
\[ |L_{U^{\perp'}}| \geq q^{km-n-1}+\ldots+q^{km-n-k+r}+\frac{q^r-1}{q-1}. \]
The bound then follows from Proposition \ref{prop:points-codewordsk2}.
\end{proof}

\begin{remark}
Note that Theorem \ref{th:boundsubgeometrycodes} extends \eqref{eq:boundmC2scattk2} of Theorem \ref{th:k>2weak} since the property of containg a codeword of weight $m-1$ is equivalent to require the existence of $k-1$ $\F_{q^m}$-linearly independent codewords in $\C^{\perp_{\mathcal{G}}}$ such that 
\[ \dim_{\fq}\left(\psi_{G'} \left( \bigcap_{i=1}^r \mathrm{supp}(c_i)^\perp \right)\right)=1, \]
and hence clearly satisfies the assumption of the aforementioned theorem.
\end{remark}

\begin{remark}
Following the proof of the above result, the assumption $\dim_{\fq}(W)=$ \newline $\dim_{\F_{q^m}}(\langle W \rangle_{\F_{q^m}})=k-r$ is equivalent to the existence of a projective subspace $\Omega$ of codimension $r$ meeting $L_{U^{\perp'}}$ is a canonical subgeometry of $\Omega$.
Indeed, this allowed us to use Theorem \ref{thm:newBound}. 
\end{remark}

\section{Equality in the lower bounds}\label{sec:lowerbounds}

In this section we study the case of equality in the lower bounds determined in the previous section.
We start with a geometric characterization of $\mathbb{F}_{q^m}$-linear MRD codes in $\mathbb{F}_{q^m}^n$ with dimension two, as an easy consequence of the geometric correspondence described in Section \ref{sec:georankmetric}.

\begin{proposition}\label{prop:2dimscatt}
Let $\C$ be a non-degenerate $[n,2,d]_{q^m/q}$ code and let $U$ be any associated system.
If $\C$ is not the simplex code, then $\C$ is MRD if and only if $U$ is a scattered subspace and $n\leq m$.
\end{proposition}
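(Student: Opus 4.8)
The plan is to read everything off from the geometric dictionary of Section~\ref{sec:georankmetric} together with the Singleton-like bound, exactly as the phrase ``easy consequence'' suggests. The crucial preliminary remark is that for $k=2$ the $\F_{q^m}$-hyperplanes of $\F_{q^m}^2$ are precisely the $1$-dimensional $\F_{q^m}$-subspaces, hence correspond to the points of $\PG(1,q^m)$. Thus \eqref{eq:distancedesign} of Theorem~\ref{th:connection} becomes
\[
d \;=\; n-\max\{\,w_{L_U}(P)\colon P\in\PG(1,q^m)\,\}\;=\;n-\max\{\,w_{L_U}(P)\colon P\in L_U\,\},
\]
the second equality holding because $U\neq\{0\}$ forces $L_U\neq\emptyset$ while points outside $L_U$ have weight $0$. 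Writing $w_{\max}$ for this maximum (so $w_{\max}\geq 1$), the definition of scattered gives: $U$ is scattered $\iff$ every point of $L_U$ has weight $1$ $\iff$ $w_{\max}=1$ $\iff$ $d=n-1$.

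Next I would record what ``MRD'' means here. Since rank weights never exceed $\min\{m,n\}$ we have $d\leq\min\{m,n\}$, and Theorem~\ref{th:singletonrank} reads $2m\leq\max\{m,n\}(\min\{m,n\}-d+1)$. If $n\leq m$ this is $2\leq n-d+1$, so $\C$ is MRD $\iff d=n-1$, i.e.\ (by the first paragraph) $\iff U$ is scattered. If instead $n>m$, then $\max\{m,n\}=n$, $\min\{m,n\}=m$, and being MRD means $n(m-d+1)=2m$; since $n>m\geq m-d+1\geq 1$, this can only happen with $m-d+1=1$ and $n=2m$, i.e.\ $\dim_{\fq}U=n=2m=mk$, forcing $U=\F_{q^m}^2$ and hence $\C$ the simplex code.

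Combining: for the forward implication assume $\C$ is MRD and not the simplex code; the case analysis rules out $n>m$, so $n\leq m$ and MRD is equivalent to $d=n-1$, i.e.\ to $U$ being scattered. For the converse, if $U$ is scattered then Theorem~\ref{th:boundscattrank} (with $k=2$) yields $n\leq m$, and scattered means $d=n-1$, which in the range $n\leq m$ is precisely MRD (and then $\C$ cannot be the simplex code, since that would require $n=2m>m$). The whole argument is essentially bookkeeping, and I do not expect a genuine obstacle: the one place where the hypothesis ``$\C$ not the simplex code'' is essential — and the only spot where care is needed — is the exclusion of $n>m$ in the forward direction, where the MRD equality pins down $n=2m$ and collapses $U$ onto all of $\F_{q^m}^2$.
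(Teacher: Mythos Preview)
Your proof is correct and follows essentially the same route as the paper: the Singleton equality $2m=\max\{m,n\}(\min\{m,n\}-d+1)$ together with the dictionary \eqref{eq:distancedesign} between $d$ and the maximum point-weight of $L_U$ gives scattered $\iff d=n-1\iff$ MRD when $n\leq m$, while for $n>m$ the equality forces $n=2m$ and $U=\F_{q^m}^2$. Your additional observation that scattered already implies $n\leq m$ via Theorem~\ref{th:boundscattrank} is a nice touch (it shows the condition ``$n\leq m$'' on the right-hand side is in fact redundant), but the overall argument is the same as the paper's.
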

\begin{proof}
Since $\dim_{\mathbb{F}_{q^m}}(\C)=2$, then by Theorem \ref{th:singletonrank} the code $\C$ is MRD if and only if $2m=\max\{m,n\}(\min\{m,n\}-d+1)$.
If $n\leq m$, then $d=n-1$ and by \eqref{eq:distancedesign} this is equivalent to require that $U$ is a scattered $\F_q$-subspace.
If $m<n$, then by \eqref{eq:distancedesign} $n\mid 2m$ and hence $n=2m$, which implies that $U=\mathbb{F}_{q^m}^2$. 
\end{proof}

We are now ready to characterize the rank metric codes of dimension two satisfying the lower bounds on $M(\C)$.

\begin{theorem}\label{thm:charmin2}
Let $\mathcal{C}\subseteq \mathbb{F}_{q^m}^n$ be a non-degenerate $[n,2,d]_{q^m/q}$ code and assume $d \geq n-m+1$. Then $M(\C)$ is minimum if and only if either $\C$ or $\C^{\perp_{\mathcal{G}}}$ is an MRD code.
\end{theorem}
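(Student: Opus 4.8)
The plan is to translate the condition ``$M(\C)$ is minimum'' into a statement about the linear set $L_U$ or its dual, depending on whether $n \leq m$ or $n > m$, using the geometric descriptions already established in Proposition~\ref{prop:points-codewords2} and Theorem~\ref{thm:boundMC22} together with the cardinality bound \eqref{eq:card}. First I would split into the two cases $n \leq m$ and $m < n$. In the case $n \leq m$: by Proposition~\ref{prop:points-codewords2} we have $M(\C) = (q^m-1)|\PG(1,q^m)\setminus L_U|$, so $M(\C)$ attains the lower bound in \eqref{eq:boundmC2} if and only if $|L_U|$ attains its maximum value $\frac{q^n-1}{q-1}$, i.e.\ if and only if $L_U$ is scattered. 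By Proposition~\ref{prop:2dimscatt} this is equivalent (in the non-simplex case) to $\C$ being MRD; one also has to observe that the simplex code (i.e.\ $n = mk/2 = m$, $U = \F_{q^m}^2$) is itself an MRD code with $L_U = \PG(1,q^m)$, so $M(\C)$ is in fact $0$, which trivially meets the lower bound, and indeed the simplex code is MRD, so the biconditional still holds.

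Next I would handle the case $m < n$. Here Theorem~\ref{thm:boundMC22} shows $M(\C) = (q^m-1)|\PG(1,q^m)\setminus L_{U^{\perp'}}|$, where $L_{U^{\perp'}}$ is the dual linear set, of rank $2m - n < m$. So $M(\C)$ attains its lower bound if and only if $|L_{U^{\perp'}}| = \frac{q^{2m-n}-1}{q-1}$, i.e.\ if and only if $L_{U^{\perp'}}$ is scattered of rank $2m - n$. Now $U^{\perp'}$ is (up to equivalence) a system associated with $\C^{\perp_{\mathcal{G}}}$, which is a $[2m-n, 2]_{q^m/q}$ code with $2m - n < m$; the hypothesis $d \geq n - m + 1$ is exactly the condition $d_{k-1}^{\mathrm{rk}}(\C) = d \geq n - m + 1$ guaranteeing that $\C^{\perp_{\mathcal{G}}}$ is well-defined (Remark~\ref{rk:geogenrank}). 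Applying Proposition~\ref{prop:2dimscatt} to $\C^{\perp_{\mathcal{G}}}$: since $2m - n < m$, the code $\C^{\perp_{\mathcal{G}}}$ is non-simplex, hence it is MRD if and only if $U^{\perp'}$ is scattered and its length is at most $m$ — and $2m - n \leq m$ holds automatically. Thus $M(\C)$ is minimum $\iff$ $L_{U^{\perp'}}$ scattered $\iff$ $\C^{\perp_{\mathcal{G}}}$ is MRD.

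Finally I would reconcile the two cases into the single statement. When $n \leq m$ the conclusion reads ``$\C$ MRD'', and one should check that in this regime $\C^{\perp_{\mathcal{G}}}$ (when defined) being MRD would force, via the first case applied to it, $\C$ to be MRD as well, or else simply note that the statement is an ``either/or'' so it suffices to produce one of the two alternatives; symmetrically for $m < n$. The cleanest route is: $M(\C)$ minimum $\iff$ ($n\leq m$ and $L_U$ scattered) or ($m<n$ and $L_{U^{\perp'}}$ scattered) $\iff$ ($\C$ MRD with $n \leq m$) or ($\C^{\perp_{\mathcal{G}}}$ MRD with $m<n$), and then fold this, using Proposition~\ref{prop:2dimscatt} once more, into ``$\C$ or $\C^{\perp_{\mathcal{G}}}$ is MRD''. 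The main obstacle I anticipate is bookkeeping around the boundary case $n = m$ and the simplex code: one must make sure that ``MRD'' is being used consistently (the simplex code $[2m,2]_{q^m/q}$ is MRD, and $[m,2]_{q^m/q}$ scattered is MRD), and that the geometric dual $\C^{\perp_{\mathcal{G}}}$ is genuinely available under the stated hypothesis $d \geq n-m+1$ so that referring to it is legitimate; once the dictionary ``scattered $\leftrightarrow$ MRD'' of Proposition~\ref{prop:2dimscatt} is applied on the correct side, the rest is immediate from the cardinality extremes of linear sets.
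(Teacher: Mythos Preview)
Your proposal is correct and follows essentially the same route as the paper: split into the cases $n\le m$ and $n>m$, convert minimality of $M(\C)$ into maximality of $|L_U|$ (respectively $|L_{U^{\perp'}}|$) via Proposition~\ref{prop:points-codewords2} and \eqref{eq:M(C)dual}, conclude scatteredness, and then invoke Proposition~\ref{prop:2dimscatt}. One minor slip in your write-up: the simplex code in dimension $2$ has length $2m$, not $mk/2=m$, and it is already excluded by the hypothesis $d\ge n-m+1$; since $n\le m$ automatically makes $\C$ non-simplex, that side discussion (and the claim $M(\C)=0$ there) is unnecessary and can simply be dropped.
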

\begin{proof}
Let $U$ be any system associated with $\C$.
Assume first that $n\leq m$. $M(\C)$ satisfies the equality in the lower bound of Theorem \ref{thm:boundsMC21} if and only if 
\[M(\C)=q^{2m}-1-(q^m-1)\frac{q^n-1}{q-1},\]
i.e. $|L_U|=(q^n-1)/(q-1)$, that is if and only if $U$ is a scattered $\fq$-subspace of $\F_{q^m}^2$.
By Proposition \ref{prop:2dimscatt}, this implies that $\C$ is an MRD code.
Suppose now that $n\geq m$. $M(\C)$ satisfies the equality in the lower bound of Theorem \ref{thm:boundMC22} if and only if
\[M(\C)=q^{2m}-1-(q^m-1)  \frac{q^{2m-n}-1}{q-1},\]
which by \eqref{eq:M(C)dual} is equivalent to say that
\[|L_{U^{\perp'}}|=\frac{q^{2m-n}-1}{q-1}.\]
Since the rank of $L_{U^{\perp'}}$ is $2m-n$, then $L_{U^{\perp'}}$ is scattered and $\langle L_{U^{\perp'}}\rangle=\PG(1,q^m)$, so any code associated with $U^{\perp'}$ is an MRD code.
\end{proof}

When the dimension of the code is larger, then the variety of rank metric codes having the minimum number of codewords of maximum weight is much larger and then the family of MRD codes.

\begin{theorem}\label{thm:chargeominimumk}
Let $\C$ be a non-degenerate $[n,k]_{q^m/q}$ code and let $U$ be any associated system.
Assume that $m\leq n$. Then $M(\C)$ is minimum with respect to \eqref{eq:boundmC2dualk} if and only if $U^{\perp'}$ is a scattered $\fq$-subspace of $\F_{q^m}^k$.
In particular, if $M(\C)$ is minimum then $n\geq km/2$.
\end{theorem}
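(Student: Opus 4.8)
The plan is to exploit the geometric interpretation from Proposition~\ref{prop:points-codewordsk2}, namely that
\[
\frac{M(\C)}{q^m-1}=|\PG(k-1,q^m)\setminus L_{U^{\perp'}}|=\frac{q^{km}-1}{q^m-1}-|L_{U^{\perp'}}|.
\]
Since $\langle U\rangle_{\F_{q^m}}=\F_{q^m}^k$ and $d_{k-1}^{\mathrm{rk}}(\C)\geq n-m+1$, Remark~\ref{rk:geogenrank} tells us that $U^{\perp'}$ is a genuine system, i.e.\ $L_{U^{\perp'}}$ spans the whole $\PG(k-1,q^m)$, and its rank is $km-n\leq (k-1)m$. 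Therefore minimizing $M(\C)$ is equivalent to maximizing $|L_{U^{\perp'}}|$. By \eqref{eq:card}, for an $\fq$-linear set of rank $km-n$ one always has $|L_{U^{\perp'}}|\leq \frac{q^{km-n}-1}{q-1}$, with equality precisely when $L_{U^{\perp'}}$ is scattered. This immediately gives that $M(\C)$ attains the lower bound in \eqref{eq:boundmC2dualk} if and only if $U^{\perp'}$ is a scattered $\fq$-subspace of $\F_{q^m}^k$, which is the first assertion.

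For the second assertion, I would invoke the Blokhuis--Lavrauw bound (Theorem~\ref{th:boundscattrank}): a scattered $\fq$-linear set of rank $km-n$ in $\PG(k-1,q^m)$ must satisfy $km-n\leq \frac{mk}{2}$, hence $n\geq \frac{km}{2}$. This is a one-line deduction once the scattered characterization is in place.

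The only point requiring a little care is making sure the equivalences are stated at the level of the equivalence class (the choice of associated system $U$, and hence of $U^{\perp'}$ up to $\G(k,q^m)$-equivalence by \cite[Proposition~2.5]{polverino2010linear}), so that "being scattered" is well-defined on $[\C]$; since scatteredness is preserved under $\G(k,q^m)$-equivalence this causes no trouble. I do not anticipate a genuine obstacle here: the result is essentially a translation of \eqref{eq:card} and Theorem~\ref{th:boundscattrank} through the dictionary of Proposition~\ref{prop:points-codewordsk2}, in complete analogy with the $n\geq m$, $k=2$ half of the proof of Theorem~\ref{thm:charmin2}. If one wanted to add value, one could remark that when $n=km/2$ the dual $L_{U^{\perp'}}$ is maximum scattered, so by Proposition~\ref{prop:scattdual} $L_U$ itself is maximum scattered and $\C$ is MRD --- but that would belong to a separate statement rather than to this proof.
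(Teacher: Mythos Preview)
Your proposal is correct and follows essentially the same route as the paper: both use Proposition~\ref{prop:points-codewordsk2} to convert equality in the lower bound \eqref{eq:boundmC2dualk} into $|L_{U^{\perp'}}|=\frac{q^{km-n}-1}{q-1}$, hence scatteredness, and then invoke Theorem~\ref{th:boundscattrank} on $U^{\perp'}$ for the inequality $n\geq km/2$. Your extra remarks on the hypothesis $d_{k-1}^{\mathrm{rk}}(\C)\geq n-m+1$ and on well-definedness up to $\G(k,q^m)$-equivalence are harmless clarifications but not needed for the argument, since Proposition~\ref{prop:points-codewordsk2} and \eqref{eq:card} already suffice.
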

\begin{proof}
In this case, by Theorem \ref{th:k>2weak},
\[M(\C)=q^{km}-1-(q^m-1)\frac{q^{km-n}-1}{q-1},\]
and by Proposition \ref{prop:points-codewordsk2}, we have that
\[|L_{U^{\perp'}}|=\frac{q^{km-n}-1}{q-1}.\]
Since the rank of $L_{U^{\perp'}}$ is $km-n$, the above equality implies that $L_{U^{\perp'}}$ is scattered.
The last part follows by applying Theorem \ref{th:boundscattrank} to $U^{\perp'}$.
\end{proof}

As seen before, for the case of rank metric codes of dimension two, the MRD codes reach the equality in the lower bound on the number of codewords of maximum weight. 
The number of codewords of maximum weight in an $\F_{q^m}$-linear MRD code $\C$ in $\F_{q^m}^n$ of dimension $k$ with $m\leq n$ (note that its minimum distance is $d=m-km/n+1$) is given in Theorem \ref{th:weightdistribution} and it is
\[M(\C)= \sum_{t=0}^{\frac{km}n-1} (-1)^{t-\frac{km}n+1}{m \brack \frac{km}n-1-t}_q q^{\binom{\frac{km}n-1-t}{2}}(q^{n(t+1)}-1).\]

In the next result we prove that this value is the minimum if and only if $n=mk/2$.

\begin{proposition}
Let $\C$ be a non-degenerate $[n,k]_{q^m/q}$ MRD code and assume that $m\leq n$. Then $M(\C)$ is minimum with respect to \eqref{eq:boundmC2dualk} if and only if $n=mk/2$.
\end{proposition}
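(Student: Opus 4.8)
The plan is to deduce the statement from the characterization of minimality already established in Theorem \ref{thm:chargeominimumk}, then translate the resulting condition on the geometric dual into a condition on $d(\C)$ via the duality of linear sets, and finally read off $n=mk/2$ from the Singleton bound. Throughout I fix a system $U$ associated with $\C$; since the statement refers to the bound \eqref{eq:boundmC2dualk}, I assume, as in Theorem \ref{th:k>2weak}, that $d_{k-1}^{\mathrm{rk}}(\C)\ge n-m+1$, so that $U^{\perp'}$ is a system (in particular $L_{U^{\perp'}}\neq\emptyset$ and spans $\PG(k-1,q^m)$, by Remark \ref{rk:geogenrank}) and Theorem \ref{thm:chargeominimumk} applies.

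The core step is purely combinatorial and does not use that $\C$ is MRD. By Theorem \ref{thm:chargeominimumk}, $M(\C)$ is minimum with respect to \eqref{eq:boundmC2dualk} if and only if $U^{\perp'}$ is scattered, i.e.\ $w_{L_{U^{\perp'}}}(P)\le 1$ for every point $P$ of $\PG(k-1,q^m)$. I would then apply Proposition \ref{prop:weightdual} with $\Omega_s$ an $\F_{q^m}$-hyperplane $H$ of $\F_{q^m}^k$ (so $s=k-2$, and $H^\tau$ is a point, $\tau$ inducing a bijection between points and hyperplanes): this yields $w_{L_{U^{\perp'}}}(H^\tau)=w_{L_U}(H)+m-n$. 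Hence $U^{\perp'}$ is scattered if and only if $w_{L_U}(H)\le n-m+1$ for every hyperplane $H$; moreover, since in that case $L_{U^{\perp'}}$ is a non-empty linear set all of whose points have weight $1$, the value $n-m+1$ is actually attained by some $H$. By Theorem \ref{th:connection}, the condition $\max_H\dim_{\fq}(U\cap H)=n-m+1$ is exactly $d(\C)=m-1$. Thus $M(\C)$ is minimum with respect to \eqref{eq:boundmC2dualk} if and only if $d(\C)=m-1$.

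The final step invokes the MRD hypothesis: since $\C$ is MRD with $m\le n$, Theorem \ref{th:singletonrank} gives $mk=n(m-d+1)$, so $d=m-1$ is equivalent to $m-d+1=2$, i.e.\ $n=mk/2$. Chaining this with the previous paragraph proves the claim; note as a sanity check that, exactly as remarked after Theorem \ref{thm:chargeominimumk}, $U^{\perp'}$ scattered of rank $mk-n$ forces $mk-n\le mk/2$ by Theorem \ref{th:boundscattrank}, consistent with $n=mk/2$. I expect the only delicate point to be the bookkeeping in the middle step — getting the direction of Proposition \ref{prop:weightdual} right (which of $H$, $H^\tau$ is the point) and making sure the extremal hyperplane weight $n-m+1$ is genuinely attained rather than merely an upper bound (this is where the non-emptiness of the system $U^{\perp'}$ enters); once that is in place, everything else is a two-line computation with the parameters.
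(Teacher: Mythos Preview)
Your proof is correct and follows essentially the same route as the paper: both reduce to Theorem \ref{thm:chargeominimumk} (minimum iff $U^{\perp'}$ scattered), translate scatteredness into a condition on hyperplane weights of $L_U$ via Proposition \ref{prop:weightdual}, and then invoke the MRD equality $mk=n(m-d+1)$. The only cosmetic difference is the order of the last two steps: the paper plugs in the MRD value $d=m-km/n+1$ first to get $\max_P w_{L_{U^{\perp'}}}(P)=km/n-1$ and then sets this equal to $1$, whereas you first derive the code-independent equivalence ``$U^{\perp'}$ scattered $\Leftrightarrow d=m-1$'' and only then use MRD to turn $d=m-1$ into $n=mk/2$.
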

\begin{proof}
Let $U$ be any system associated with $\mathcal{C}$.
For an MRD code with these parameters, its minimum distance is $d=m-km/n+1$. By \eqref{eq:distancedesign}, it follows that 
\[ w_{L_U}(H)\leq n-(m-km/n+1), \]
for any hyperplane $H$ of $\PG(k-1,q^m)$ and there exists at least one hyperplane satisfying the equality.
By Proposition \ref{prop:weightdual},
\[ w_{L_{U^{\perp'}}}(P)\leq km-n-m(k-1)+n-(m-km/n+1)=\frac{km}n-1, \]
for any point $P \in \PG(k-1,q^m)$.
By Theorem \ref{thm:chargeominimumk}, $M(\C)$ is minimum if and only if $L_{U^{\perp'}}$ is scattered, which happens if and only if $\frac{km}n-1=1$, that is $n=km/2$.
\end{proof}

If $n=mk/2$, all the codes having the minimum number of codewords with maximum weight are MRD codes.

\begin{proposition}\label{prop:charMRDmk2}
Let $\C$ be a non-degenerate $[n,k]_{q^m/q}$ code and assume that $n=mk/2$. Then $M(\C)$ is minimum with respect to \eqref{eq:boundmC2dualk} if and only if $\C$ is an MRD code.
\end{proposition}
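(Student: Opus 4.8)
The plan is to combine the characterization of minimality from Theorem~\ref{thm:chargeominimumk} with Proposition~\ref{prop:scattdual}, the duality criterion for maximum scattered linear sets. Let $U$ be any system associated with $\C$; since $n = mk/2$, the dual space $U^{\perp'}$ has $\F_q$-dimension $km - n = km/2 = n$, so $U^{\perp'}$ sits at exactly the maximum scattered rank in $\PG(k-1,q^m)$. By Theorem~\ref{thm:chargeominimumk}, $M(\C)$ is minimum with respect to \eqref{eq:boundmC2dualk} if and only if $U^{\perp'}$ is a scattered $\fq$-subspace of $\F_{q^m}^k$, i.e.\ (given its rank) a maximum scattered subspace. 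By Proposition~\ref{prop:scattdual}, applied to the rank-$mk/2$ linear set $L_{U^{\perp'}}$ whose dual is (up to equivalence) $L_U$, the subspace $U^{\perp'}$ is scattered if and only if $U$ itself is scattered. Hence $M(\C)$ is minimum if and only if $U$ is a maximum scattered $\fq$-subspace of $\F_{q^m}^k$.

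Next I would translate ``$U$ is maximum scattered'' into ``$\C$ is MRD''. Since $U$ is scattered of rank $n = mk/2$, every point of $L_U$ has weight one, so for every $\F_{q^m}$-hyperplane $H$ of $\F_{q^m}^k$ we have $\dim_{\fq}(U \cap H) \le m-1$ (any $(m-1)$-dimensional intersection would already force non-scattered behaviour via \eqref{eq:boundcomplementaryweight}, or more directly: an $\fq$-subspace of dimension $m$ inside an $\F_{q^m}$-hyperplane, being properly contained if it met every line in $\le q+1$ points, contradicts the scattered count). Thus by \eqref{eq:distancedesign} the minimum distance satisfies $d \ge n - (m-1)$. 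Plugging $n = mk/2$ into the Singleton-like bound \eqref{eq:boundgen} with $\max\{m,n\} = n$ (as $n = mk/2 \ge m$ for $k \ge 2$) gives $mk \le n(m - d + 1)$, i.e.\ $m - d + 1 \ge mk/n = 2$, so $d \le m-1$. Therefore $d = m-1$ and equality holds in \eqref{eq:boundgen}, so $\C$ is MRD. Conversely, if $\C$ is MRD with these parameters then $d = m-1$, and \eqref{eq:distancedesign} forces $\dim_{\fq}(U \cap H) \le n - d = mk/2 - m + 1$ for every hyperplane $H$; feeding this through Proposition~\ref{prop:weightdual} exactly as in the preceding Proposition shows $w_{L_{U^{\perp'}}}(P) \le km/2 - n + \dots$, which works out to weight at most one for every point, so $U^{\perp'}$ is scattered and Theorem~\ref{thm:chargeominimumk} gives that $M(\C)$ is minimum.

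The main obstacle is the bookkeeping in the two directions of the MRD/scattered translation: one must be careful that the generalized-weight hypothesis $d_{k-1}^{\mathrm{rk}}(\C) \ge n - m + 1$ needed to even define $\C^{\perp_{\mathcal G}}$ (and to apply Theorem~\ref{thm:chargeominimumk}) is automatically satisfied here, and that the equivalence-class formulation of geometric duality does not introduce a mismatch. For the first point, if $U$ is scattered then no point of $L_U$ has weight $\ge 2$, so in particular $U$ contains no $1$-dimensional $\F_{q^m}$-subspace, which by Remark~\ref{rk:geogenrank} is precisely $d_{k-1}^{\mathrm{rk}}(\C) \ge n-m+1$; if instead $\C$ is MRD one can check the generalized weights directly from Theorem~\ref{th:connection} and the bound $d = m-1$. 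Once these are in place the argument is essentially a concatenation of Theorem~\ref{thm:chargeominimumk}, Proposition~\ref{prop:scattdual}, and the Singleton bound, so I expect the proof to be short.
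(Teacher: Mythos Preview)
Your overall architecture matches the paper: reduce via Theorem~\ref{thm:chargeominimumk} to ``$U^{\perp'}$ scattered'', then invoke Proposition~\ref{prop:scattdual} to pass to ``$U$ scattered''. The paper then cites \cite[Theorem 3.2]{csajbok2017maximum} for the implication ``$U$ maximum scattered $\Rightarrow$ $\C$ MRD'' and appeals to the preceding proposition for the converse; you instead try to prove the MRD equivalence by hand.

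The converse direction you sketch ($\C$ MRD $\Rightarrow$ $U^{\perp'}$ scattered via Proposition~\ref{prop:weightdual}) is exactly the computation in the preceding proposition and is fine. The forward direction, however, contains a genuine error. Your claim that scatteredness of $U$ forces $\dim_{\fq}(U\cap H)\le m-1$ for every hyperplane $H$ is false: take $k=m=4$ and $U=\{(x,x^{q},y,y^{q}):x,y\in\F_{q^4}\}$, which is maximum scattered of rank $8$, and let $H=\{(a,b,c,0)\}$; then $U\cap H=\{(x,x^{q},0,0)\}$ has $\fq$-dimension $4>m-1=3$. What MRD actually requires is $\dim_{\fq}(U\cap H)\le n-m+1=mk/2-m+1$, which does not coincide with $m-1$ except in degenerate cases, so the chain ``$d\ge n-(m-1)$ and $d\le m-1$ hence $d=m-1$'' does not close. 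The parenthetical justification via \eqref{eq:boundcomplementaryweight} does not salvage this: scatteredness bounds \emph{point} weights, not hyperplane weights, and the passage to hyperplanes is precisely the non-trivial content of \cite[Theorem 3.2]{csajbok2017maximum}.

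There is an easy repair that avoids the external citation and the detour through $U$: stay with $U^{\perp'}$. If $U^{\perp'}$ is scattered then $w_{L_{U^{\perp'}}}(P)\le 1$ for every point $P$, and Proposition~\ref{prop:weightdual} (with $s=k-2$) gives $w_{L_U}(H)=w_{L_{U^{\perp'}}}(H^{\tau})+n-m\le n-m+1$ for every hyperplane $H$. Hence $d\ge m-1$ by \eqref{eq:distancedesign}, and the Singleton bound with $n=mk/2$ gives $d\le m-1$, so $\C$ is MRD. This is the mirror image of your (correct) converse computation and makes Proposition~\ref{prop:scattdual} unnecessary for this step.
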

\begin{proof}
Let $U$ be any system associated with $\C$.
By Theorem \ref{thm:chargeominimumk}, $M(\C)$ is minimum if and only if $L_{U^{\perp'}}$ is scattered.
Since the rank of $L_{U^{\perp'}}$ is $km/2$ then by Proposition \ref{prop:scattdual} the linear set $L_U$ is scattered as well. 
By \cite[Theorem 3.2]{csajbok2017maximum}, the code $\C$ is an MRD code.
The converse follows by the above proposition.
\end{proof}

\begin{remark}
The MRD codes with parameters as in Proposition \ref{prop:charMRDmk2} are in correspondence with maximum scattered $\fq$-linear sets ($\fq$-subspaces) in $\PG(k-1,q^m)$ (in $\mathbb{F}_{q^m}^k$), see \cite[Theorem 4.8]{polverino2020connections} and also \cite{marino2022evasive,zini2021scattered}.
Such subspaces exist for any value of $q,m,k$ when $mk$ is even, as proved in a series of papers \cite{ball2000linear,bartoli2018maximum,blokhuis2000scattered,csajbok2017maximum}.
\end{remark}

As we will discuss in the next remark, not all of the scattered spaces give rise to an MRD code, therefore there is no hope to extend Theorem \ref{thm:charmin2} to larger dimension.

\begin{remark}
Consider $W=\{(x,x^q,a)\colon x \in \F_{q^m}, a \in \fq\} \subseteq \F_{q^m}^3$, with $m\geq 2$.
It results to be a scattered $\fq$-subspace of dimension $m+1$ (and $L_W$ defines a R\'edei type blocking set in $\PG(2,q^m)$). Then choose $U=W^{\perp'}$, so that $\dim_{\fq}(U)=2m-1$. Then $\langle U \rangle_{\F_{q^m}}=\F_{q^m}^3$, otherwise by Proposition \ref{prop:weightdual} there would exist a point $P\in \PG(2,q^m)$ such that $w_{L_W}(P)=m$, a contradiction to the fact that $W$ is scattered.
We can then consider a $[2m-1,3]_{q^m/q}$ rank metric code $\C$ associated with $U$ and $M(\C)$ is the minimum as $U^{\perp'}=W$ is scattered (cf. Theorem \ref{thm:chargeominimumk}).
The minimum distance of $\C$ is
\[d=2m-1-\max\{w_{L_U}(H) \colon H=\PG(1,q^m)\}\]
\[=m-\max\{w_{L_W}(P) \colon P\in\PG(2,q^m)\}=m-1,\]
by using again Proposition \ref{prop:weightdual}.
It is easy to see that the code $\C$ is not an MRD since its parameters do not attain the equality in \eqref{eq:boundgen}.
\end{remark}

\begin{remark}
In Theorem \ref{thm:2dim2} we also present another lower bound depending on the second minimum weight of a $2$-dimensional $\F_{q^m}$-linear non-degenerate rank metric code. Examples of codes attaining the equality in such a bound are the codes associated with the dual of a scattered $\mathbb{F}_{q^e}$-linear sets $L_{U}$ in $\PG(1,q^m)$, when $e \mid m$. Unfortunately, we do not know if this is the only case.
\end{remark}

We analyze the equality in the lower bound in Theorem \ref{thm:boundn<=m}. First, we prove a property on the points external to a subgeometry.

\begin{lemma} \label{lem:tangenthypersubgeo}
Let $\Sigma=\PG(k-1,q)$ be a canonical subgeometry $\PG(k-1,q)$ in $\PG(k-1,q^m)$ and assume that $3 \leq k \leq m$. Then for each point $P \in \PG(k-1,q^m) \setminus \Sigma$, there exists an $r$-space through $P$ meeting $\Sigma$ in exactly one point, for each $r \in \{0,\ldots,k-2\}$. In particular, there exists a hyperplane through $P$ which is tangent to $\Sigma$.
\end{lemma}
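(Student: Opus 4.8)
The plan is to mimic the inductive argument of Lemma \ref{lem:tangenthyperlinset}, but now working with a canonical subgeometry $\Sigma=\PG(k-1,q)$ sitting inside $\PG(k-1,q^m)$ and an \emph{external} point $P$ rather than a point of the set. The key counting input is that $|\Sigma| = \frac{q^k-1}{q-1}$, which is $o(q^m)$ as soon as $k\le m$; this smallness is exactly what forces the existence of tangent subspaces. I would prove the statement by induction on $r$, starting from $r=1$ (lines through $P$) and climbing up to $r=k-2$ (hyperplanes through $P$).

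For the base case $r=1$: suppose for contradiction that every line through $P$ meets $\Sigma$ in at least two points. Two distinct points of $\Sigma$ together with the fact that $\Sigma$ is a subgeometry (so the line joining them meets $\Sigma$ in a full $\PG(1,q)$, i.e. $q+1$ points) would mean each such line contributes at least $q+1$ points of $\Sigma$; but more crudely, if every one of the $\frac{q^{m(k-1)}-1}{q^m-1}$ lines through $P$ carries $\ge 2$ points of $\Sigma$, then since each point of $\Sigma$ lies on exactly one line through $P$ (as $P\notin\Sigma$), double counting gives $|\Sigma| \ge 2\cdot\frac{q^{m(k-1)}-1}{q^m-1}$, contradicting $|\Sigma|=\frac{q^k-1}{q-1} < 2\cdot\frac{q^{m(k-1)}-1}{q^m-1}$ for $k\le m$, $k\ge 3$. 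Hence some line through $P$ meets $\Sigma$ in at most one point; it does meet $\Sigma$ in a point because... actually here one must be a bit careful: a line through $P$ external to $\Sigma$ is also allowed by the statement since $r$ ranges over $\{0,\ldots,k-2\}$ and $r=0$ just gives the point $P$ itself. So in fact for $r=1$ I only need a line meeting $\Sigma$ in \emph{at most} one point; but to feed the induction I want a line (or $r$-space) meeting $\Sigma$ in \emph{exactly} one point. This is where I would invoke that $\langle\Sigma\rangle = \PG(k-1,q^m)$: since $\Sigma$ spans the whole space, for any point $P$ there is certainly a point $Q\in\Sigma$ with $Q\ne P$, and I can then try to route through $Q$. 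Concretely, I'd argue: pick $Q\in\Sigma$; among the lines through $P$ and $Q$... no, that line is fixed. Better: fix $Q \in \Sigma$ and consider all lines through $Q$ inside a generic hyperplane, or instead run the counting argument restricted to lines through $P$ that contain a fixed $Q\in\Sigma$ — but there's only one such line. The cleanest route is: the union of all lines through $P$ meeting $\Sigma$ in $\ge 2$ points covers at most (number of such lines)$\times q^m$ hmm. Let me instead just carry the induction in the form "there is an $r$-space through $P$ meeting $\Sigma$ in exactly one point" and handle the base case $r=1$ directly using a fixed point $Q\in\Sigma$: the line $\langle P,Q\rangle$ meets $\Sigma$ in a $\PG(j,q)$-subgeometry for some $j\in\{0\}$ (since it's a line it can only be $0$ or "the whole line is a $\PG(1,q)$" — but a line of $\PG(k-1,q^m)$ is $\PG(1,q^m)$ and its intersection with a subgeometry $\PG(k-1,q)$ is a $\PG(i,q)$-subline for some $i \le 1$). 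If it's exactly one point, done; if it's $q+1$ points (a full subline), I need a different line through $P$, and the counting argument above gives a line meeting $\Sigma$ in $\le 1$ points, and since $\Sigma$ is nonempty and spans, actually I can guarantee a line through $P$ meeting $\Sigma$ in exactly one point by the spanning property: the points of $\Sigma$ not on "full-subline" lines through $P$ are nonempty for cardinality reasons, and any such point gives a line meeting $\Sigma$ in exactly one point.

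For the inductive step from $r=t$ to $r=t+1$ (with $t+1 \le k-2$): let $\pi$ be a $t$-space through $P$ with $\pi\cap\Sigma = \{R\}$ for a single point $R$ (obtained from the inductive hypothesis — note if the single point is not $P$ that's fine, and we can even arrange it via the $r=1$ step to pass through a chosen point of $\Sigma$). Suppose for contradiction that every $(t+1)$-space containing $\pi$ meets $\Sigma$ in at least two points. There are $\frac{q^{m(k-1-t)}-1}{q^m-1}$ such $(t+1)$-spaces, they partition the points of $\Sigma$ outside $\pi$ — of which there are $|\Sigma| - 1 = \frac{q^k-1}{q-1}-1$ — and each carries $\ge 1$ extra point beyond $R$. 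So $\frac{q^{m(k-1-t)}-1}{q^m-1} \le |\Sigma|-1 = \frac{q^k-q}{q-1}$, which fails for $t+1\le k-2$ and $k\le m$, giving the contradiction. Hence some $(t+1)$-space through $\pi$ meets $\Sigma$ only in $R$, completing the induction. Taking $r=k-2$ gives a tangent hyperplane.

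The main obstacle I anticipate is the bookkeeping in the base case: ensuring we get an $r$-space meeting $\Sigma$ in \emph{exactly} one point rather than zero points or $\ge 2$ points, and threading this through the induction cleanly. The inequalities themselves ($\frac{q^k-1}{q-1}$ versus $\frac{q^{m(k-1-t)}-1}{q^m-1}$) are robust for $k\le m$, $k\ge 3$, so the arithmetic is routine; the care needed is purely in the combinatorial set-up — in particular in using $\langle\Sigma\rangle=\PG(k-1,q^m)$ to guarantee nonemptiness of the intersection at each stage, and in handling the possibility that the "single point" of intersection coincides with $P$ when $P$ were on $\Sigma$ — but since here $P\notin\Sigma$ this degenerate case does not arise, which actually simplifies matters compared to Lemma \ref{lem:tangenthyperlinset}.
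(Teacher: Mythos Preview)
Your inductive step is correct and essentially matches the paper's: given a $t$-space $\pi$ through $P$ with $\pi\cap\Sigma=\{R\}$, the $(t+1)$-spaces through $\pi$ partition $\Sigma\setminus\{R\}$, and the inequality $\frac{q^{m(k-1-t)}-1}{q^m-1}>\frac{q^k-1}{q-1}-1$ (valid for $t\le k-3$, $k\le m$) forces one of them to meet $\Sigma$ only in $R$. The paper uses the slightly sharper observation that each non-tangent $(t+1)$-space carries $\ge q$ extra points of $\Sigma$ (since two points of $\Sigma$ on a subspace force a full $\fq$-subline), but your ``$\ge 1$'' already suffices.

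The base case $r=1$, however, has a genuine gap. Your final claim---that the points of $\Sigma$ not lying on ``full-subline'' lines through $P$ are nonempty ``for cardinality reasons''---is not justified, and cardinality alone cannot settle it: when $k$ is even, $q+1$ divides $|\Sigma|=\frac{q^k-1}{q-1}$, so nothing in a pure size count excludes a partition of $\Sigma$ into sublines each sitting on a full-secant line through $P$. What actually rules this out is a linear-algebraic fact you never invoke: two disjoint lines of the subgeometry span a $\PG(3,q)$, hence (because an $\fq$-basis of $W$ is an $\F_{q^m}$-basis of $V$) their extensions span a $\PG(3,q^m)$ and are therefore skew in $\PG(k-1,q^m)$; in particular they cannot both pass through $P$. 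Thus at most one full-secant line through $P$ exists, and for $k\ge3$ it cannot cover $\Sigma$. This completes your argument, but it is not ``cardinality''.

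The paper's base case is different and cleaner: it passes to the quotient by $\langle v\rangle_{\F_{q^m}}$ (where $P=\langle v\rangle$), obtaining a rank-$k$ linear set $L_{W'}$ in $\PG(k-2,q^m)$ which spans the whole space. Tangent lines through $P$ correspond to weight-one points of $L_{W'}$; if none exist, every point of $L_{W'}$ has weight $\ge 2$, and then choosing $k-1$ independent points and applying \eqref{eq:boundcomplementaryweight} gives $2(k-1)\le k$, i.e.\ $k\le 2$, contradicting $k\ge 3$.
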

\begin{proof}
Let $P=\langle v\rangle_{\F_{q^m}} \in \PG(k-1,q^m)=\PG(V,\F_{q^m}) \setminus \Sigma$. We will prove it by induction on $r$. Let $\Sigma=L_W$, where $W$ is a $k$-dimensional $\F_q$-subspace of $V$ such that $\langle W \rangle_{\F_{q^m}}=V$. Suppose by contradiction that all the lines through $P$ are external or meet $\Sigma$ in at least $q+1$ points. Let consider the $\F_q$-linear set $ L_{W'} $ in $\PG(V/\langle v\rangle_{\F_{q^m}},\F_{q^m})=\PG(k-2,q^m)$ defined by $W'=W+\langle v\rangle_{\F_{q^m}}/\langle v\rangle_{\F_{q^m}} \subseteq V/\langle v\rangle_{\F_{q^m}}$. Since $P \notin \Sigma$, then $L_{W'}$ has rank $k$. Moreover, due to the fact that every line through $P$ meeting $\Sigma$ in at least one point have weight at least $2$, it follows that $w_{L_{W'}}(P')\geq 2$, for each $P' \in L_{W'}$. Moreover, since $L_W$ spans $\PG(k-1,q^m)$ then $L_{W'}$ spans $\PG(k-2,q^m)$ as well. Then by \eqref{eq:boundcomplementaryweight}, we get $k \geq 2(k-1)$, a contradiction. So the statement is true for $r=1$. Suppose now that our assertion holds for $r-1$ and let prove it for $r$. By hypothesis, we have that there exists an $(r-1)$-dimensional space $\Omega$ through $P$ meeting $\Sigma$ in exactly one point $Q$. Suppose that  every $r$-space through $\Omega$ meets $\Sigma$ in at least another point different from $Q$. Then 
\[
    q \frac{q^{m(k-r)}-1 }{q^m-1}+1 \leq \frac{q^k-1}{q-1},
\]
a contradiction since $k \leq m$.
\end{proof}

We will now use the above geometric lemma to prove the case of equality in the lower bound in Theorem \ref{thm:boundn<=m}.

\begin{theorem}\label{thm:charn<mmin}
Let $\C$ be a non-degenerate $[n,k]_{q^m/q}$ code. Assume that $n< m$ and
\[  M(\C)=(q^{mk}-1) -\frac{q^n-1}{q-1}(q^{(k-1)m}-1) +q(q^m-1)\beta, \]
where
\[\beta=\frac{q^{mk}-1}{q^m-1}-\prod_{i=1}^{k-1}(q^m-q^i)-\frac{q^k-1}{q-1}
\prod_{i=1}^{k-2}(q^m-q^i).\]
Then $n=k$ and $\C=\F_{q^m}^k$.
\end{theorem}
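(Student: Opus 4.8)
The plan is to reverse-engineer the proof of Theorem~\ref{thm:boundn<=m}: equality in the lower bound there pins down the geometry of an associated system $U$ very tightly, and a tangent-hyperplane argument (Lemma~\ref{lem:tangenthypersubgeo}) then forces $n=k$.

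First I would fix an associated system $U$ (so $\dim_{\fq}U=n$ and $\langle U\rangle_{\F_{q^m}}=\F_{q^m}^k$, whence $n\ge k$) and, with the notation of the proof of Theorem~\ref{thm:boundn<=m}, recall that $M(\C)=(q^m-1)\tau_0$, where $\tau_0,\tau_1,\tau_s$ count the hyperplanes of $\PG(k-1,q^m)$ meeting $L_U$ in $0$, exactly $1$, and at least $2$ points. The lower bound for $M(\C)$ came from combining
\[
\tau_1+\tau_s(q+1)\le |L_U|\,\frac{q^{(k-1)m}-1}{q^m-1},\qquad \tau_s\ge\beta,\qquad |L_U|\le\frac{q^n-1}{q-1},
\]
the middle one because $U$ contains $k$ $\F_{q^m}$-independent vectors spanning a canonical subgeometry $\Sigma=\PG(k-1,q)\subseteq L_U$, which has exactly $\beta$ secant hyperplanes by Lemma~\ref{lem:hypersubg}. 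Since the hypothesis forces equality in the resulting bound $\tau_0\ge\dots$, all three inequalities must be equalities; in particular $L_U$ is scattered ($|L_U|=\frac{q^n-1}{q-1}$) and $\tau_s=\beta$. As every hyperplane secant to $\Sigma$ is automatically secant to $L_U$ (it already meets the subset $\Sigma$ in at least $q+1$ points), $\tau_s=\beta$ forces the secant hyperplanes of $L_U$ and of $\Sigma$ to coincide; equivalently, any hyperplane meeting $\Sigma$ in at most one point meets $L_U$ in at most one point.

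Then I would argue by contradiction. Suppose $n>k$. Scatteredness gives $|L_U|=\frac{q^n-1}{q-1}>\frac{q^k-1}{q-1}=|\Sigma|$, so there is a point $P\in L_U\setminus\Sigma$. Since $3\le k\le n<m$, Lemma~\ref{lem:tangenthypersubgeo} supplies a hyperplane $H$ through $P$ meeting $\Sigma$ in exactly one point, say $H\cap\Sigma=\{Q\}$. By the previous step $|H\cap L_U|\le 1$, yet $P$ and $Q$ are two distinct points of $H\cap L_U$ (distinct because $P\notin\Sigma$ while $Q\in\Sigma$) --- a contradiction. Hence $n=k$, so $L_U$ is a scattered linear set of rank $k$ spanning $\PG(k-1,q^m)$, hence a canonical subgeometry; and as $\dim_{\F_{q^m}}\C=k=n$ with $\C\subseteq\F_{q^m}^n$, necessarily $\C=\F_{q^m}^k$.

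The only step needing genuine thought is the translation of $\tau_s=\beta$ into ``$\Sigma$ and $L_U$ have identical secant hyperplanes'': this is what lets Lemma~\ref{lem:tangenthypersubgeo} bite, by producing a hyperplane tangent to $\Sigma$ that is forced to be tangent to $L_U$ at the wrong point. Everything else is routine unwinding of the equality cases, and I foresee no obstacle beyond verifying the hypotheses of Lemma~\ref{lem:tangenthypersubgeo}, namely $3\le k\le m$, which hold since we are in the larger-dimension case ($k\ge3$) and $k\le n<m$.
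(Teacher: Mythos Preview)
Your argument is correct and follows essentially the same route as the paper: equality in the lower bound of Theorem~\ref{thm:boundn<=m} forces both $|L_U|=\frac{q^n-1}{q-1}$ and $\tau_s=\beta$, and then Lemma~\ref{lem:tangenthypersubgeo} is used to show that no point of $L_U$ can lie outside the canonical subgeometry $\Sigma=L_W$, whence $L_U=\Sigma$ and $n=k$. The only cosmetic difference is that the paper phrases the contradiction as ``$L_U\setminus L_W\neq\emptyset$'' rather than ``$n>k$'', and deduces $|L_U|=\frac{q^n-1}{q-1}$ and $\tau_s=\beta$ simultaneously from the single equation $\frac{q^{(k-1)m}-1}{q^m-1}\bigl(|L_U|-\frac{q^n-1}{q-1}\bigr)-q(\tau_s-\beta)=0$ together with the sign constraints, exactly as you do implicitly.
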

\begin{proof}
Let $U$ be any associated system with $\C$. As in the proof of Theorem \ref{thm:boundn<=m}, denote by $\tau_0,\tau_1$ and $\tau_s$ the number of hyperplanes in $\PG(k-1,q^m)$ meeting $L_U$ in $0,1$ and at least two points, respectively, and let $W$ be an $\fq$-subspace of $U$ such that $L_W=\PG(k-1,q)$.
By the assumptions and by \eqref{eq:distancedesign}, 
\[\tau_0=\frac{q^{mk}-1}{q^m-1} -\frac{q^n-1}{q-1}\frac{q^{(k-1)m}-1}{q^m-1} +q\beta,\]
and hence in \eqref{eq:boundtactical} we have equalities
\[\tau_1+\tau_s = |L_U| \frac{q^{(k-1)m}-1}{q^m-1} -q\tau_s = \frac{q^n-1}{q-1}\frac{q^{(k-1)m}-1}{q^m-1} -q\beta.\]
The above equalities imply
\[ \frac{q^{(k-1)m}-1}{q^m-1}\left(|L_U|-\frac{q^n-1}{q-1}\right)-q(\tau_s-\beta)=0, \]
and so $|L_U|=\frac{q^n-1}{q-1}$ and $\beta=\tau_s$.
Suppose that there exists a point $P\in L_U \setminus L_W$. By Lemma \ref{lem:tangenthypersubgeo}, there exists a hyperplane $\pi$ of $\PG(k-1,q^m)$ through $P$ meeting $L_W$ in exactly one point. So, $\pi$ is secant to $L_U$ but tangent to $L_W$, a contradiction to $\tau_s=\beta$. %Let $\Omega$ be a $(k-3)$-dimensional projective subspace of $\PG(k-1,q^m)$ such that $\Omega \cap L_W=\emptyset$. Since $\tau_s=\beta$ (that is the number of secant hyperplanes to $L_W$) then any hyperplane through $\Omega$ meets $L_W$ is at least a point, therefore we obtain
%\[ |L_W|=\frac{q^k-1}{q-1} \geq q^m+1, \]
%a contradiction to $k\leq n<m$. 
Therefore $L_U=L_W$ and hence the assertion.
\end{proof}

\section{Equality in the upper bounds}\label{sec:upperbounds}

The maximum for $M(\C)$ is assumed if and only if either $\C$ or its geometric dual is the entire space.

\begin{theorem}
Let $\C$ be a non-degenerate $[n,k]_{q^m/q}$ code and assume that $d_{k-1}^{\mathrm{rk}}(\C)\geq n-m+1$.
\begin{itemize}
    \item If $n< m$ then $M(\C)$ is maximum with respect to \eqref{eq:boundM(C)n<m} if and only if $n=k$ and $\C=\F_{q^m}^k$.
    \item If $n\geq m$ then $M(\C)$ is maximum with respect to \eqref{eq:boundmC2dualk} if and only if $n=mk-k$ and $\C^{\perp_{\mathcal{G}}}=\F_{q^m}^k$.
\end{itemize} 
\end{theorem}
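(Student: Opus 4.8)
The plan is to treat the two cases separately, using the geometric interpretations of $M(\C)$ from Propositions \ref{prop:points-codewordsk} and \ref{prop:points-codewordsk2}, and to trace through exactly when the chains of inequalities in the proofs of Theorems \ref{thm:boundn<=m} and \ref{th:k>2weak} become equalities. Let $U$ be any system associated with $\C$.

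For the case $n<m$, I would argue as in the proof of Theorem \ref{thm:charn<mmin}: if $M(\C)/(q^m-1)$ equals the \emph{upper} bound $\prod_{i=1}^{n-1}(q^m-q^i)$, then $M(\C)$ equals $(q^m-1)\prod_{i=1}^{n-1}(q^m-q^i)$, which is the number of matrices in $\fq^{m\times n}$ of rank $n$. Geometrically (Proposition \ref{prop:points-codewordsk}), this says every external hyperplane configuration that could occur does occur, i.e. the number of external hyperplanes to $L_U$ is as large as for a system whose linear set has exactly $\frac{q^n-1}{q-1}$ points lying in ``general position'' like a subgeometry. The cleanest route: note that by Remark \ref{rk:exthyperssubgeo}, when $n=k$ this upper value is precisely the number of hyperplanes external to a canonical subgeometry $\PG(k-1,q)$, so $\C=\F_{q^m}^k$ attains it. For the converse, if $\C$ attains the upper bound, count: $U$ contains a copy $W$ of a canonical subgeometry $\PG(k-1,q)$ (taking $k$ independent vectors), every hyperplane external to $L_U$ is in particular external to $L_W$, so the number of external hyperplanes to $L_U$ is at most that to $L_W$, which equals $(q^m-1)^{-1}$ times the number of weight-$k$ vectors, i.e. $\prod_{i=1}^{k-1}(q^m-q^i)$. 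Comparing with the assumed value $\prod_{i=1}^{n-1}(q^m-q^i)$ and using $n\geq k$ forces $n=k$; then equality in ``external to $L_U$ $\subseteq$ external to $L_W$'' forces $L_U=L_W$ (if some point $P\in L_U\setminus L_W$ existed, by Lemma \ref{lem:tangenthypersubgeo} there is a hyperplane through $P$ tangent to $L_W$, hence external to $L_W$ but not to $L_U$), whence $U=W$ (since $\dim_{\fq}U=n=k=\dim_{\fq}W$ and $W\subseteq U$) and $\C=\F_{q^m}^k$.

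For the case $n\geq m$, I would dualize. By Proposition \ref{prop:points-codewordsk2}, $M(\C)=(q^m-1)|\PG(k-1,q^m)\setminus L_{U^{\perp'}}|$, and $L_{U^{\perp'}}$ has rank $km-n$ and spans $\PG(k-1,q^m)$ by Remark \ref{rk:geogenrank}. The upper bound in \eqref{eq:boundmC2dualk} is attained iff $|L_{U^{\perp'}}|$ is \emph{minimum}, namely $|L_{U^{\perp'}}|=\frac{q^k-1}{q-1}$. By \eqref{eq:boundsubgeomcontainedpre} a spanning linear set of rank $km-n$ in $\PG(k-1,q^m)$ has at least $\frac{q^k-1}{q-1}$ points, and equality forces $L_{U^{\perp'}}$ to be a canonical subgeometry $\PG(k-1,q)$ — this is the classical characterization of linear sets attaining \eqref{eq:boundsubgeomcontainedpre} (it also follows from Theorem \ref{thm:newBound} with $r=0$, or from Proposition \ref{pro:sizeminweight}). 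In particular its rank equals $k$, so $km-n=k$, i.e. $n=mk-k$; and $U^{\perp'}$ being (up to $\G(k,q^m)$-equivalence) a $k$-dimensional $\fq$-subspace generating $\F_{q^m}^k$ means $\C^{\perp_{\mathcal{G}}}$, which is by definition any code associated with $U^{\perp'}$, is a non-degenerate $[k,k]_{q^m/q}$ code, i.e. $\C^{\perp_{\mathcal{G}}}=\F_{q^m}^k$. Conversely, if $n=mk-k$ and $\C^{\perp_{\mathcal{G}}}=\F_{q^m}^k$, then the associated system $U^{\perp'}$ is $\F_{q^m}^k$ viewed as a $k$-dimensional $\fq$-space, so $L_{U^{\perp'}}$ is a canonical subgeometry with $\frac{q^k-1}{q-1}$ points and $M(\C)$ attains the upper bound.

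The main obstacle I anticipate is the converse direction in the $n\geq m$ case: pinning down that a spanning $\fq$-linear set of rank $km-n$ with exactly $\frac{q^k-1}{q-1}$ points \emph{must} be a canonical subgeometry (rather than merely having the right cardinality), and in particular that this forces $km-n=k$ rather than some degenerate configuration of smaller rank. I would handle this by invoking the equality case of Theorem \ref{thm:newBound} (or the known fact that a linear set meeting \eqref{eq:boundsubgeomcontainedpre} with equality is a subgeometry), together with Proposition \ref{pro:sizeminweight} to rule out rank mismatches. The $n<m$ direction is essentially a repeat of the bookkeeping already done in Theorem \ref{thm:charn<mmin}, so it should go through smoothly.
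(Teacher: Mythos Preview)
Your proposal is correct and follows essentially the same route as the paper: for $n<m$, compare the number of hyperplanes external to $L_U$ with that of a contained canonical subgeometry $L_W$ to force $n=k$; for $n\ge m$, dualize and invoke the characterization of spanning linear sets of size $\frac{q^k-1}{q-1}$ as canonical subgeometries (the paper cites \cite[Lemma 3.2]{SamPaolo} for exactly this). One small slip worth flagging: the hyperplane that Lemma~\ref{lem:tangenthypersubgeo} produces through $P\in L_U\setminus L_W$ is \emph{tangent} to $L_W$, not external, so it does not witness a discrepancy between the external-hyperplane counts; but this step is redundant anyway, since once $n=k$ your parenthetical dimension argument ($W\subseteq U$ with $\dim_{\fq}W=k=n=\dim_{\fq}U$) already yields $U=W$ directly.
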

\begin{proof}
Let $U$ be any system associated with $\C$.
Let start by assuming that $n< m$ and $k=2$, then by Theorem \ref{thm:boundsMC21}, $M(\C)$ is maximum if and only if $L_U$ has size $q+1$ and arguing as before we obtain that this is equivalent to require that $\C$ is $\F_{q^m}^2$. In the case in which $k>2$, then by Theorem \ref{thm:boundn<=m} $M(\C)$ is maximum if and only if $M(\C)=\prod_{i=0}^{n-1}(q^m-q^i)$.
Suppose that $k<n$, then 
\[M(\C)> \prod_{i=0}^{k-1}(q^m-q^i), \]
that is by Remark \ref{rk:exthyperssubgeo} and Proposition \ref{prop:points-codewordsk2} the number of external hyperplanes to $L_U$ is greater than the number of external hyperplanes to $L_W$, where $L_W$ is a canonical subgeometry contained in $L_U$. Since $L_W \subseteq L_U$, this is a contradiction. Hence $k=n$ and we obtain the assertion.
Suppose that $n\geq m$.
By \eqref{eq:boundmC2dualk} and Proposition \ref{prop:points-codewordsk2} we have that $L_{U^{\perp'}}$ has size $\frac{q^k-1}{q-1}$. By \cite[Lemma 3.2]{SamPaolo}, $L_{U^{\perp'}}\simeq\PG(k-1,q)$ is a canonical subgeometry of $\PG(k-1,q^m)$ and $\dim_{\fq}(U^{\perp'})=k$. Hence, $\dim_{\F_{q^m}}(\C^{\perp_{\mathcal{G}}})=k$, i.e. $\C^{\perp_{\mathcal{G}}}= \F_{q^m}^k$. 
\end{proof}

We can also characterize the case of equality in Theorem \ref{th:boundsubgeometrycodes}.

\begin{proposition}
Let $\C$ be a non-degenerate $[n,k]_{q^m/q}$ code and assume that $m\leq n$ and $d_{k-1}^{\mathrm{rk}}(\C)\geq n-m+1$.  Let $G'$ be any of generator matrix of $\C^{\perp_{\mathcal{G}}}$. Suppose there exist $r>1$ codewords $c_1,\ldots,c_r \in \C^{\perp_{\mathcal{G}}}$ $\F_{q^m}$-linearly independent such that
\[ W=\psi_{G'} \left( \bigcap_{i=1}^r \mathrm{supp}(c_i)^\perp \right)\]
satisfies $\dim_{\fq}(W)=\dim_{\F_{q^m}}(\langle W \rangle_{\F_{q^m}})=k-r$
%is isomorphic to a $[k-r,k-r]_{q^m/q}$ system 
and
\[
M(\C) = q^{km}-1-(q^m-1)\left( q^{km-n}+\ldots+q^{km-n-k+r}+\frac{q^r-1}{q-1}\right).
\]
Then $n=mk-k$ and $\C^{\perp_{\mathcal{G}}}=\F_{q^m}^k$.
\end{proposition}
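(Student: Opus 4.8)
The approach is to run the argument of Theorem~\ref{th:boundsubgeometrycodes} in reverse and then invoke the equality clause of Theorem~\ref{thm:newBound}. Let $U$ be the system associated with $\C$ for which $U^{\perp'}$ is the $\fq$-span of the columns of $G'$, so that $\dim_{\fq}(U^{\perp'})=km-n$ and, by Remark~\ref{rk:geogenrank}, $L_{U^{\perp'}}$ spans $\PG(k-1,q^m)$. Writing $c_i=v_iG'$ with $v_1,\dots,v_r$ linearly independent over $\F_{q^m}$ and applying Theorem~\ref{th:suppogeom} exactly as in the proof of Theorem~\ref{th:boundsubgeometrycodes}, one identifies $W$ with $U^{\perp'}\cap\langle v_1,\dots,v_r\rangle_{\F_{q^m}}^{\perp}$, where $\langle v_1,\dots,v_r\rangle_{\F_{q^m}}^{\perp}$ is an $\F_{q^m}$-subspace of $\F_{q^m}^k$ of projective dimension $k-r-1$. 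Thus the hypothesis $\dim_{\fq}(W)=\dim_{\F_{q^m}}(\langle W\rangle_{\F_{q^m}})=k-r$ is exactly the statement that $L_{U^{\perp'}}$ meets this $(k-r-1)$-space in a canonical $\fq$-subgeometry $\PG(k-r-1,q)$.

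Next I would reformulate the numerical hypothesis on $M(\C)$ as a statement about $|L_{U^{\perp'}}|$. By Proposition~\ref{prop:points-codewordsk2} one has $M(\C)/(q^m-1)=(q^{km}-1)/(q^m-1)-|L_{U^{\perp'}}|$, so the assumed value of $M(\C)$ is equivalent to $|L_{U^{\perp'}}|$ attaining with equality the lower bound~\eqref{eq:debeulebound} of Theorem~\ref{thm:newBound}, applied to $L_{U^{\perp'}}$ (rank $km-n$, spanning $\PG(k-1,q^m)$) together with the $(k-r-1)$-space found above. In the notation of Theorem~\ref{thm:newBound} its parameter ``$r$'' corresponds here to $k-r$, so the side condition ``$r<k-1$'' reads $k-r<k-1$, i.e.\ $r>1$, which is the standing hypothesis; hence the equality clause applies.

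The equality clause of Theorem~\ref{thm:newBound} now forces $L_{U^{\perp'}}$ to be a canonical subgeometry $\PG(k-1,q)$ of $\PG(k-1,q^m)$. In particular $L_{U^{\perp'}}$ has rank $k$, so $km-n=k$, i.e.\ $n=mk-k$; and $U^{\perp'}$ is a $k$-dimensional $\fq$-subspace of $\F_{q^m}^k$ spanning it, i.e.\ a $[k,k,1]_{q^m/q}$ system whose (up to equivalence only) associated code is the full space $\F_{q^m}^k$. Therefore $\C^{\perp_{\mathcal{G}}}=\F_{q^m}^k$, as claimed. (For the reverse implication one checks directly that when $n=mk-k$ the subtracted quantity collapses, since $q^{km-n-1}+\dots+q^{km-n-k+r}+\tfrac{q^r-1}{q-1}=q^{k-1}+\dots+q^{r}+\tfrac{q^r-1}{q-1}=\tfrac{q^k-1}{q-1}=|L_{U^{\perp'}}|$.)

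I do not expect a genuine obstacle here: the whole weight of the proof rests on the equality characterization in Theorem~\ref{thm:newBound}, which is used as a black box. The single delicate point is the index bookkeeping --- matching the codimension-$r$ subspace in the hypothesis (projective dimension $k-r-1$) with the ``$(r-1)$-space'' of Theorem~\ref{thm:newBound}, and checking that the postulated expression for $M(\C)$ coincides term by term with $(q^{km}-1)/(q^m-1)$ minus the right-hand side of~\eqref{eq:debeulebound}. Once these are aligned, $n=mk-k$ and $\C^{\perp_{\mathcal{G}}}=\F_{q^m}^k$ follow at once.
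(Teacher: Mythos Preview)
Your proposal is correct and follows essentially the same route as the paper's own proof: translate the hypothesis on $M(\C)$ into the equality case for $|L_{U^{\perp'}}|$ in Theorem~\ref{thm:newBound} via Proposition~\ref{prop:points-codewordsk2}, use the argument of Theorem~\ref{th:boundsubgeometrycodes} to exhibit the required $(k-r-1)$-space meeting $L_{U^{\perp'}}$ in a canonical subgeometry, and then invoke the equality clause of Theorem~\ref{thm:newBound} (with its ``$r$'' equal to $k-r$, so that $r>1$ guarantees the side condition) to conclude $L_{U^{\perp'}}=\PG(k-1,q)$, whence $km-n=k$ and $\C^{\perp_{\mathcal G}}=\F_{q^m}^k$. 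Your parenthetical ``reverse implication'' is superfluous since the statement is one-directional, but it does no harm.
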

\begin{proof}
Let $U$ be any system associated with $\C$. Then 
\[ M(\C)= (q^m-1)|\PG(k-1,q^m)\setminus L_{U^{\perp'}}|,\]
and hence
\[|L_{U^{\perp'}}|=q^{km-n-1}+\ldots+q^{km-n-k+r}+\frac{q^r-1}{q-1}.\]
Arguing as in the proof of Theorem \ref{th:boundsubgeometrycodes}, there exists an $\F_{q^m}$-subspace $S$ of $\F_{q^m}^k$ of dimension $k-r<k-1$ such that $L_{W \cap U^{\perp'}}$ is a canonical subgeometry in $\PG(W,\F_{q^m})$ and hence $L_{U^{\perp'}}$ satisfies the assumption of Theorem \ref{thm:newBound} with equality in the lower bound and hence $L_{U^{\perp'}}=\PG(k-1,q)$ by the second part of Theorem \ref{thm:newBound}.
\end{proof}

In the following we will study the case in which, under certain assumptions, the upper bound on $M(\C)$ has been improved.
For this case, the situation is much more complicated and a complete answer in general seems to be very difficult.
Indeed, in this case we will show some examples which will suggest that a complete classification for this case is hard to obtain.

We start by describing some constructions for codes.

\begin{construction}\label{con:polcodek>2}
Let $\lambda\in \F_{q^m} \setminus \fq$ be an element generating $\F_{q^m}$ and 
\[ G=
\left(
\begin{array}{llllllllllll}
 1 & \lambda & \ldots & \lambda^{t_1-1} & 0 & \ldots & & & & & & 0  \\
0 & 0 & \ldots & 0 & 1 & \lambda & \ldots & \lambda^{t_2-1} & 0  & \ldots & &  0\\
\vdots &  &  &  &  &  & &  & \ddots &  & & \\
0 & \ldots &  &  &  &  &  & & 0  & 1 & \ldots &  \lambda^{t_{k}-1}
\end{array}
\right)
\in \F_{q^m}^{k\times (t_1+\ldots+t_k)}. \]
Let $\C_{\lambda,t_1,\ldots,t_{k}}$ be the $\F_{q^m}$-linear rank metric code in $\mathbb{F}_{q^m}^{t_1+\ldots+t_{k}}$ with dimension $k$ having $G$ as a generator matrix. 
\end{construction}

We now determine the parameters of these codes.

\begin{theorem}\label{th:parmsofpolcode}
Let $\lambda\in \F_{q^m} \setminus \fq$ be an element generating $\F_{q^m}$ and let $\C_{\lambda,t_1,\ldots,t_{k}}$ be as in Construction \ref{con:polcodek>2}.
Assume that $t_1\leq t_2\leq \ldots\leq t_k\leq m-1$.
Then $\C_{\lambda,t_1,\ldots,t_{k}}$  is an $[t_1+\ldots+t_k,k,t_1]_{q^m/q}$ code. 
Moreover, if $k=2$ then 
\begin{itemize}
    \item the first row of $G$ in Construction \ref{con:polcodek>2} and its non-zero $\F_{q^m}$-proportional vectors have weight $t_1$;
    \item the number of codewords in $\C_{\lambda,t_1,t_2}$ of weight $t_2$ different from those in the previous item is $q^{t_2-t_1+1}(q^m-1)$;
    \item the number of codewords of weight $\min\{m,n\}-i$ in $\C_{\lambda,t_1,t_2}$ is $(q^m-1)(q^{n-2i+1}-q^{n-2i-1})$, for any $i \in \{1,\ldots,t_1-1\}$.
\end{itemize}
\end{theorem}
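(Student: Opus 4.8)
The plan is to translate everything through the code-system correspondence and the linear set $L_U$, where $U$ is the $\F_q$-span of the columns of $G$ in Construction \ref{con:polcodek>2}. First I would observe that $U = \langle 1,\lambda,\ldots,\lambda^{t_1-1}\rangle_{\fq}\times\cdots\times\langle 1,\lambda,\ldots,\lambda^{t_k-1}\rangle_{\fq}\subseteq\F_{q^m}^k$, so $\dim_{\fq}(U)=t_1+\ldots+t_k=:n$ and $\langle U\rangle_{\F_{q^m}}=\F_{q^m}^k$ since each block contains the vector $e_i$; hence $U$ is an $[n,k,d]_{q^m/q}$ system and $\C_{\lambda,t_1,\ldots,t_k}$ is non-degenerate of dimension $k$. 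The minimum distance follows from \eqref{eq:distancedesign}: I would show $\max\{\dim_{\fq}(U\cap H)\}=n-t_1$ over $\F_{q^m}$-hyperplanes $H$. The hyperplane $H=\{x_1=0\}$ gives $U\cap H=\langle 1,\ldots,\lambda^{t_2-1}\rangle_{\fq}\times\cdots$ of dimension $n-t_1$ (using $t_1\le t_2\le\cdots\le t_k$), yielding $d\le t_1$. For the reverse inequality, a general hyperplane $H=\{\sum a_ix_i=0\}$ with, say, $a_j\ne 0$: projecting $U$ onto the $j$-th block, the intersection with $H$ cannot kill more than the weight of the associated univariate linear set $\langle 1,\lambda,\ldots,\lambda^{t_j-1}\rangle_{\fq}$ against a point, which by the fact that $\lambda$ generates $\F_{q^m}$ and $t_j\le m-1$ loses at least... — more precisely, I would argue via Theorem \ref{th:constructionVdV}-type reasoning block by block, or directly that a codeword $uG$ with $u=(u_1,\ldots,u_k)$ nonzero has weight $\ge t_i$ where $i$ is any index with $u_i\ne 0$, because the $i$-th block contributes the weight of $(u_i,u_i\lambda,\ldots,u_i\lambda^{t_i-1})$, which equals $t_i\ge t_1$ as $1,\lambda,\ldots,\lambda^{t_i-1}$ are $\fq$-independent. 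This simultaneously gives $d=t_1$ and shows every nonzero codeword has weight $\ge t_1$.

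For the $k=2$ refinement I would specialize to $G$ with two rows and invoke Theorem \ref{th:constructionVdV} directly: the associated system is exactly the $U$ of that theorem with $(t_1,t_2)$ in place of $(t_1,t_2)$, so $L_U\subseteq\PG(1,q^m)$ has $q^{n-1}+1$ points with the stated weight distribution, namely one point $\langle(0,1)\rangle$ of weight $t_2$, $q^{t_2-t_1+1}$ points of weight $t_1$ other than it, and $q^{n-2i+1}-q^{n-2i-1}$ points of weight $i$ for $i\in\{1,\ldots,t_1-1\}$. Now I translate point weights to codeword weights via \eqref{eq:relweight}: for $P=\langle u^\perp\rangle$ one has $w(uG)=n-w_{L_U}(P)$, and each projective point corresponds to $q^m-1$ nonzero vectors $u$. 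The point $\langle(0,1)\rangle=\langle(1,0)^\perp\rangle$ of weight $t_2$ corresponds to $u$ proportional to $(1,0)$, i.e.\ codewords that are $\F_{q^m}$-multiples of the first row of $G$; these have weight $n-t_2$. Wait — I need $n-t_2$ to match "weight $t_1$" as claimed in the first bullet, so I should double check: $n=t_1+t_2$, hence $n-t_2=t_1$, confirming the first bullet. Similarly the $q^{t_2-t_1+1}$ points of weight $t_1$ give codewords of weight $n-t_1=t_2$, so there are $q^{t_2-t_1+1}(q^m-1)$ of them, matching the second bullet, and one must note these are disjoint from the first family since the underlying projective points differ. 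Finally the points of weight $i\in\{1,\ldots,t_1-1\}$ give codewords of weight $n-i=\min\{m,n\}-i$ (here $n\le m$ as $n=t_1+t_2\le 2(m-1)$; actually I must check $n\le m$ is what "min$\{m,n\}$" in the statement assumes, or simply read $\min\{m,n\}=n$ since the relevant weights $n-i$ are all $\le n$), in number $(q^m-1)(q^{n-2i+1}-q^{n-2i-1})$, matching the third bullet.

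The main obstacle I anticipate is the careful bookkeeping to make sure the three families of codewords in the $k=2$ part are genuinely disjoint and exhaust what is claimed — i.e.\ that no codeword is double-counted — which amounts to checking that the projective points of $\PG(1,q^m)$ carrying weights $t_2$, $t_1$, and $i<t_1$ are pairwise distinct and that the fibre size is uniformly $q^m-1$; this is immediate from $\psi_G$ being an isomorphism but should be stated. A secondary point needing care is the clean proof of $d=t_1$ for general $k$: the easy direction is the hyperplane $\{x_1=0\}$, and the lower bound $w(uG)\ge t_i$ for any $i$ with $u_i\ne0$ uses only $\fq$-linear independence of $1,\lambda,\ldots,\lambda^{t_k-1}$, which holds because $\lambda$ generates $\F_{q^m}$ and $t_k\le m-1\le m$, so $t_k$ consecutive powers of $\lambda$ are independent over $\fq$. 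I would also remark that for $k\ge 3$ the full weight distribution is not claimed and is genuinely harder, which is why the theorem only states the codeword counts in the $k=2$ case.
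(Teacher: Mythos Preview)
Your argument for the minimum distance $d=t_1$ is correct and in fact more direct than the paper's. The paper computes $U^{\perp'}$ explicitly (showing it is $\mathrm{GL}(k,q^m)$-equivalent to $\langle 1,\lambda,\ldots,\lambda^{m-t_1-1}\rangle_{\fq}\times\cdots\times\langle 1,\lambda,\ldots,\lambda^{m-t_k-1}\rangle_{\fq}$), reads off the maximum point weight of $L_{U^{\perp'}}$ as $m-t_1$ via Remark~\ref{rk:maximumweighpointscomplementaryweights}, and then transfers this to the maximum hyperplane weight of $L_U$ through Proposition~\ref{prop:weightdual}. Your observation that the $i$-th block of $uG$ already has $\fq$-span of dimension $t_i$ whenever $u_i\neq 0$, combined with exhibiting the first row of $G$, bypasses all of this. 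The paper's detour through duality is not wasted, however: it sets up exactly the machinery needed for the $n>m$ case below.

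For the $k=2$ weight distribution when $n\le m$, your approach coincides with the paper's: both invoke Theorem~\ref{th:constructionVdV} for the point-weight distribution of $L_U$ and translate via \eqref{eq:relweight}.

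There is a genuine gap in the $k=2$ part when $n>m$. The hypotheses allow $t_1,t_2\le m-1$, so $n=t_1+t_2$ can be as large as $2m-2$; your parenthetical ``$n\le m$ as $n=t_1+t_2\le 2(m-1)$'' does not establish $n\le m$, and the fallback ``read $\min\{m,n\}=n$'' is simply false when $n>m$. In that regime Theorem~\ref{th:constructionVdV} does not apply to $U$ (its hypothesis is $t_1+t_2\le m$), and the third bullet concerns codewords of weight $m-i$, which correspond to points of $L_U$ of weight $n-m+i$ --- a distribution you have not computed. The paper closes this gap by the duality already set up: $U^{\perp'}$ is $\mathrm{GL}(2,q^m)$-equivalent to $\langle 1,\lambda,\ldots,\lambda^{m-t_1-1}\rangle_{\fq}\times\langle 1,\lambda,\ldots,\lambda^{m-t_2-1}\rangle_{\fq}$, which has rank $2m-n\le m$ and hence does satisfy the hypotheses of Theorem~\ref{th:constructionVdV}; one then reads off the point-weight distribution of $L_{U^{\perp'}}$ and transfers it back to $L_U$ via Proposition~\ref{prop:weightdual}. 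You should add this step (or restrict the $k=2$ claims to $n\le m$).
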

\begin{proof}
Because of the structure of $G$, it is clear that the length of $\C_{\lambda,t_1,\ldots,t_{k}}$ is $n=t_1+\ldots+t_k$ and its dimension is $k$.
Let consider the following system associated with $\C_{\lambda,t_1,\ldots,t_k}$
\[U=S_1\times \ldots \times S_k=\langle 1,\lambda,\ldots,\lambda^{t_1-1} \rangle_{\fq}\times \ldots \times  \langle 1,\lambda,\ldots,\lambda^{t_k-1} \rangle_{\fq}. \]
In order to determine the weight distribution of the code we need to determine the weight distribution of $L_U$ with respect to the hyperplanes. To this aim we will consider its dual and we will recover it from the weight distribution of the points of its dual.
Consider 
\[ \sigma \colon ((u_1,\ldots,u_k),(v_1,\ldots,v_k)) \in (\mathbb{F}_{q^m}^k)^2 \mapsto u_1v_1+\ldots+u_kv_k \in \F_{q^m}, \]
in this way 
\[ \sigma' \colon ((u_1,\ldots,u_k),(v_1,\ldots,v_k)) \in (\mathbb{F}_{q^m}^k)^2 \mapsto \mathrm{Tr}_{q^m/q}(u_1v_1+\ldots+u_kv_k) \in \F_{q}. \]
Denote by 
\[\overline{S}_i= \{ a \in \F_{q^m} \colon \mathrm{Tr}_{q^m/q}(ab)=0,\,\,\forall b \in S_i \}, \]
for any $i \in \{1,\ldots,k\}$.
It is easy to see that $U^{\perp'}=\overline{S}_1 \times \ldots\times \overline{S}_k$.
By \cite[Proposition 2.9]{napolitano2022classifications}, there exists $\alpha \in \F_{q^m}^*$ such that
\[ \overline{S}_i=\alpha \langle 1,\lambda,\ldots,\lambda^{m-t_i-1} \rangle_{\fq}, \]
for $i\in \{1,\ldots,k\}$. Therefore, $U^{\perp'}$ is an $\fq$-subspace of dimension $km-n$ in $\F_{q^m}^k$ which is $\mathrm{GL}(k,q^m)$-equivalent to $W$, where
\[W=\langle 1,\lambda,\ldots,\lambda^{m-t_1-1} \rangle_{\fq}\times\ldots\times \langle 1,\lambda,\ldots,\lambda^{m-t_k-1} \rangle_{\fq}. \]
Hence, the weight distributions of $L_{U^{\perp'}}$ and $L_W$ coincide. So, by Remark \ref{rk:maximumweighpointscomplementaryweights}
\[ \max\{w_{L_{U^{\perp'}}}(P) \colon P \in \PG(k-1,q^m)\}=\max\{w_{L_W}(P) \colon P \in \PG(k-1,q^m)\}=m-t_1, \]
therefore by Proposition \ref{prop:weightdual} we have that
\[ \max\{w_{L_{U}}(H) \colon H=\PG(k-2,q^m)\subset\PG(k-1,q^m)\}=m-t_1+n-m=t_2+\ldots+t_k \]
and hence the minimum distance can be determined via Theorem \ref{th:connection}.
When $k=2$ and $n\leq m$, by Theorem \ref{th:connection} the weight distribution of the code can be determined by using the weight distribution of the linear sets in Theorem \ref{th:constructionVdV}. 
If $n\geq m$ we can argue as before with the duality in such a way that the dual of $U$ satisfies the assumptions of Theorem \ref{th:constructionVdV}.
\end{proof}

\begin{remark}
For more general dimensions, it is possible to determine the weight distribution of the code under the assumptions that $t_1+\ldots+t_k\geq m$ and $t_i+t_j \geq m-1$ for any $i\ne j$, by using the duality as in the proof of the above theorem and \cite[Theorem 2.17]{jena2021linear} (see also \cite[Remark 4.3]{SamPaolo}).
\end{remark}

\begin{remark}
The family of codes in Construction \ref{con:polcodek>2} is closed under the operation of geometric dual with respect to $\sigma$ as in the proof of Theorem \ref{th:parmsofpolcode}. More precisely,
$\C_{\lambda,t_1,\ldots,t_k}^{\perp_{\mathcal{G}}}$ is equivalent to $\C_{\lambda,m-t_1,\ldots,m-t_k}$.
\end{remark}

Let's start by proving that the examples of dimension $2$ in Construction \ref{con:polcodek>2} gives the maximum values for $M(\C)$.

\begin{theorem}\label{thm:polcodemax}
Let $\lambda\in \F_{q^m} \setminus \fq$ be an element generating $\F_{q^m}$. The code $\C_{\lambda,t_1,t_2}\subseteq \F_{q^m}^{n}$, where $n=t_1+t_2$ and $3\leq n \leq 2m-3$, has a codeword of weight $\min\{m,n\}-1$ and $\C_{\lambda,t_1,t_2}$ reaches the maximum for $M(\C_{\lambda,t_1,t_2})$ among the $[n,2]_{q^m/q}$ codes with a codeword of weight $\min\{m,n\}-1$.
\end{theorem}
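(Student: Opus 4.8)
The plan is to translate the statement into the language of linear sets and reduce it to the bound of Theorem~\ref{thm:boundsMC21}, specifically to the improved upper bound \eqref{eq:boundmC2scatt1} that holds once the code has a codeword of weight $\min\{m,n\}-1$. So the first step is to verify that $\C_{\lambda,t_1,t_2}$ does contain such a codeword. This I would read off from Theorem~\ref{th:parmsofpolcode}: by that result the minimum distance is $t_1$, the maximum weight of a hyperplane section of the associated linear set is $t_2+\ldots+t_k=t_2$, so (via Theorem~\ref{th:connection}) the code has codewords of weight $n-t_1=t_2\le n-1$ and of weight $n-$ (something) covering the value $\min\{m,n\}-1$; more directly, when $n\le m$ Theorem~\ref{th:parmsofpolcode} gives the full weight distribution through Theorem~\ref{th:constructionVdV}, which exhibits a point of weight $1$ in $L_U$, equivalently a codeword of weight $n-1$; and when $n\ge m$ the dual argument in the proof of Theorem~\ref{th:parmsofpolcode} shows $L_{U^{\perp'}}$ has a point of weight $1$, equivalently a codeword of weight $m-1$. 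Either way the weight $\min\{m,n\}-1$ occurs.

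The second step is to identify the relevant extremal bound. Among $[n,2]_{q^m/q}$ codes with a codeword of weight $\min\{m,n\}-1$, the maximum of $M(\C)$ is governed by the \emph{minimum} possible size of the associated linear set (resp.\ its dual), via Proposition~\ref{prop:points-codewords2} (when $n\le m$) or via \eqref{eq:M(C)dual} in the proof of Theorem~\ref{thm:boundMC22} (when $n>m$). In the case $n\le m$, having a codeword of weight $n-1$ means $L_U$ has a point of weight $1$, so by Theorem~\ref{th:minsize} we get $|L_U|\ge q^{n-1}+1$, and the maximum of $M(\C)$ over such codes is attained exactly when $|L_U|=q^{n-1}+1$. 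In the case $n\ge m$ the dual linear set $L_{U^{\perp'}}$ has rank $2m-n<m$, contains a point of weight $1$, hence $|L_{U^{\perp'}}|\ge q^{2m-n-1}+1$, with equality giving the maximal $M(\C)$. So the statement reduces to: the linear set $L_U$ associated with $\C_{\lambda,t_1,t_2}$ (resp.\ its dual) has the minimum number $q^{n-1}+1$ (resp.\ $q^{2m-n-1}+1$) of points.

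The third step is therefore the computation of $|L_U|$. Here I would invoke Theorem~\ref{th:constructionVdV} directly: the system $U=\langle 1,\lambda,\ldots,\lambda^{t_1-1}\rangle_{\fq}\times\langle 1,\lambda,\ldots,\lambda^{t_2-1}\rangle_{\fq}$ is precisely the subspace appearing there (with $k=t_1+t_2=n$), and that theorem asserts $|L_U|=q^{n-1}+1$. For $n>m$, the proof of Theorem~\ref{th:parmsofpolcode} shows $U^{\perp'}$ is $\GL(2,q^m)$-equivalent to $\langle 1,\lambda,\ldots,\lambda^{m-t_1-1}\rangle_{\fq}\times\langle 1,\lambda,\ldots,\lambda^{m-t_2-1}\rangle_{\fq}$, which again is of the form in Theorem~\ref{th:constructionVdV} with rank $(m-t_1)+(m-t_2)=2m-n$, so $|L_{U^{\perp'}}|=q^{2m-n-1}+1$. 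Plugging these into Proposition~\ref{prop:points-codewords2} (resp.\ \eqref{eq:M(C)dual}) yields $M(\C_{\lambda,t_1,t_2})=q^{2m}-1-(q^m-1)(q^{\min\{m,n\}-1}+1)$, which is exactly the upper value in \eqref{eq:boundmC2scatt1}, hence the maximum. The hypothesis $3\le n\le 2m-3$ ensures the relevant ranks are $\ge 2$ and $<m$ so that all the cited bounds apply, and that the degenerate cases (the simplex code, rank-$1$ linear sets) are excluded.

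I expect the only genuine subtlety to be bookkeeping between the two regimes $n\le m$ and $n>m$ and making sure that "weight $\min\{m,n\}-1$ codeword exists" is correctly matched to "weight-$1$ point in $L_U$" versus "weight-$1$ point in $L_{U^{\perp'}}$" — the rest is a direct appeal to Theorem~\ref{th:constructionVdV}, Theorem~\ref{th:minsize} and the already-established dictionary. There is no hard estimate to grind through: the extremality is built into the minimum-size theorem for linear sets on a line, and the code in Construction~\ref{con:polcodek>2} was engineered to meet it.
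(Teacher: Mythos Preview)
Your proposal is correct and follows essentially the same route as the paper: split into the two regimes $n\le m$ and $n>m$, invoke Theorem~\ref{th:constructionVdV} on $L_U$ (respectively on the $\GL(2,q^m)$-equivalent form of $L_{U^{\perp'}}$ obtained in Theorem~\ref{th:parmsofpolcode}) to get both the size $q^{n-1}+1$ (resp.\ $q^{2m-n-1}+1$) and the existence of weight-$1$ points, and then conclude via Proposition~\ref{prop:points-codewords2} (resp.\ \eqref{eq:M(C)dual}) that $M(\C_{\lambda,t_1,t_2})$ attains the upper bound \eqref{eq:boundmC2scatt1} (resp.\ the $e=1$ case of Theorem~\ref{thm:boundMC22}). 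Your remark that the condition $3\le n\le 2m-3$ guarantees $q^{n-1}-q^{n-3}\ge 1$ (resp.\ $q^{2m-n-1}-q^{2m-n-3}\ge 1$) is exactly how the paper secures the existence of the required weight-$1$ point.
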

\begin{proof}
Let consider the following system associated with $\C_{\lambda,t_1,t_2}$
\[U=S_1\times S_2=\langle 1,\lambda,\ldots,\lambda^{t_1-1} \rangle_{\fq}\times \langle 1,\lambda,\ldots,\lambda^{t_2-1} \rangle_{\fq}. \]
Suppose that $n\leq m$ then $t_1+t_2\leq m$
and $L_U$ is an $\fq$-linear set of rank $n$ of the form of Theorem \ref{th:constructionVdV}, which implies that $|L_U|=q^{n-1}+1$ and $L_U$ has $q^{n-1}-q^{n-3}\geq 1$ points of weight one. By \eqref{eq:relweight}, the latter fact reads as $\C_{\lambda,t_1,t_2}$ has at least a codeword of weight $n-1$ and by Proposition \ref{prop:points-codewords2}
\[M(\C_{\lambda,t_1,t_2})=q^{2m}-1-(q^m-1)(q^{n-1}+1),\]
that is we have the equality in \eqref{eq:boundmC2scatt1}.
Now, suppose that $n > m$ and consider $U^{\perp'}$ as in the proof of Theorem \ref{th:parmsofpolcode}. Hence, we have that $U^{\perp'}$ is an $\fq$-subspace of dimension $2m-n$ in $\F_{q^m}^2$ which is $\mathrm{GL}(2,q^m)$-equivalent to $W$, where
\[W=\langle 1,\lambda,\ldots,\lambda^{m-t_1-1} \rangle_{\fq}\times \langle 1,\lambda,\ldots,\lambda^{m-t_2-1} \rangle_{\fq}. \]
Hence, $|L_{U^{\perp'}}|=|L_W|$. We can apply again Theorem \ref{th:constructionVdV} to $L_W$ and we obtain that $L_{U^{\perp'}}$ has $q^{2m-n-1}-q^{2m-n-3}\geq 1$ points of weight one and size $q^{2m-n-1}+1$. By combining Proposition \ref{prop:weightdual} and \eqref{eq:relweight}, we have that $\C_{\lambda,t_1,t_2}$ has at least one codeword of weight $m-1$. By \eqref{eq:M(C)dual}, we have that
\[M(\C_{\lambda,t_1,t_2})=q^{2m}-1-(q^m-1)(q^{2m-n-1}+1),\]
that is the equality in \eqref{eq:boundmC2dual} of Theorem \ref{thm:boundMC22}.
\end{proof}

The above result can be extended to larger dimension under certain assumptions.

\begin{theorem}\label{th:boundsuperp}
Let $\lambda\in \F_{q^m} \setminus \fq$ be an element generating $\F_{q^m}$. Let consider the code $\C_{\lambda,t_1,\ldots,t_k}\subseteq \F_{q^m}^{n}$, where $n=t_1+\ldots+t_k$, $m\leq n$ and $km-n\leq m+k$. 
Let $\{m-t_1-1,\ldots,m-t_k-1\}=\{s_{i_1},\ldots,s_{i_{\ell}}\}$, with $s_{i_1}>\ldots > s_{i_{\ell}}$. Then, if either
\[
k \leq \sum_{j=1}^{\ell} \frac{q^{s_{i_j}}-2q^{s_{i_j}/2}}{s_{i_j}}
\]
or  
\[ mk-k-t_1-\ldots-t_k \leq q, \]
the code $\C_{\lambda,t_1,\ldots,t_k}$ satisfies the assumptions of Theorem \ref{th:boundsubgeometrycodes} with $r=1$ and reaches the maximum for $M(\C_{\lambda,t_1,\ldots,t_k})$ among the $[n,k]_{q^m/q}$ codes satisfying the assumptions of Theorem \ref{th:boundsubgeometrycodes} with $r=1$.
\end{theorem}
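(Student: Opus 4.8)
The plan is to reduce the statement to the dual linear set $L_{U^{\perp'}}$ and apply the machinery already set up for Construction \ref{con:polcodek>2}. First I would fix the system $U = S_1 \times \cdots \times S_k$ associated with $\C_{\lambda,t_1,\ldots,t_k}$ as in the proof of Theorem \ref{th:parmsofpolcode}, so that $U^{\perp'}$ has dimension $km - n$ and is $\mathrm{GL}(k,q^m)$-equivalent to $W = \langle 1,\lambda,\ldots,\lambda^{m-t_1-1}\rangle_{\fq} \times \cdots \times \langle 1,\lambda,\ldots,\lambda^{m-t_k-1}\rangle_{\fq}$. Since $km - n \leq m + k$, each block $\langle 1,\lambda,\ldots,\lambda^{m-t_i-1}\rangle_{\fq}$ has dimension $m - t_i \leq \ldots$, and the total rank is small relative to $m$; the geometry of $L_W$ in $\PG(k-1,q^m)$ is then governed by the minimum-size phenomenon of Theorem \ref{th:constructionVdV} (or its higher-dimensional analogue referenced in the remark after Theorem \ref{th:parmsofpolcode}). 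The key point is to show $L_W$ — equivalently $L_{U^{\perp'}}$ — contains a point of weight $1$, i.e. a point $P$ with $w_{L_{U^{\perp'}}}(P) = 1$; dually, via Proposition \ref{prop:weightdual} and \eqref{eq:relweight}, this says $\C_{\lambda,t_1,\ldots,t_k}$ has a codeword of weight $m-1$, so it satisfies the hypothesis of Theorem \ref{th:boundsubgeometrycodes} with $r = 1$.

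Next I would verify that the two alternative numerical hypotheses are exactly what guarantees the existence of such a weight-one point. Writing the exponents $m - t_i - 1$ and grouping equal values into $\{s_{i_1},\ldots,s_{i_\ell}\}$, a point of weight $1$ in $L_W$ comes from choosing, in each block, a coordinate of the form $\lambda^{a_j}$ (with $a_j$ a polynomial in $\lambda$ of degree $< m - t_{i} $) such that the resulting vector spans a $1$-dimensional $\fq$-subspace. The condition $k \leq \sum_j \frac{q^{s_{i_j}} - 2q^{s_{i_j}/2}}{s_{i_j}}$ is a counting bound: $\frac{q^{s}-2q^{s/2}}{s}$ lower-bounds the number of degree-$s$ (or relevant-degree) elements of $\F_{q^{s}}$ that are "primitive" in the needed sense — essentially a Hasse–Weil / irreducible-polynomial count — and the inequality ensures enough independent choices exist across the $k$ blocks to produce a genuinely weight-one point. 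The alternative hypothesis $mk - k - n \leq q$ forces $\mathrm{rk}(L_{U^{\perp'}}) = km - n \leq q + k$, small enough that the linear set cannot avoid a weight-one point by a direct pigeonhole on $|L_{U^{\perp'}}| \geq (q^k-1)/(q-1)$ together with \eqref{eq:pesivett}. I expect this existence-of-a-weight-one-point step to be the main obstacle: it is where the precise arithmetic of the Moore-type blocks interacts with the combinatorics of linear sets, and the two hypotheses in the statement are presumably two different sufficient conditions for it.

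Having established the hypothesis of Theorem \ref{th:boundsubgeometrycodes} with $r = 1$, the upper bound
\[
\frac{M(\C_{\lambda,t_1,\ldots,t_k})}{q^m - 1} \leq \frac{q^{km}-1}{q^m-1} - \left( q^{km-n-1} + \cdots + q^{km-n-k+1} + 1 \right)
\]
is immediate. To show it is attained, I would compute $|L_{U^{\perp'}}| = |L_W|$ directly: because $km - n \leq m + k$ the subspace $W$ falls within the range where the minimum-size constructions of Theorem \ref{th:constructionVdV} and \cite[Theorem 2.17]{jena2021linear} apply, giving exactly $|L_W| = q^{km-n-1} + \cdots + q^{km-n-k+1} + \frac{q-1}{q-1}$, i.e. the minimum value allowed by Theorem \ref{thm:newBound} for a linear set of this rank meeting a codimension-one space in a canonical subgeometry. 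Then Proposition \ref{prop:points-codewordsk2} converts $|L_{U^{\perp'}}|$ into $M(\C_{\lambda,t_1,\ldots,t_k})$ and yields equality in \eqref{eq:boundmC2scattk2boundnew}, which is the maximum for $M$ among all $[n,k]_{q^m/q}$ codes satisfying the hypothesis of Theorem \ref{th:boundsubgeometrycodes} with $r = 1$. The final bookkeeping — matching the geometric-dual construction $\C_{\lambda,t_1,\ldots,t_k}^{\perp_\mathcal G} \simeq \C_{\lambda,m-t_1,\ldots,m-t_k}$ with the system $W$, and checking $d_{k-1}^{\mathrm{rk}} \geq n - m + 1$ so that the geometric dual is defined — is routine given the remarks following Theorem \ref{th:parmsofpolcode}.
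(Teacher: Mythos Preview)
Your overall architecture --- pass to the dual system $U^{\perp'}$, identify it with the block subspace $W$, compute $|L_W|$, and invoke Proposition~\ref{prop:points-codewordsk2} --- matches the paper's proof. But there is a genuine gap in how you interpret the hypothesis of Theorem~\ref{th:boundsubgeometrycodes} for $r=1$.

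You claim that the $r=1$ hypothesis is equivalent to $L_{U^{\perp'}}$ containing a point of weight~$1$ (equivalently, $\C$ containing a codeword of weight $m-1$). That is the $r=k-1$ case, not $r=1$: see the remark immediately after Theorem~\ref{th:boundsubgeometrycodes}, where a codeword of weight $m-1$ in $\C$ is matched with $k-1$ codewords in $\C^{\perp_{\mathcal G}}$ giving $\dim_{\fq}(W)=1$. For $r=1$ one needs a \emph{single} codeword $c_1\in\C^{\perp_{\mathcal G}}$ with $\dim_{\fq}(W)=\dim_{\F_{q^m}}(\langle W\rangle_{\F_{q^m}})=k-1$; geometrically (second remark after Theorem~\ref{th:boundsubgeometrycodes}) this means a \emph{hyperplane} $H$ of $\PG(k-1,q^m)$ with $L_{U^{\perp'}}\cap H$ a canonical subgeometry $\PG(k-2,q)$. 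A weight-one point does not give that, and so your argument as written does not establish the bound \eqref{eq:boundmC2scattk2boundnew} for $r=1$ that you then claim is attained.

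The paper closes this gap by quoting results from \cite{SamPaolo} (Theorem~4.5, Proposition~4.6, Remark~4.7, Corollary~4.8) applied to $W=\langle 1,\lambda,\ldots,\lambda^{m-t_1-1}\rangle_{\fq}\times\cdots\times\langle 1,\lambda,\ldots,\lambda^{m-t_k-1}\rangle_{\fq}$: the two numerical hypotheses in the statement are precisely the sufficient conditions in \cite{SamPaolo} guaranteeing both the existence of such a hyperplane \emph{and} that $|L_W|=q^{km-n-1}+\cdots+q^{km-n-k+1}+1$. Your informal reading of those hypotheses as a ``count of weight-one points'' via Theorem~\ref{th:constructionVdV} is off target for two reasons: Theorem~\ref{th:constructionVdV} only covers $\PG(1,q^m)$, and in any case a weight-one point is not what is required. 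Once you replace the weight-one-point step by the correct hyperplane-meets-in-a-subgeometry step and invoke the appropriate results from \cite{SamPaolo}, the rest of your outline goes through exactly as in the paper.
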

\begin{proof}
As for the two dimensional case, a system associated with $\C_{\lambda,t_1,\ldots,t_k}$ is 
\[U=S_1\times\ldots\times S_k=\langle 1,\lambda,\ldots,\lambda^{t_1-1} \rangle_{\fq}\times \ldots \times \langle 1,\lambda,\ldots,\lambda^{t_k-1} \rangle_{\fq}. \]
Consider $U^{\perp'}$ as in the proof of Theorem \ref{th:parmsofpolcode}. Hence, we have that $U^{\perp'}$ is an $\fq$-subspace of dimension $km-n$ in $\F_{    ^m}^k$ which is $\mathrm{GL}(k,q^m)$-equivalent to $W$, where
\[W=\langle 1,\lambda,\ldots,\lambda^{m-t_1-1} \rangle_{\fq}\times \ldots \times \langle 1,\lambda,\ldots,\lambda^{m-t_k-1} \rangle_{\fq}. \]
By \cite[Theorem 4.5, Proposition 4.6, Remark 4.7]{SamPaolo}, there exists a hyperplane $H$ such that $L_W\cap H=\PG(k-2,q)$ and hence there exists a codeword in a code equivalent to $\C^{\perp_{\mathcal{G}}}$ as in the statement, and hence as well in $\C^{\perp_{\mathcal{G}}}$. 
\cite[Theorem 4.5]{SamPaolo} and, together with \cite[Proposition 4.6]{SamPaolo} and \cite[Corollary 4.8]{SamPaolo} also implies that
\[|L_W|=|L_{U^{\perp'}}|=q^{km-n-1}+\ldots+q^{km-n-k+1}+1,\]
which implies that, by \eqref{eq:boundmC2scattk2boundnew} with $r=1$, $M(\C_{\lambda,t_1,\ldots,t_k})$ is maximum.
\end{proof}

\begin{remark}
In particular, if $t_1\geq \ldots \geq t_k$ and 
\[k\leq \frac{q^{m-t_1-1}-2q^{\frac{m-t_1-1}2}}{m-t_k-1}\] 
the assumptions of Theorem \ref{th:boundsuperp} are satisfied.
\end{remark}

\begin{remark}
Making use of \cite[Proposition 4.9]{SamPaolo} and assuming that $t_0\leq \ldots \leq t_k$, one can also prove that replacing the assumption $km-n\leq m+k$ by $3m-t_0-t_{k-1}-t_k\leq m+2$, $\C_{\lambda,t_1,\ldots,t_k}$ satisfies the assumptions of Theorem \ref{th:boundsubgeometrycodes} with $r=k-1$ and $M(\C_{\lambda,t_1,\ldots,t_k})$ is the maximum among the codes with this property as well.
\end{remark}

When $m$ is prime and $k=2$, then we can give a characterization of the codes reaching the maximum for $M(\C)$ once we require that there are enough codewords of a certain weight, by making use of the results in \cite{napolitano2022classifications} in which a key role is playled by the linear analogue of the Cauchy-Davenport inequality and Vosper's Theorem, see \cite{bachoc2017analogue,bachoc2018revisiting}. 

\begin{theorem}\label{thm:classminsize}
Let $\C$ be a non-degenerate $[n,2]_{q^m/q}$ code and suppose that $m$ is prime.
Denote by $A_i$ the number of codewords in $\C$ having weight $i \in \{0,\ldots,\min\{m,n\}\}$.
Assume that one of the following holds:
\begin{itemize}
    \item $n\leq m$, there exist two not $\F_{q^m}$-proportional codewords in $\C$ of weight $r$ and $n-r$ (with $r \leq n-r$), respectively, and $A_{n-r}\geq (q^m-1)(q^{n-2r}+2)$;
    \item $n> m$, there exist two not $\F_{q^m}$-proportional codewords in $\C$ of weight $m+r-n$ and $m-r$ (with $r \leq n-r$), respectively, and $A_{m-r}\geq (q^m-1)(q^{2m-n-2r}+2)$.
\end{itemize}
Then $\C$ is equivalent to the code in Construction \ref{con:polcodek>2} and hence it reaches the maximum value for $M(\C)$ among the non-degenerate $[n,2]_{q^m/q}$ codes having at least one codeword of weight $\min\{m,n\}-1$.
\end{theorem}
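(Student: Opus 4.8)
The plan is to reduce the statement to the geometric/linear-set setting and then invoke the classification machinery of \cite{napolitano2022classifications}. Let $U$ be any system associated with $\C$, so that $L_U$ is an $\fq$-linear set of rank $n$ in $\PG(1,q^m)$ when $n\leq m$; when $n>m$ we pass to the dual linear set $L_{U^{\perp'}}$, which has rank $2m-n<m$ in $\PG(1,q^m)$, using Proposition \ref{prop:weightdual} and Remark \ref{rk:geogenrank} (the hypothesis on the second maximum weight already forces $d\geq n-m+1$, so $U^{\perp'}$ is a genuine system). In both cases the plan is to first translate the hypotheses about codewords of weights $r$ and $n-r$ (resp.\ $m+r-n$ and $m-r$) into the existence of two points $P_1,P_2\in L_U$ (resp.\ in $L_{U^{\perp'}}$) that are \emph{not} the same point of $\PG(1,q^m)$, with $w_{L_U}(P_1)=r$ and $w_{L_U}(P_2)=n-r$; here I would use \eqref{eq:relweight}, Theorem \ref{th:connection}, and the fact that non-$\F_{q^m}$-proportional codewords correspond, via $x\mapsto\langle x^\perp\rangle$, to distinct points of $\PG(1,q^m)$. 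By \eqref{eq:boundcomplementaryweight} these two weights sum to exactly $n$, so by Remark \ref{rk:maximumweighpointscomplementaryweights} the maximum point-weight in $L_U$ is exactly $\max\{r,n-r\}=n-r$.

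Next, the plan is to convert the hypothesis $A_{n-r}\geq (q^m-1)(q^{n-2r}+2)$ into a statement about the number of weight-$r$ points of $L_U$. A codeword of weight $n-r$ corresponds, again by \eqref{eq:relweight}, to a point of weight $r$ in $L_U$, and each such point accounts for exactly $q^m-1$ codewords; hence the bound says $L_U$ has at least $q^{n-2r}+2$ points of minimum weight $r$ (minimum because $n-r$ is the second maximum weight, so by the argument used in the proof of Theorem \ref{thm:boundn<=m2}/Theorem \ref{thm:2dim2}, $r=\min_{P\in L_U}w_{L_U}(P)$, and here one uses that $m$ is prime so that by \cite[Proposition 3.1]{polverino2022divisible}-type arguments or directly the structure is rigid). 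This is precisely the hypothesis under which the Cauchy--Davenport/Vosper-type classification of \cite{napolitano2022classifications} applies: having strictly more than $q^{n-2r}+1$ minimum-weight points in a rank-$n$ linear set of $\PG(1,q^m)$ with $m$ prime forces $U$ to be $\GL(2,q^m)$-equivalent to $\langle 1,\lambda,\ldots,\lambda^{t_1-1}\rangle_{\fq}\times\langle 1,\lambda,\ldots,\lambda^{t_2-1}\rangle_{\fq}$ with $t_1=r$, $t_2=n-r$, i.e.\ the system of Construction \ref{con:polcodek>2}. The $n>m$ case is handled identically after dualizing: $L_{U^{\perp'}}$ has rank $2m-n$, the two points have weights $n-m+r$ and $m-r$ summing to $2m-n$, the hypothesis gives $q^{2m-n-2r}+2$ minimum-weight points, and the same theorem from \cite{napolitano2022classifications} applies, then transport back via the fact (noted in the remark after Theorem \ref{th:parmsofpolcode}) that the geometric dual of Construction \ref{con:polcodek>2} is again of that form.

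Finally, once $\C$ is identified (up to equivalence) with $\C_{\lambda,t_1,t_2}$, the conclusion that it attains the maximum of $M(\C)$ among $[n,2]_{q^m/q}$ codes with a codeword of weight $\min\{m,n\}-1$ is exactly Theorem \ref{thm:polcodemax}: that theorem shows $\C_{\lambda,t_1,t_2}$ meets the upper bound \eqref{eq:boundmC2scatt1} (for $n\le m$) or \eqref{eq:boundmC2dual} (for $n>m$) of Theorem \ref{thm:boundsMC21}/Theorem \ref{thm:boundMC22}, and one only needs to check that the presence of weight-$r$ minimum-weight points in $L_U$ (with $q^{n-2r}-q^{n-2r-2}\ge 1$ of them, automatic here) yields a codeword of weight $\min\{m,n\}-1$, which is part of the statement of Theorem \ref{thm:polcodemax}. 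The main obstacle I anticipate is the bookkeeping in the reduction step: correctly matching "number of codewords of a given weight" with "number of linear-set points of the complementary weight" including the $(q^m-1)$ factor and the $+2$ versus $+1$ threshold, and verifying that the hypotheses are in the exact form required by the Vosper-type theorem of \cite{napolitano2022classifications} (in particular that $r$ is genuinely the minimum point-weight, which is where primality of $m$ is essential — otherwise divisibility phenomena could produce smaller-weight points). Everything else is an application of results already quoted.
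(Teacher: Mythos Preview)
Your proposal is correct and follows essentially the same route as the paper: translate the hypotheses on codeword weights into point-weight data for $L_U$ (or $L_{U^{\perp'}}$ when $n>m$) via \eqref{eq:relweight} and Proposition \ref{prop:weightdual}, invoke \cite[Theorem 3.12]{napolitano2022classifications}, and then conclude via Theorem \ref{thm:polcodemax}. The only minor divergences are that the paper does not argue that $r$ is the minimum point-weight (primality of $m$ enters solely as a hypothesis of the cited Vosper-type theorem, not via any divisibility argument), and in the $n>m$ case it transports back from $U^{\perp'}$ to $U$ explicitly through trace-duals and \cite[Corollary 2.7]{napolitano2022linear} rather than by citing the closure remark after Theorem \ref{th:parmsofpolcode}; both differences are cosmetic.
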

\begin{proof}
Let $U$ be any system associated with $\C$.
Suppose that $n\leq m$, then by \eqref{eq:relweight} there exists two distinct points $P$ and $Q$ in $\PG(1,q^m)$ such that $w_{L_U}(P)=n-r$ and $w_{L_U}(Q)=r$. Moreover, by the assumption on $A_{n-r}$ and \eqref{eq:relweight}, it follows that the number of points in $L_U$ with weight $r$ is at least
\[ q^{n-2r}+2. \]
Therefore, we can now apply \cite[Theorem 3.12]{napolitano2022classifications} and we have that $U$ is $\mathrm{GL}(2,q^m)$-equivalent to 
\[ \langle 1,\lambda, \ldots, \lambda^{n-r-1} \rangle_{\fq}\times \langle 1,\lambda, \ldots, \lambda^{r-1} \rangle_{\fq}, \]
for some $\lambda \in \F_{q^m}\setminus\fq$, and hence $\C$ is equivalent to $\mathcal{C}_{\lambda,n-r,r}$.
Assume that $n>m$ and consider $U^{\perp'}$.
Because of the assumptions, \eqref{eq:relweight} and Proposition \ref{prop:weightdual}, there exists two distinct points $P$ and $Q$ in $\PG(1,q^m)$ such that $w_{L_{U^{\perp'}}}(P)=n-r$ and $w_{L_{U^{\perp'}}}(Q)=r$, the number of points in $L_{U^{\perp'}}$ with weight $r$ is at least
\[ q^{2m-n-2r}+2. \]
Since the rank of $L_{U^{\perp'}}$ is $2m-n$ we can apply \cite[Theorem 3.12]{napolitano2022classifications} obtaining that $U^{\perp'}$ is $\mathrm{GL}(2,q^m)$-equivalent to 
\[ \langle 1,\lambda, \ldots, \lambda^{n-r-1} \rangle_{\fq}\times \langle 1,\lambda, \ldots, \lambda^{r-1} \rangle_{\fq}, \]
for some $\lambda \in \F_{q^m}\setminus\fq$.
Then $U$ is $\mathrm{GL}(2,q^m)$-equivalent to 
$S\times T$ where
\[ S=\{a \in \F_{q^m} \colon \mathrm{Tr}_{q^m/q}(ab)=0, \forall b \in \langle 1,\lambda, \ldots, \lambda^{n-r-1} \rangle_{\fq}\} \]
and
\[ T=\{a \in \F_{q^m} \colon \mathrm{Tr}_{q^m/q}(ab)=0, \forall b \in \langle 1,\lambda, \ldots, \lambda^{r-1} \rangle_{\fq}\}. \]
By \cite[Corollary 2.7]{napolitano2022linear}, $S\times T$ is $\mathrm{GL}(2,q^m)$-equivalent to
\[ \langle 1,\lambda, \ldots, \lambda^{m-n+r-1} \rangle_{\fq} \times \langle 1,\lambda, \ldots, \lambda^{m-r-1} \rangle_{\fq}. \]
Hence, by Theorem \ref{thm:1-1correspondence} $\C$ is equivalent to $\C_{\lambda,m-n+r,m-r}$.\\
The last part follows by Theorem \ref{thm:polcodemax}.
\end{proof}

When $m$ is not a prime, there are also other non-equivalent examples of codes reaching the maximum value for $M(\C)$ with respect to the upper bound in Theorem \ref{th:boundsubgeometrycodes}.

\begin{construction}\label{con:polcodelifted}
Assume $m=\ell' t$.
Let $\mu\in \F_{q^m}$ such that $\F_{q^t}=\fq(\mu)$, $\overline{S}$ be an $\F_{q^t}$-subspace of $\F_{q^m}$ of dimension $\ell<\ell'$ such that $1 \notin \overline{S}$ 
 and $\overline{S}\cap \F_{q^t}=\{0\}$. 
Let $t_1,\ldots,t_k$ positive integers such that $t_i+t_j \leq t+1$, for each $i \neq j$. For $c_1,\ldots,c_{\ell t}$ an $\fq$-basis of $\overline{S}$, let consider
\[ G=
\left(
\begin{array}{llllllllllllllll}
 c_1 & \ldots & & c_{\ell t} & 1 & \mu & \ldots & \mu^{t_1-1} &0 & \cdots  & & &  &  &  & 0 \\
0 & \ldots & & & \ldots & & \ldots & 0 & 1 & \mu & \ldots & \mu^{t_2-1} & 0  & \ldots & & 0\\
\vdots &  &  & & & &  &  &  & & &  &  \ddots &  & & \\
0 & \ldots &  &  & & & &  &  & &  & & 0  & 1 & \ldots &  \mu^{t_{k}-1}
\end{array}
\right)
\in \F_{q^m}^{k\times n},\]
with $n=\ell t +t_1+\ldots+t_{k}$.
Define $\C_{\overline{S},\mu,t_1,\ldots,t_{k}}$ the $\F_{q^m}$-linear rank metric code in $\mathbb{F}_{q^m}^{n}$ having $G$ as a generator matrix. 
\end{construction}

The parameters of the above construction are the following.

\begin{theorem}\label{th:parmsofliftpolcode}
Let $\C_{\overline{S},\mu,t_1,\ldots,t_k}\subseteq \F_{q^m}^n$ be as in Construction \ref{con:polcodelifted}. 
Then $\C_{\overline{S},\mu,t_1,\ldots,t_k}$ is a non-degenerate $[n,k,t_j]_{q^m/q}$ code, where $t_j=\min\{t_i \colon i>1\}$ and  $\C_{\overline{S},\mu,t_1,\ldots,t_k}^{\perp_{\mathcal{G}}}$ is a non-degenerate $[km-n,k,t(\ell'-\ell)-t_1]_{q^m/q}$ code.

In the case that $k=2$ and $n \leq m$, then the second row of $G$ in Construction \ref{con:polcodelifted} and its non-zero $\F_{q^m}$-proportional vectors are exactly all the codewords of weight $t_2$. And if $t_1 \geq t_2$ then
\begin{itemize}
    \item the number of codewords in $\C_{\overline{S},\mu,t_1,\ldots,t_k}$ of weight $n-t_2$ is $q^{\ell t+t_1-t_2+1}(q^m-1)$;
    \item the number of codewords of weight $n-i$ in $\C_{\overline{S},\mu,t_1,\ldots,t_k}$ is $(q^m-1)(q^{n-2i+1}-q^{n-2i-1})$, for any $i \in \{1,\ldots,t_2-1\}$.
\end{itemize}
If $t_1 < t_2$ then
\begin{itemize}
    \item the number of codewords in $\C_{\overline{S},\mu,t_1,\ldots,t_k}$ of weight $\ell t +t_1$ is $q^{\ell t}(q^m-1)$;
    \item the number of codewords in $\C_{\overline{S},\mu,t_1,\ldots,t_k}$ of weight $\ell t +t_2$ is $(q^{\ell t+t_2-t_1+1}-q^{\ell t})(q^m-1)$;
    \item the number of codewords of weight $n-i$ in $\C_{\overline{S},\mu,t_1,\ldots,t_k}$ is $(q^m-1)(q^{n-2i+1}-q^{n-2i-1})$, for any $i \in \{1,\ldots,t_1-1\}$.
\end{itemize}
\end{theorem}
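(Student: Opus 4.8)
The plan is to mimic the strategy of the proof of Theorem \ref{th:parmsofpolcode}: exhibit a convenient system $U$ associated with $\C_{\overline{S},\mu,t_1,\ldots,t_k}$, pass to its dual $U^{\perp'}$ with respect to the standard form $\sigma$ (and $\sigma'=\mathrm{Tr}_{q^m/q}\circ\sigma$), and read off the minimum distance and — in the case $k=2$, $n\le m$ — the whole weight distribution from known results on the associated linear sets. Concretely, I would first observe that the $\fq$-span of the columns of $G$ is
\[
U=\overline{S}\times\langle 1,\mu,\ldots,\mu^{t_1-1}\rangle_{\fq}\times\langle 1,\mu,\ldots,\mu^{t_2-1}\rangle_{\fq}\times\cdots\times\langle 1,\mu,\ldots,\mu^{t_k-1}\rangle_{\fq},
\]
which has $\fq$-dimension $n=\ell t+t_1+\cdots+t_k$, and check that $\langle U\rangle_{\F_{q^m}}=\F_{q^m}^k$, i.e.\ that $\C_{\overline{S},\mu,t_1,\ldots,t_k}$ is non-degenerate: this uses $1\notin\overline S$ together with $\overline S\cap\F_{q^t}=\{0\}$, so that the first block $\overline S+\langle 1\rangle_{\fq}$ already has full rank in the first coordinate. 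Then, exactly as in Theorem \ref{th:parmsofpolcode}, $U^{\perp'}=\overline{S}^{\perp'}\times\overline{S_1}\times\cdots\times\overline{S_k}$ where $\overline{S_i}$ is the trace-dual of $\langle 1,\mu,\ldots,\mu^{t_i-1}\rangle_{\fq}$; by \cite[Proposition 2.9]{napolitano2022classifications} each $\overline{S_i}$ is $\GL$-equivalent to $\langle 1,\mu,\ldots,\mu^{t-t_i-1}\rangle_{\fq}$ (note $\overline{S_i}$ lives in the $\fq$-space $\F_{q^t}$ since $S_i\subseteq\F_{q^t}$), and $\overline S^{\perp'}$ is an $\fq$-subspace of $\F_{q^m}$ of dimension $t(\ell'-\ell)$. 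This identifies $U^{\perp'}$, up to $\GL(k,q^m)$-equivalence, with a subspace $W$ of the same ``polynomial'' type, of $\fq$-dimension $km-n=t(\ell'-\ell)+(t-t_1)+\cdots+(t-t_k)$, whose linear set $L_W$ one can analyse.

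For the minimum distance of $\C$ itself I would combine Proposition \ref{prop:weightdual} with \eqref{eq:distancedesign}: $d(\C)=n-\max_H w_{L_U}(H)$ and $w_{L_U}(H)=w_{L_{U^{\perp'}}}(H^\tau)+km-n-(k-1)m$, so $d(\C)=m-\max_{P}w_{L_{U^{\perp'}}}(P)$. The maximum point-weight of $L_{U^{\perp'}}=L_W$ is then read off from the block structure of $W$: since the blocks involving $\mu$ have dimensions $t-t_i\le t$ and, by the hypothesis $t_i+t_j\le t+1$, the blocks $t-t_i$ satisfy the complementary-weight bound \eqref{eq:boundcomplementaryweight} sharply on the natural coordinate points, Remark \ref{rk:maximumweighpointscomplementaryweights} gives $\max_P w_{L_W}(P)=\max\{t-t_j\colon j>1\}=t-\min\{t_i\colon i>1\}$ (the first two blocks $\overline S^{\perp'}$ and the $t-t_1$ block together cannot beat this because $(t(\ell'-\ell))+(t-t_1)$ hits a coordinate hyperplane, not a point, and dually caps the point weight). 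Hence $d(\C)=m-(t-t_j)=\,$... wait — I must instead argue on $L_U$ directly for $d$; the cleanest route is: the minimum-distance codeword corresponds to a hyperplane $H$ of maximal $w_{L_U}(H)$, and by the same duality $\max_H w_{L_U}(H)=n-t_j$, so $d(\C)=t_j$ with $t_j=\min\{t_i\colon i>1\}$, while the dual code's minimum distance comes from the minimum nonzero point-weight considerations applied to $U^{\perp'}$, giving $t(\ell'-\ell)-t_1$.

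For the last, most detailed part ($k=2$, $n\le m$), I would specialise: $U=\overline S\times\langle 1,\mu,\ldots,\mu^{t_2-1}\rangle_{\fq}$, and since $\overline S$ is an $\F_{q^t}$-subspace while the second block is ``polynomial'', $L_U$ is a ``lifted'' linear set to which the weight-distribution statements of \cite[Theorem 2.7]{jena2021linear} (Theorem \ref{th:constructionVdV}) apply after factoring out the $\F_{q^t}$-structure — the $\F_{q^t}$-subspace $\overline S$ in the first coordinate contributes a point of high weight $\ell t+t_1$ or $\ell t+t_2$ (according to whether $t_1<t_2$ or $t_1\ge t_2$) whose multiplicity as codewords is $q^{\ell t}(q^m-1)$, and the remaining weight classes are governed exactly as in Theorem \ref{th:constructionVdV}, yielding the stated counts $q^{\ell t+t_1-t_2+1}(q^m-1)$, $(q^{\ell t+t_2-t_1+1}-q^{\ell t})(q^m-1)$, and $(q^m-1)(q^{n-2i+1}-q^{n-2i-1})$ via \eqref{eq:relweight} and Proposition \ref{prop:points-codewords2}. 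The main obstacle I anticipate is precisely this last computation: one must verify that the interaction between the $\F_{q^t}$-linear block $\overline S$ and the $\mu$-blocks does not create unexpected points of intermediate weight, i.e.\ that the weight spectrum of $L_U$ is exactly the ``concatenation'' predicted by Theorem \ref{th:constructionVdV} after the $\F_{q^t}$-lift; this is where the hypotheses $1\notin\overline S$, $\overline S\cap\F_{q^t}=\{0\}$ and $t_i+t_j\le t+1$ are all needed, and checking it carefully (probably via the duality argument and \cite[Theorem 2.17]{jena2021linear} as in the remark following Theorem \ref{th:parmsofpolcode}) is the technical heart of the proof.
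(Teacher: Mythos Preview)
Your overall strategy (pass to the dual $U^{\perp'}$ via $\sigma'=\mathrm{Tr}_{q^m/q}\circ\sigma$, read off $d$ from Proposition~\ref{prop:weightdual} and Remark~\ref{rk:maximumweighpointscomplementaryweights}) is exactly the paper's, but you have misread the construction and this propagates through the whole argument. The matrix $G$ in Construction~\ref{con:polcodelifted} has $k$ rows, so the associated system lives in $\F_{q^m}^k$, not $\F_{q^m}^{k+1}$: the block $c_1,\ldots,c_{\ell t},1,\mu,\ldots,\mu^{t_1-1}$ is entirely in the \emph{first} row, hence the first coordinate of $U$ is the \emph{internal} direct sum $U_1=\overline S\oplus\langle 1,\mu,\ldots,\mu^{t_1-1}\rangle_{\fq}\subseteq\F_{q^m}$, and
\[
U=(\overline S\oplus\langle 1,\mu,\ldots,\mu^{t_1-1}\rangle_{\fq})\times\langle 1,\mu,\ldots,\mu^{t_2-1}\rangle_{\fq}\times\cdots\times\langle 1,\mu,\ldots,\mu^{t_k-1}\rangle_{\fq}.
\]
Consequently $U^{\perp'}=\overline{U}_1\times\cdots\times\overline{U}_k$ with $\dim_{\fq}\overline{U}_1=m-\ell t-t_1$ and $\dim_{\fq}\overline{U}_i=m-t_i$ for $i>1$ (trace-duals are taken in $\F_{q^m}$, so the dimensions are $m-t_i$, not $t-t_i$; your formula $km-n=t(\ell'-\ell)+\sum(t-t_i)$ is false in general). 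With the correct $U^{\perp'}$ the $k$ coordinate points have weights summing to $km-n$, Remark~\ref{rk:maximumweighpointscomplementaryweights} applies directly, $\max_P w_{L_{U^{\perp'}}}(P)=m-t_j$ with $t_j=\min\{t_i:i>1\}$, and $d(\C)=t_j$; dually $\max_P w_{L_U}(P)=\ell t+t_1$ gives $d(\C^{\perp_{\mathcal G}})=m-\ell t-t_1=t(\ell'-\ell)-t_1$.

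For the $k=2$ weight distribution your specialisation $U=\overline S\times\langle 1,\mu,\ldots,\mu^{t_2-1}\rangle_{\fq}$ drops the $t_1$-block and is wrong; the correct $U$ has first coordinate $\overline S\oplus\langle 1,\mu,\ldots,\mu^{t_1-1}\rangle_{\fq}$, which is \emph{not} of the shape covered by Theorem~\ref{th:constructionVdV}. The paper does not ``factor out the $\F_{q^t}$-structure'' from Jena--Van de Voorde but invokes \cite[Corollary~4.2]{napolitano2022classifications}, which is precisely the result computing the point-weight distribution of such lifted linear sets (and is where the hypotheses $\overline S\cap\F_{q^t}=\{0\}$ and $t_i+t_j\le t+1$ are used). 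Once you quote that corollary, the translation to codeword weights via \eqref{eq:relweight} is immediate.
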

\begin{proof}
A system associated with $\C_{\overline{S},\mu,t_1,\ldots,t_k}$ is 
\[U=(\overline{S} \oplus \langle 1,\mu,\ldots,\mu^{t_1-1} \rangle_{\fq} ) \times \ldots \times  \langle 1,\mu,\ldots,\mu^{t_k-1} \rangle_{\fq}. \]
Let $U_1=\overline{S} \oplus \langle 1,\mu,\ldots,\mu^{t_1-1} \rangle_{\fq}$ and $U_i=\langle 1,\mu,\ldots,\mu^{t_i-1} \rangle_{\fq}$, for $i>1$.
As in Theorem \ref{th:parmsofpolcode}, consider 
\[ \sigma \colon ((u_1,\ldots,u_k),(v_1,\ldots,v_k)) \in (\mathbb{F}_{q^m}^k)^2 \mapsto u_1v_1+\ldots+u_kv_k \in \F_{q^m}, \]
and so
\[ \sigma' \colon ((u_1,\ldots,u_k),(v_1,\ldots,v_k)) \in (\mathbb{F}_{q^m}^k)^2 \mapsto \mathrm{Tr}_{q^m/q}(u_1v_1+\ldots+u_kv_k) \in \F_{q}. \]
Denote by 
\[\overline{U}_i= \{ a \in \F_{q^m} \colon \mathrm{Tr}_{q^m/q}(ab)=0,\,\,\forall b \in U_i \}, \]
for any $i \in \{1,\ldots,k\}$.
Clearly $\dim_{\F_q}(\overline{U}_1)=m-\ell t-t_1$, $\dim_{\F_q}(\overline{U}_i)=m-t_i$, for each $i>1$. Moreover, it is easy to see that $U^{\perp'}=\overline{U}_1 \times \ldots\times \overline{U}_k$.
Hence, by Proposition \ref{prop:weightdual}, we get
\[ \max\{w_{L_{U}}(H) \colon H=\PG(k-2,q^m)\subset\PG(k-1,q^m)\}=\]\[n-m+\max\{w_{L_{U^{\perp'}}}(P) \colon P \in \PG(k-1,q^m)\}=n-t_j, \]
where $t_j=\min\{t_i \colon i>1\}$.
So, $d=n-(n-t_j)=t_j$. This implies that  $\C_{\overline{S},\mu,t_1,\ldots,t_k}$ is a non-degenerate $[n,k,t_j]_{q^m/q}$ code. Now, let consider $\C_{\overline{S},\mu,t_1,\ldots,t_k}^{\perp_{\mathcal{G}}}$. A system associated with $\C_{\overline{S},\mu,t_1,\ldots,t_k}^{\perp_{\mathcal{G}}}$ is $U^{\perp'}$, and so it is a $[km-n,k]_{q^m/q}$ code.
Moreover, again by Proposition \ref{prop:weightdual}, we have
\[ \max\{w_{L_{U^{\perp'}}}(H) \colon H=\PG(k-2,q^m)\subset\PG(k-1,q^m)\}\]
\[=(k-1)m-n+\max\{w_{L_U}(P) \colon P \in \PG(k-1,q^m)\}=
(k-1)m-n+\ell t+t_1, \]
implying that $d(\C_{\overline{S},\mu,t_1,\ldots,t_k})=t(\ell'-\ell)-t_1$.
When $k=2$ and $n\leq m$, by Theorem \ref{th:connection} the weight distribution of the code can be determined by using the weight distribution of the linear set defined by $U$, that is completely determined in \cite[Corollary 4.2]{napolitano2022classifications}. 
\end{proof}

We show that the above described construction still yields a code with the maximum possible value for $M(\C)$.

\begin{proposition}
Let $\C_{\overline{S},\mu,t_1,t_2}\subseteq \F_{q^m}^n$ be as in Construction \ref{con:polcodelifted} and assume that $n \leq m$. Then
$\C_{\overline{S},\mu,t_1,t_2}$ has a codeword of weight $n-1$ and $\C_{\overline{S},\mu,t_1,t_2}$ reaches the maximum value for $M(\C)$ among the non-degenerate $[n,2]_{q^m/q}$ codes having at least one codeword of weight $n-1$.
\end{proposition}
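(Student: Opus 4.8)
The plan is to mimic the argument of Theorem~\ref{thm:polcodemax} for the two-dimensional case, specializing Construction~\ref{con:polcodelifted} to $k=2$ and using the weight distribution already recorded in Theorem~\ref{th:parmsofliftpolcode}. First I would take the system $U = U_1\times U_2 = (\overline S \oplus \langle 1,\mu,\ldots,\mu^{t_1-1}\rangle_{\fq})\times \langle 1,\mu,\ldots,\mu^{t_2-1}\rangle_{\fq}$ associated with $\C_{\overline S,\mu,t_1,t_2}$, which has rank $n = \ell t + t_1 + t_2 \le m$ by hypothesis. The key input is the explicit weight distribution of $L_U$ computed in \cite[Corollary 4.2]{napolitano2022classifications}, quoted in Theorem~\ref{th:parmsofliftpolcode}: in both subcases ($t_1\ge t_2$ and $t_1<t_2$) there are $(q^m-1)(q^{n-3}-q^{n-5})\cdot$-many\,/\,-proportionally-many codewords of weight $n-1$, i.e. $L_U$ has points of weight one. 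Concretely the item ``$q^{n-2i+1}-q^{n-2i-1}$ codewords of weight $n-i$ for $i\in\{1,\ldots\}$'' with $i=1$ gives $N_1 = q^{n-1}-q^{n-3}\ge 1$ since $n\ge 3$; so $\C_{\overline S,\mu,t_1,t_2}$ does contain a codeword of weight $n-1$, proving the first assertion.

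Next I would compute $M(\C_{\overline S,\mu,t_1,t_2})$ exactly. By Proposition~\ref{prop:points-codewords2}, $M(\C)=(q^m-1)|\PG(1,q^m)\setminus L_U|$, so it suffices to know $|L_U|$. The construction is designed so that $L_U$ is a linear set with the minimum number of points: indeed $U$ is (up to $\GL(2,q^m)$-equivalence) one of the subspaces classified in \cite{napolitano2022classifications}, and by \cite[Corollary 4.2]{napolitano2022classifications} (equivalently, summing the $N_i$ from Theorem~\ref{th:parmsofliftpolcode} via \eqref{eq:pesicard}) one gets $|L_U| = q^{n-1}+1$. Plugging into Proposition~\ref{prop:points-codewords2},
\[
M(\C_{\overline S,\mu,t_1,t_2}) = q^{2m}-1-(q^m-1)(q^{n-1}+1).
\]
This is exactly the right-hand side of \eqref{eq:boundmC2scatt1} in Theorem~\ref{thm:boundsMC21}, the upper bound on $M(\C)$ valid for any non-degenerate $[n,2]_{q^m/q}$ code containing a codeword of weight $n-1$. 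Hence $\C_{\overline S,\mu,t_1,t_2}$ attains that bound, which is the claimed maximality.

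The only real obstacle is verifying $|L_U| = q^{n-1}+1$, i.e. that $L_U$ has the minimum size of Theorem~\ref{th:minsize}. Rather than recomputing it, I would cite \cite[Corollary 4.2]{napolitano2022classifications} directly (the same reference already invoked in the proof of Theorem~\ref{th:parmsofliftpolcode} for $k=2$, $n\le m$), or alternatively read it off from the weight-distribution list already in the statement of Theorem~\ref{th:parmsofliftpolcode} together with the identity \eqref{eq:pesicard}: in the case $t_1\ge t_2$ one sums the single orbit of weight-$t_2$ points, the $q^{\ell t+t_1-t_2+1}$ points arising from the weight-$(n-t_2)$ codewords, and the $q^{n-2i+1}-q^{n-2i-1}$ points of weight $i$ for $i=1,\ldots,t_2-1$, and telescoping gives $q^{n-1}+1$; the case $t_1<t_2$ is analogous. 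Once $|L_U|=q^{n-1}+1$ is in hand, the rest is the one-line substitution above, so the proof is short. I would close with a remark that the case $n>m$ can be handled identically by passing to the dual linear set $L_{U^{\perp'}}$ and using \eqref{eq:M(C)dual} and Theorem~\ref{thm:boundMC22}, exactly as in the second half of the proof of Theorem~\ref{thm:polcodemax}, but this is not needed for the stated proposition.
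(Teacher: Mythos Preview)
Your proposal is correct and follows essentially the same route as the paper: take the associated system $U$, invoke \cite[Corollary~4.2]{napolitano2022classifications} to get $|L_U|=q^{n-1}+1$ together with the existence of weight-one points, then apply Proposition~\ref{prop:points-codewords2} to hit the upper bound \eqref{eq:boundmC2scatt1}. The paper's proof is precisely this, just stated more tersely; your additional remarks (the telescoping sum to recover $|L_U|$ from Theorem~\ref{th:parmsofliftpolcode}, and the $n>m$ aside) are extra commentary rather than a different argument.
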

\begin{proof}
Let consider the following system associated with $\C_{\overline{S},\mu,t_1,t_2}$
\[U=(\overline{S}\oplus \langle 1,\mu,\ldots,\mu^{t_1-1}\rangle_{\fq})\times \langle 1,\mu,\ldots,\mu^{t_2-1}\rangle_{\fq}. \]
Note that $L_U$ is an $\fq$-linear set of rank $n$ of the form of \cite[Corollary 4.2]{napolitano2022classifications}, which implies that $|L_U|=q^{n-1}+1$ and $L_U$ has $q^{n-1}-q^{n-3}\geq 1$ points of weight one. By \eqref{eq:relweight}, the latter fact reads as $\C_{\overline{S},\mu,t_1,t_2}$ has at least a codeword of weight $n-1$ and by Proposition \ref{prop:points-codewords2}, we have the equality in \eqref{eq:boundmC2scatt1}.
\end{proof}

\begin{remark}
Let $U$ and $U'$ be two systems associated with Construction \ref{con:polcodek>2} and Construction \ref{con:polcodelifted}, respectively. By \cite[Theorem 5.1]{napolitano2022classifications}, $U$ and $U'$ are $\Gamma \mathrm{L}(2,q^m)$-inequivalent and hence by Theorem \ref{thm:1-1correspondence} the codes $\C$ and $\C'$ are inequivalent.
\end{remark}

We now show more examples of rank metric codes satisfying the equality in the upper bound in \ref{th:boundsubgeometrycodes}.

\begin{theorem}
Let $\C_{\overline{S},\mu,t_1,\ldots,t_k}\subseteq \F_{q^m}^n$ be as in Construction \ref{con:polcodelifted} and assume that $t_1+\ldots+t_k \leq t+k$. 
Let $\{t_1-1,\ldots,t_k-1\}=\{s_{i_1},\ldots,s_{i_{\ell}}\}$, with $s_{i_1}>\ldots > s_{i_{\ell}}$. If either
\[
k \leq \sum_{j=1}^{\ell} \frac{q^{s_{i_j}}-2q^{s_{i_j}/2}}{s_{i_j}}
\]
or
\[ t_1+\ldots+t_k-k \leq q, \]
then the code $\C^{\perp_{\mathcal{G}}}_{\overline{S},\mu,t_1,\ldots,t_k}\subseteq \F_{q^m}^{km-n}$ satisfies the assumptions of Theorem \ref{th:boundsubgeometrycodes} with $r=1$ and $\C^{\perp_{\mathcal{G}}}_{\overline{S},\mu,t_1,\ldots,t_k}$  reaches the maximum value for $M(\C)$ among the codes satisfying the assumption of Theorem \ref{th:boundsubgeometrycodes} with $r=1$.
\end{theorem}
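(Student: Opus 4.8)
The strategy mirrors the proof of Theorem~\ref{th:boundsuperp}, which treated $\C_{\lambda,t_1,\ldots,t_k}$; here we run the same argument for the ``lifted'' family of Construction~\ref{con:polcodelifted}, but applied to its geometric dual. First I would write down a system associated with $\C_{\overline{S},\mu,t_1,\ldots,t_k}$, namely
\[U=(\overline{S} \oplus \langle 1,\mu,\ldots,\mu^{t_1-1} \rangle_{\fq}) \times \langle 1,\mu,\ldots,\mu^{t_2-1} \rangle_{\fq}\times\ldots\times \langle 1,\mu,\ldots,\mu^{t_k-1} \rangle_{\fq},\]
so that $U^{\perp'}$ is a system associated with $\C^{\perp_{\mathcal{G}}}_{\overline{S},\mu,t_1,\ldots,t_k}$. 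As computed in the proof of Theorem~\ref{th:parmsofliftpolcode}, $U^{\perp'}=\overline{U}_1\times\ldots\times\overline{U}_k$, where $\dim_{\fq}(\overline U_1)=m-\ell t-t_1$ and $\dim_{\fq}(\overline U_i)=m-t_i$ for $i>1$; by the description there (via \cite{napolitano2022linear,napolitano2022classifications}) each $\overline U_i$ is $\GL$-equivalent to a subspace of the ``polynomial'' shape $\langle 1,\mu,\ldots,\mu^{m-t_i-1}\rangle_{\fq}$ (up to a scalar and the $\overline S$-summand), so $U^{\perp'}$ is $\GL(k,q^m)$-equivalent to exactly the kind of direct-product system to which the results of \cite[Section~4]{SamPaolo} apply.

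The second step is to feed this into the machinery of \cite[Theorem 4.5, Proposition 4.6, Remark 4.7, Corollary 4.8]{SamPaolo} exactly as in Theorem~\ref{th:boundsuperp}. The numerical hypothesis $t_1+\ldots+t_k\le t+k$ here plays the role that $km-n\le m+k$ played there: it is the condition that guarantees the relevant sum of the $t_i$'s (which now are the ``small'' exponents governing $\C^{\perp_{\mathcal G}}$, since $\C^{\perp_{\mathcal G}}$ has length $km-n$ and the $t_i$ reappear as the complementary dimensions) is small enough that \cite[Theorem 4.5]{SamPaolo} forces the linear set $L_{(U^{\perp'})^{\perp'}}=L_U$ (or rather the appropriate one among $U$, $U^{\perp'}$) to contain a hyperplane section equal to a canonical subgeometry $\PG(k-2,q)$, and that its size is the minimum value $q^{(\cdot)}+\ldots+q^{(\cdot)}+1$ appearing in Theorem~\ref{thm:newBound} with $r=1$. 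The two alternative sufficient conditions ---$k\le\sum_j (q^{s_{i_j}}-2q^{s_{i_j}/2})/s_{i_j}$ or $t_1+\ldots+t_k-k\le q$--- are precisely the two hypotheses under which \cite[Theorem 4.5, Corollary 4.8]{SamPaolo} (a Vosper/Cauchy--Davenport-type input, cf. \cite{bachoc2017analogue,bachoc2018revisiting}) yields that conclusion; I would just invoke them, after checking the translation of parameters.

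Concretely: having established that a system associated with $\C^{\perp_{\mathcal G}}_{\overline S,\mu,t_1,\ldots,t_k}$ admits a hyperplane meeting it in a canonical subgeometry $\PG(k-2,q)$, Theorem~\ref{th:boundsubgeometrycodes} with $r=1$ applies to $\C^{\perp_{\mathcal G}}_{\overline S,\mu,t_1,\ldots,t_k}$ and gives the upper bound \eqref{eq:boundmC2scattk2boundnew}; and the size computation from \cite[Theorem 4.5, Proposition 4.6, Corollary 4.8]{SamPaolo}, that the associated linear set has exactly $q^{km-n-1}+\ldots+q^{km-n-k+1}+1$ points (for the code of length $km-n$, this is the minimum allowed), shows via Proposition~\ref{prop:points-codewordsk2} that equality holds in \eqref{eq:boundmC2scattk2boundnew}. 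Hence $M(\C^{\perp_{\mathcal G}}_{\overline S,\mu,t_1,\ldots,t_k})$ is the maximum among codes satisfying the hypotheses of Theorem~\ref{th:boundsubgeometrycodes} with $r=1$, which is the assertion.

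**Main obstacle.** The one genuinely delicate point is the bookkeeping of parameters under the combination of the two operations ``take geometric dual'' and ``take $\perp'$ of each factor'': I must be careful that the exponents $t_1-1,\ldots,t_k-1$ appearing in the statement's hypotheses really are the ones controlling the minimum-size behaviour of the linear set attached to $\C^{\perp_{\mathcal G}}$ (equivalently, that the hypothesis $t_1+\ldots+t_k\le t+k$ is the correct avatar of $km-n\le m+k$ after all the dualizations, given that $\C^{\perp_{\mathcal G}}$ has length $km-n=m+\sum_i(t-t_i)+(\ell'-\ell-1)t+\cdots$). The rest is a faithful replay of the proof of Theorem~\ref{th:boundsuperp} with \cite{napolitano2022classifications} substituted for the ``polynomial'' computations.
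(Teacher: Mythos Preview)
Your overall strategy is right in spirit, but the execution contains a real confusion that the paper's proof avoids entirely. You set $U$ to be the system of $\C_{\overline{S},\mu,t_1,\ldots,t_k}$, then compute $U^{\perp'}$ as a system for $\C^{\perp_{\mathcal G}}_{\overline{S},\mu,t_1,\ldots,t_k}$, and then try to argue that $U^{\perp'}$ is again $\GL(k,q^m)$-equivalent to a ``polynomial-shape'' direct product so that \cite[Theorem 4.5, Proposition 4.6, Corollary 4.8]{SamPaolo} apply to it. That step is unjustified: while each $\overline U_i$ for $i>1$ is indeed (up to scalar) of the form $\langle 1,\mu,\ldots,\mu^{m-t_i-1}\rangle_{\fq}$, the first factor $\overline U_1$ is the trace-orthogonal of $\overline{S}\oplus\langle 1,\mu,\ldots,\mu^{t_1-1}\rangle_{\fq}$ and has no reason to be of polynomial or lifted-polynomial type. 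So the results you cite do not apply to $U^{\perp'}$.

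The paper's argument sidesteps this completely. Since the code under study is $\C^{\perp_{\mathcal G}}_{\overline{S},\mu,t_1,\ldots,t_k}$, the linear set appearing in Theorem~\ref{th:boundsubgeometrycodes} and Proposition~\ref{prop:points-codewordsk2} is the dual of \emph{its} system, i.e.\ $(U^{\perp'})^{\perp'}=U$ itself. Hence one works directly with $L_U$, which \emph{is} of the lifted-polynomial shape by construction; no description of $U^{\perp'}$ is needed at all. The hypothesis $t_1+\ldots+t_k\le t+k$ and the two alternative inequalities are then exactly the hypotheses of the lifted versions \cite[Theorem 4.14, Corollary 4.16]{SamPaolo} (not \cite[Theorem 4.5, Corollary 4.8]{SamPaolo}, which treat the non-lifted case used in Theorem~\ref{th:boundsuperp}). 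These give both the existence of a hyperplane $H$ with $L_U\cap H=\PG(k-2,q)$ and the exact size $|L_U|=q^{n-1}+\ldots+q^{n-k+1}+1$, from which the conclusion follows via Proposition~\ref{prop:points-codewordsk2}. You half-recognise this when you write ``$L_{(U^{\perp'})^{\perp'}}=L_U$ (or rather the appropriate one among $U$, $U^{\perp'}$)'', but the bookkeeping you flag as the main obstacle dissolves once you see that the double dual brings you straight back to $U$.
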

\begin{proof}
Note that $(\C^{\perp_{\mathcal{G}}}_{\overline{S},\mu,t_1,\ldots,t_k})^{\perp_{\mathcal{G}}}$ is equivalent to the code $\C_{\overline{S},\mu,t_1,\ldots,t_k}$ and an associated system is 
\[U=(\overline{S} \oplus \langle 1,\mu,\ldots,\mu^{t_1-1} \rangle_{\fq} ) \times \ldots \times  \langle 1,\mu,\ldots,\mu^{t_k-1} \rangle_{\fq}. \]
Since $t_1+\ldots+t_k \leq t+k$, by \cite[Corollary 4.16]{SamPaolo}, there exists a hyperplane $H$ such that $L_{U} \cap H=\PG(k-2,q)$ and hence there exists a codeword in $\C^{\perp_{\mathcal{G}}}_{\overline{S},\mu,t_1,\ldots,t_k}$ as in the statement of Theorem \ref{th:boundsubgeometrycodes}, and hence as well in $(\C^{\perp_{\mathcal{G}}})^{\perp_{\mathcal{G}}}$. Moreover, 
 \cite[Theorem 4.14]{SamPaolo} also implies that
\[|L_U|=q^{n-1}+\ldots+q^{n-k+1}+1,\]
which implies that $M(\C^{\perp_{\mathcal{G}}}_{\overline{S},\mu,t_1,\ldots,t_k})$ is maximum.
\end{proof}

\begin{remark}
In particular, if $t_1\geq \ldots \geq t_k$ and $k\leq \frac{q^{t_k-1}-2q^{\frac{t_k-1}2}}{t_1-1}$ the assumptions of Theorem \ref{th:boundsuperp} are satisfied.
\end{remark}

\section{Conclusions and open problems}

In this paper we provide upper and lower bounds on the number of codewords of an $\F_{q^m}$-linear non-degenerate rank metric with maximum weight.
The upper bounds have been improved under certain assumptions.
Then we gave some characterization results, even if in some cases we do not know if the obtained bounds are sharp.

Here we list some open problems, which may be of interest for the reader.

\begin{itemize} 
    \item We do not know whether or not the lower bound in Theorem \ref{thm:boundn<=m2} is sharp. So, it would be interesting to construct examples satisfying the equality or to improve this bound. 
    \item To extend the characterization of Theorem \ref{thm:classminsize} to larger dimension. 
    \item What about the density of the codes having extreme values for $M(\C)$? It would be interesting to see whether the techniques developed in \cite{antrobus2019maximal,gruica2022common} could be adapted also in this case.
    \item What is the average value of $M(\C)$?
\end{itemize}

\bibliographystyle{abbrv}
\bibliography{biblio}

Olga Polverino, Paolo Santonastaso and Ferdinando Zullo\\
Dipartimento di Matematica e Fisica,\\
Universit\`a degli Studi della Campania ``Luigi Vanvitelli'',\\
I--\,81100 Caserta, Italy\\
{{\em \{olga.polverino,paolo.santonastaso,ferdinando.zullo\}@unicampania.it}}

\end{document}